\theoremstyle{plain}
\newtheorem{theorem}{Theorem}
\newtheorem{lemma}{Lemma}
\newtheorem{corollary}{Corollary} 
\newtheorem{definition}{Definition}
\newcounter{cntLemmaNumber}
\newcounter{cntTheoremNumber}
\newcommand{\MV}{\mathsf{MV}}
\newcommand{\PART}{\ensuremath{\mathsf{PART}}}
\newcommand{\FUNC}{\ensuremath{\mathsf{EXTPATH}}}
\newcommand{\SAPath}{\mathsf{A}}
\newcommand{\dist}{\mathrm{dist}}
\newcommand{\Odd}{\mathrm{odd}}
\newcommand{\Even}{\mathrm{even}}
\newcommand{\OL}[1]{\overline{#1}}
\newcommand{\Parity}{\{\Odd, \Even\}}
\newcommand{\level}{\ensuremath{\mathsf{level}}}
\newcommand{\parent}{\ensuremath{\mathsf{p}}}
\newcommand{\lca}{\ensuremath{\mathsf{lca}}}
\newcommand{\out}{\ensuremath{\mathsf{out}}}
\newcommand{\sumlevel}{\ensuremath{\mathsf{level_{sum}}}}
\newcommand{\maxlevel}{\ensuremath{\mathsf{level_{max}}}}
\newcommand{\Mmax}{\mu(G)}
\newcommand{\AMmax}{\hat{\mu}}
\title{A Nearly Linear-Time Distributed Algorithm \\ for Maximum Cardinality Matching}
\author{Taisuke Izumi\thanks{Osaka University. Email: izumi.taisuke.ist@osaka-u.ac.jp, \{n-kitamura, yutaro.yamaguchi\}@ist.osaka-u.ac.jp}
\and Naoki Kitamura$^*$
\and Yutaro Yamaguchi$^*$
}
\date{}
\begin{document}

\maketitle


\begin{abstract}
In this paper, we propose a randomized $\tilde{O}(\Mmax)$-round algorithm for the maximum cardinality matching problem in the CONGEST model, where $\Mmax$ means the maximum size of a matching of the input graph $G$. 
The proposed algorithm substantially improves the current best worst-case running time.
The key technical ingredient is a new randomized algorithm of finding an augmenting path of length $\ell$ with high probability within $\tilde{O}(\ell)$ rounds, which positively settles an open problem left in the prior work by 
Ahmadi and Kuhn [DISC'20].

The idea of our augmenting path algorithm is based on a recent result by Kitamura and Izumi [IEICE Trans.'22], which efficiently identifies a sparse substructure of the input graph containing an augmenting path, following a new concept called \emph{alternating base trees}.
Their algorithm, however, resorts in part to a centralized approach of collecting the entire information of the substructure into a single vertex for constructing a long augmenting path.
The technical highlight of this paper is to provide a fully-decentralized counterpart of such a centralized method. To develop the algorithm, we prove several new structural properties of alternating base trees, which are of independent interest.
\end{abstract}

\section{Introduction}

\subsection{Background and Our Result}
In this paper, we consider the maximum matching problem\footnote{Throughout this paper, we only consider the unweighted problem, and thus simply refer to a maximum cardinality matching as a \emph{maximum} matching.} for general graphs in the CONGEST model, a standard computational model for distributed graph algorithms.
The network is modeled as an undirected graph 
$G = (V, E)$ of $n$ vertices (nodes) and 
$m$ edges (communication links). Each vertex executes a deployed algorithm following round-based synchrony, and each edge 
can transfer a message of $O(\log n)$ bits per round bidirectionally. In distributed settings, any graph problem admits 
the trivial solution of aggregating all the topological information of $G$ into a single vertex (and solving the problem 
by any centralized algorithm). Due to bandwidth limitation of the CONGEST model, such a trivial solution requires $\Theta(m)$ rounds, which may be $\Theta(n^2)$. Hence, the technical challenge in the CONGEST model is to design an algorithm such that each node outputs the solution without whole input information at hand.

The maximum matching problem for general graphs is often referred to as the ``hardest'' problem of those admitting 
a polynomial-time solution in the context of centralized graph algorithms. A similar situation can be seen in the CONGEST model: Until recently, it was unclear if one could construct an algorithm faster than the trivial $O(n^2)$-round algorithm or not. Looking at the lower-bound side, it has been proved that any CONGEST algorithm of computing an exact solution of  the maximum matching problem requires $\Omega(\sqrt{n} + D)$ rounds in the worst case even if the input is restricted to bipartite 
graphs~\cite{AKO18}, where $D$ is the diameter of the input graph. One may think that this bound can be improved when considering 
general graphs. However, it is also shown that such a lower 
bound cannot be improved even for general graphs as long as the proof is based on the two-party 
communication complexity argument, which is a dominant technique to prove lower bounds in the CONGEST model\cite{BCDELP19}. 
In 2022, Kitamura and Izumi~\cite{KI22} proposed an algorithm running within $\tilde{O}(\Mmax^{1.5})$ rounds, where $\Mmax$ represents 
the size of the maximum matching of the input graph $G$. It is the first algorithm beating the worst-case bound of $O(n^2)$ 
rounds. However, the current gap between the best upper and lower bounds is still large.

The primary contribution of this paper is the development of a new CONGEST algorithm for the maximum matching problem running in nearly linear time. The main theorem is stated below:

\begin{theorem}
\label{thm:main}
There exists a randomized CONGEST algorithm to compute a maximum  matching of a given graph with high 
probability\footnote{Throughout this paper, the phrase ``with high probability'' means it
succeeds with probability at least $1 - 1/n^{c}$ for an arbitrary constant 
$c > 1$.} that terminates within $\tilde{O}(\Mmax)$ rounds.
\end{theorem}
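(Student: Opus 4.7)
The plan is to reduce Theorem~\ref{thm:main} to the augmenting-path subroutine advertised in the abstract. Specifically, I would rely on an oracle that, given the current matching $M$, either returns a shortest $M$-augmenting path of length $\ell$ within $\tilde{O}(\ell)$ rounds with high probability, or certifies that no such path exists. Given this oracle, the classical augmenting-path framework finishes the argument: start from $M = \emptyset$, repeatedly invoke the oracle, and augment $M$ along each returned path until the oracle reports nonexistence, at which point $M$ is a maximum matching by Berge's theorem.

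The key quantitative step is to bound $\sum_i \ell_i$, where $\ell_i$ is the length of the $i$-th returned path. By the standard symmetric-difference argument, if $|M| = k$ and the maximum matching has size $\Mmax$, then $M \oplus M^*$ for any maximum matching $M^*$ contains $\Mmax - k$ vertex-disjoint $M$-augmenting paths using at most $k + \Mmax \le 2\Mmax$ edges in total; hence the shortest has length at most $2\Mmax/(\Mmax - k)$. Summing over $k = 0, 1, \ldots, \Mmax - 1$ yields
\[
\sum_{i=1}^{\Mmax} \ell_i \;\le\; \sum_{k=0}^{\Mmax-1} \frac{2\Mmax}{\Mmax - k} \;=\; 2\Mmax \sum_{j=1}^{\Mmax} \frac{1}{j} \;=\; O(\Mmax \log \Mmax),
\]
so the total round complexity is $\sum_i \tilde{O}(\ell_i) = \tilde{O}(\Mmax)$, proving the theorem modulo the oracle. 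A union bound over the at most $\Mmax \le n$ invocations preserves the high-probability guarantee overall.

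The entire weight of the proof therefore lies in constructing the promised oracle, which I expect to be the main body of the paper. My plan for that follows the blueprint of Kitamura and Izumi~\cite{KI22}: first, build an \emph{alternating base tree} for the current matching rooted at the unmatched vertices, providing a sparse substructure guaranteed to contain a shortest augmenting path whenever one exists; second, replace the centralized extraction step of~\cite{KI22} (which aggregates the tree into a single vertex) by a fully distributed traversal that reconstructs an augmenting path along the tree in $\tilde{O}(\ell)$ rounds. Making this distributed extraction work requires additional structural lemmas — the ``new structural properties'' mentioned in the abstract — that let local decisions at each tree vertex be combined into a consistent global path without large-scale coordination.

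The hard part will be the second step. Classical Edmonds-type augmenting-path search in general graphs is obstructed by blossoms, whose contraction and expansion are inherently nonlocal, so naively simulating them in a distributed fashion would cost $\tilde{O}(\ell \cdot \operatorname{poly}(\Mmax))$. To achieve $\tilde{O}(\ell)$, essentially all communication must be confined to the $O(\ell)$-neighborhood of the eventual path, which is why fresh structural properties of alternating base trees (enabling randomized, purely local handling of blossoms) are indispensable. I expect those lemmas to be the technical heart of the work, with the elementary accounting above serving as the easy wrapper around them.
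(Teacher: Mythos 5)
Your proposal correctly reduces the theorem to a shortest-augmenting-path oracle running in $\tilde{O}(\ell)$ rounds and applies the Hopcroft--Karp bound to sum the path lengths to $O(\Mmax\log\Mmax)$, which is exactly the structure the paper uses (Lemma~\ref{lma:HK-framework}, Theorem~\ref{thm:hk}, and Algorithm~\ref{alg:mm} in Appendix~\ref{appendix:frameworks}). You also correctly identify that the remaining weight lies in a fully decentralized augmenting-path extraction on top of the Kitamura--Izumi alternating-base-tree machinery, which is indeed the technical core of the paper (Lemma~\ref{lma:twopaths}, the procedure $\FUNC$, and Lemmas~\ref{lma:AK-framework} and~\ref{lma:augpathcorrectness}).
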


\subsection{Technical Highlight}
Our algorithm is built on the top of several known techniques. We first present an outline of
those techniques, and then explain the key technical idea newly proposed in this paper.

\paragraph{Edmonds' Blossom Algorithm} 

The starting point is the well-known centralized algorithm by Edmonds, the
so-called \emph{blossom algorithm}~\cite{Edmonds}. This algorithm iteratively finds \emph{augmenting paths} and improves 
the current matching size using the identified augmenting paths. If an augmenting path is found, 
the current matching is augmented by flipping the labels of matching and non-matching edges along the path. 
The seminal theorem by Berge~\cite{Berge57} states that the current matching is the maximum if and only 
if there exists no augmenting path. Hence, 
the central difficulty of the maximum matching problem lies in the efficient detection of augmenting paths.

Whereas a simple approach of expanding alternating paths in the BFS-like manner correctly works 
in bipartite graphs, it fails in general graphs because of the existence of odd cycles, 
which destroys the very useful property holding in bipartite graphs that any subpath 
of a shortest alternating path is also a shortest alternating path. Following \cite{Vazirani12,vazirani2020proof}, 
we refer to this property as the \emph{BFS-honesty}. Edmonds' algorithm basically 
works in the same way as the case of bipartite graphs, i.e., starting from unmatched vertices, 
it makes a BFS-like forest grow up by choosing matching and non-matching edges alternately. If 
the forest growth process encounters a \emph{blossom}, an odd cycle and an alternating path 
from the tree source to
the vertex incident to two non-matching edges in the cycle, the algorithm contracts its 
odd-cycle part into a single vertex; subsequently, the forest growth process continues 
in the resultant graph. An important observation is that the existence of augmenting paths is 
preserved by this contraction operation. Once an augmenting path of the contracted graph is found, an augmenting 
path of the original graph must be recovered from it. Roughly, when the augmenting path 
finally found contains a vertex associated with a contracted cycle, that vertex
is expanded to an appropriate subpath in the cycle. This process is not so simple 
because contracted cycles can be nested (i.e., some odd cycle may contain a vertex associated with  
a contracted cycle), but certainly possible.

Since Edmonds' algorithm is essentially sequential, it is highly challenging 
to obtain its efficient distributed implementation in the CONGEST model. There are three major 
obstacles in the implementation. 1) The iterative improvement 
of matchings: It can be iterated $\Theta(n)$ times. 2) The construction of BFS-like forests:
The depth of a tree in the forest may become $\Theta(n)$. 3) Graph contraction: The simulation of 
the contracted graph on the original message-passing topology can incur expensive time 
cost, and the path recovery process is inherently sequential, particularly when contracted 
cycles are nested. As a result,
even it is nontrivial to compute an augmenting path with a running time linearly dependent 
on its length. In fact, a natural distributed implementation of Edmonds' approach incurs $\Theta(\ell^2)$ rounds 
for finding an augmenting path of length $\ell$. Since the length of a shortest augmenting path can become 
$\Theta(n)$, such an implementation is not sufficient to beat the trivial bound of $O(n^2)$ rounds. 

\paragraph{Framework Based on Hopcroft--Karp Analysis}
In addition to faster augmenting-path detection, efficient distributed implementation 
of Edmonds' algorithm includes yet another challenge that the iterative improvement 
can be repeated $\Theta(n)$ times. To design a nearly linear-time algorithm, we 
must attain the amortized $\tilde{O}(1)$-round complexity of finding one augmenting path,
which looks a big hurdle. Fortunately, it is achieved by employing 
the classic Hopcroft--Karp structural property on the length of shortest augmenting paths:
\begin{theorem}[Hopcroft and Karp~\cite{HK73}]
\label{thm:hk}
Let $G = (V, E)$ be an undirected graph, and $\Mmax$ be the maximum matching size of $G$. Given a matching $M \subseteq E$ of $G$ with $|M| \leq \Mmax - k$, there always exists an augmenting path of length less 
than $\lfloor2\Mmax/k\rfloor$. 
\end{theorem}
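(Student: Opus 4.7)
The plan is to use the classical symmetric difference argument together with a counting/pigeonhole estimate. First I would fix an arbitrary maximum matching $M^{*}$ of $G$, so $|M^{*}| = \Mmax$, and examine the symmetric difference $H = M \triangle M^{*}$. Every vertex has degree at most $2$ in $H$ (one $M$-edge and at most one $M^{*}$-edge, or vice versa), so $H$ decomposes into vertex-disjoint simple paths and simple cycles whose edges strictly alternate between $M$ and $M^{*}$.

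Next I would classify the components of $H$. Every alternating cycle has even length with equally many edges from $M$ and from $M^{*}$, and likewise for every even-length alternating path, so such components contribute $0$ to $|M^{*}|-|M|$. Odd-length alternating paths contribute $\pm 1$; the components that contribute $+1$ are precisely the alternating paths whose two endpoints are not matched by $M$, i.e.\ the augmenting paths for $M$. Hence the number of augmenting paths in $H$ equals $|M^{*}| - |M| \geq k$.

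Let $P_1, \dots, P_t$ (with $t \geq k$) be these augmenting paths and let $\ell_i$ be the length (number of edges) of $P_i$. Each $P_i$ starts and ends with an $M^{*}$-edge, so it contains exactly $(\ell_i + 1)/2$ edges of $M^{*}$. Since $P_1, \dots, P_t$ are vertex-disjoint, their $M^{*}$-edges form disjoint subsets of $M^{*}$, which yields
\[
\sum_{i=1}^{t} \frac{\ell_i + 1}{2} \;\leq\; |M^{*}| \;=\; \Mmax,
\quad\text{hence}\quad
\sum_{i=1}^{t} \ell_i \;\leq\; 2\Mmax - t \;\leq\; 2\Mmax - k.
\]
A pigeonhole step then gives some $i$ with $\ell_i \leq (2\Mmax - k)/k = 2\Mmax/k - 1$, which is strictly less than $\lfloor 2\Mmax/k \rfloor$ regardless of whether $2\Mmax/k$ is an integer, proving the claim.

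There is no real obstacle here; the only point one has to be a little careful about is the final floor/ceiling inequality, since the theorem insists on a strict bound by $\lfloor 2\Mmax/k \rfloor$ rather than $2\Mmax/k$. The argument above handles this by extracting the $-1$ coming from the $+1$ in $(\ell_i+1)/2$ (equivalently, from the fact that an augmenting path has one more $M^{*}$-edge than $M$-edge), so the averaging automatically loses a full unit and the floor inequality follows.
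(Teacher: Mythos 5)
Your proof is correct, and it is essentially the classical argument due to Hopcroft and Karp: take the symmetric difference $M \triangle M^{*}$, observe that it contains at least $|M^{*}| - |M| \geq k$ vertex-disjoint augmenting paths for $M$, bound the total number of $M^{*}$-edges they use by $\Mmax$, and average. The paper itself does not reprove this statement --- it cites it as a known theorem --- so there is no internal proof to compare against; the one you give is the standard one and handles the floor correctly by extracting the $-1$ from the fact that each augmenting path carries one more $M^{*}$-edge than $M$-edge. One small clean-up worth noting: since $M^{*}$ is maximum, no component of $M \triangle M^{*}$ is an augmenting path for $M^{*}$, so in fact no component contributes $-1$; but even without this, the count of $+1$ components is at least $|M^{*}|-|M| \geq k$, which is all the averaging needs.
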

Following this theorem, it suffices to find an augmenting path of length $\ell$ within 
$\tilde{O}(\ell n^{1-\epsilon})$ rounds for improving the trivial $O(n^2)$-round upper bound. By an elementary analysis, 
such an algorithm works as an amortized $\tilde{O}(n^{1 - \epsilon})$-round algorithm. In particular, the algorithm with 
$\epsilon = 1$ (i.e. $\tilde{O}(\ell)$-round algorithm) is the best-possible goal along this approach, which 
attains the amortized optimal $\tilde{O}(1)$-round complexity.

\paragraph{Distributed Maximum Matching Verification by Ahmadi and Kuhn}
Motivated by the aforementioned framework, Ahmadi and Kuhn~\cite{AK20} provided 
an interesting idea toward $\tilde{O}(\ell)$-round augmenting path algorithms. They 
actually did not present any construction algorithm, but a \emph{verification} algorithm 
of determining if a given matching is maximum or not. To be more precise, the proposed algorithm 
provides each vertex with information on the lengths of shortest odd- and even-alternating paths from a closest 
unmatched vertex, where an odd-alternating (resp.\ even-alternating) path is an alternating path with 
an odd (resp.\ even) length. The running time of the algorithm is parameterized by $\ell$, and all the lengths 
till $\ell$ are reported within $O(\ell)$ rounds.

They also shows that the computed length information can be used to identify a subgraph
which contains exactly two unmatched vertices admitting a shortest augmenting path connecting them. It implies that
the quest for $\tilde{O}(\ell)$-round augmenting path construction algorithms can be considered in a much more moderate 
setting, where the two endpoints of the target shortest augmenting path has been known, and every vertex knows the lengths of 
shortest odd- and even-alternating paths from the endpoints. However, it should be emphasized that constructing 
an augmenting path is still challenging even in this setting. Due to a lack of the BFS-honesty, 
subpaths of a shortest-alternating path are not necessarily shortest alternating paths. Then, 
the straightforward path-tracing approach based on distance information does not work.

\paragraph{Semi-Distributed Augmenting Path Construction by Kitamura and Izumi}

Kitamura and Izumi~\cite{KI22} addressed the limitation of the aforementioned research line. They proposed
an algorithm which finds an augmenting path within $\tilde{O}(\ell \sqrt{n})$ rounds on the top of the Ahmadi--Kuhn framework,
if the current matching admits an augmenting path of length $\ell$.
Their approach is to construct a \emph{sparse certificate}, which is a sparse (i.e., containing $O(n)$ edges\footnote{The original paper~\cite{KI22} actually constructs the certificate of $O(\Mmax)$ edges, and attains the algorithm running within 
$\tilde{O}(\ell \sqrt{\Mmax})$ rounds. Here we explain 
the idea with the simplification of $\Mmax = \Theta(n)$.}) subgraph of $G$ preserving the reachability from 
unmatched vertices by alternating paths. Specifically, a sparse certificate contains a $\theta$-alternating path from an unmatched vertex 
$s$ to some vertex $t$ for $\theta \in \{\Odd, \Even\}$ if and only if the original graph admits such a path. A node can collect all the information on the sparse certificate within $O(n)$ rounds, trivially allowing the centralized solution of finding augmenting paths. The $\tilde{O}(\ell \sqrt{n})$ running time 
is attained by using this algorithm if $\ell \geq \sqrt{n}$, or the naive $O(\ell^2)$-round algorithm otherwise.

It is intuitively understandable that such a certificate always exists: 
Looking at Edmonds' algorithm, the augmenting path finally found 
in the contracted graph is expanded to a path in the original graph whose edges lie in the constructed BFS-like 
forest and all the contracted odd cycles. In other words, the union of the BFS-like forest and all the contracted cycles, 
whose number is at most $n$, becomes a sparse certificate. 
The key ingredient of the Kitamura--Izumi algorithm is an efficient identification of such a sparse certificate in 
the setting of the Ahmadi--Kuhn framework. It first constructs a rooted spanning tree 
$T$ called an \emph{alternating base tree}, which corresponds to the BFS-like forest by Edmonds, within $O(1)$ rounds 
using the distance information provided by the Ahmadi--Kuhn framework (it should be noted that $T$ is a single tree because every
vertex already knows the endpoints of the target augmenting path with the aid of the Ahmadi--Kuhn framework, and thus it suffices to 
start the tree growth process from one of the endpoints).
To transform the alternating base tree into a sparse certificate, they introduced the notion of \emph{edge levels} and showed that 
the desired structure is obtained by adding one minimum-level outgoing non-tree edge 
for each subtree of $T$. It is also shown that the computation of edge levels 
and the identification of minimum-level outgoing non-tree edges are efficiently implemented.

\paragraph{Our Technique}
While our algorithm is based on the techniques by Kitamura and Izumi,  
it includes many new ideas and insights regarding the problem. A big leap from Kitamura--Izumi is that we provide a 
\emph{fully-decentralized} path-construction algorithm, which means that no vertex needs to collect whole information of the target augmenting path at hand. Obviously, it needs to resolve the following question: how is the task of finding long augmenting paths decomposed 
into that of finding shorter subpaths? No mechanism of addressing this question has been known so far in the CONGEST model. In addition, the (advanced) use of the 
Kitamura--Izumi algorithm also faces another obstacle: the constructed sparse certificate certainly preserves the existence of the target augmenting path, but does not necessarily preserve its length.
Hence, it is also unlikely that one can benefit from their sparsification technique, except for 
the centralized collection approach.

The primary technical contribution of this 
paper is to demonstrate that both obstacles can be removed just by a small modification of their sparsification algorithm.
More precisely, we use a slightly modified definition of edge levels, and consider the same 
sparse certificate construction. Under our new definition, it can be shown that the constructed certificate 
preserves at least one \emph{shortest} $\theta$-alternating path for any relevant pair of vertices and any 
$\theta \in \{\Odd, \Even\}$ (if it exists), including a shortest augmenting path. 
Furthermore, the preserved alternating paths are equipped with an
important property: let $T$ be the alternating base tree, $s$ be a vertex, and $t$ be a descendant of $s$ in $T$ 
and reachable from $s$ by a $\theta$-alternating path. Then it is guaranteed that 
at least one preserved shortest $\theta$-alternating path $P$ from $s$ to $t$ is split into the prefix $P_1$ and the suffix $P_2$ 
respectively lying at the outside and the inside of the subtree of $T$ rooted by $t$. Intuitively, this property 
implies that the task of finding a shortest $\theta$-alternating path $P$ from $s$ to $t$ is decomposed 
into two independent subtasks of finding $P_1$ and $P_2$ at the outside/inside of the subtree.
Following this observation, we develop a
surprisingly simple recursive algorithm of constructing a shortest augmenting path, which runs in time linearly 
dependent on the output length.

To show the correctness of our algorithm, many rich structural properties of alternating base trees are newly proved in this paper.
We believe that these insights are of independent interest, and provide a very useful toolbox toward further improvement of algorithms, which will run in sublinear time, and other related settings such as approximate and/or weighted variants.

\subsection{Related Works}
There have been many known results for the maximum matching problem in divergent 
settings~\cite{II86,ABI86,FTR06,KMW16,BCGS17,Harris19,CS22,HS23}. Table~\ref{tab:MM} summarizes the known (approximate) maximum matching algorithms 
in distributed settings, with focus on relevant to our result. 
Many literatures have considered approximate maximum matchings in both CONGEST and 
the more relaxed LOCAL model, wherein no restriction of bandwidth is incurred. The main 
interest of those studies is the localization of the computation process. Typically, 
the results along this line exhibits the trade-off between the approximation factor and 
the running time, and some algorithms attain the $(1- \epsilon)$-approximation factor, which can be seen as an exact algorithm by taking 
$\epsilon < 1/n$. However, most of those results in the CONGEST model
have the running time exponentially depending on $1/\epsilon$, and thus their performance as an exact
algorithm is very poor. One exception is the recent breakthrough result by Fischer, Slobodan, and Uitto~\cite{FMJ22}, which 
provides a $\mathrm{poly}(1/\epsilon, \log n)$-time $(1 - \epsilon)$-approximate algorithm. 
Recently, a similar result for weighted matchings was proposed by Huang and Su~\cite{HS23}. 
However, 
the hidden exponents of $1/\epsilon$ for those algorithms are quite larger than two, and thus 
it is not sufficient to break the trivial $O(n^2)$ bound. The paper by Huang and Su~\cite{HS23} 
also provides an extensive survey on distributed approximate maximum matching algorithms.

Focusing on the exact computation, there are three algorithms of attaining nontrivial upper bounds.
Ben-Basat, Kawarabayashi, and Schwartzman~\cite{BKS18} presents a CONGEST algorithm whose running time is parameterized
by $\Mmax$. The running time of the algorithm is $\tilde{O}(\Mmax^2)$. They also proved
the lower bound of $\Omega(\Mmax)$ in the LOCAL model. A generalized $\Omega(1 / \epsilon)$ lower bound for the $(1 - \epsilon)$-approximate maximum matching is also presented. While these bounds apply
to the CONGEST model, the worst-case instance of providing these bounds has the diameter $\Theta(n)$. 
Parameterizing this lower bound with diameter $D$, it only exhibits the bound of $\Omega(D)$.
Since the $\tilde{\Omega}(\sqrt{n} + D)$-round bound we mentioned 
applies to the hard-core instances of $D = O(\log n)$, their bounds are incomparable. 
An improved
$\tilde{O}(n)$-round upper bound for bipartite graphs was proposed by Ahmadi, Kuhn, and Oshman~\cite{AKO18}.
This is also the first paper utilizing the framework based on the analysis of Hopcroft--Karp.
It is worth remarking that no $o(n)$-round algorithm is known even for bipartite matchings. 
The third algorithm is the one by Kitamura and Izumi~\cite{KI22} attaining $\tilde{O}(\Mmax^{1.5})$ rounds. 
Until our result, this was only the known result faster than the trivial $O(n^2)$-round algorithm. 

We also remark centralized exact maximum matching algorithms. Edmonds' blossom algorithm is
the first centralized polynomial-time solution for the maximum matching problem~\cite{Edmonds,Edmonds2}.
Hopcroft and Karp~\cite{HK73} proposed a phase-based algorithm of finding multiple augmenting paths. Their algorithm finds a maximal set of pairwise disjoint shortest augmenting paths in each phase. They showed 
that $O(\sqrt{n})$ phases suffice to obtain a maximum matching and proposed an $O(m)$-time per phase implementation for bipartite graphs. Several studies have reported phase-based algorithms for general graphs that attain $O(\sqrt{n}m)$ time~\cite{Blum1990,MV1980,Vazirani12,vazirani2020proof,GT91}. While our algorithm 
actually finds multiple disjoint augmenting paths in parallel, it cannot benefit from such a parallelism 
because finding the last augmenting path can take $\Omega(n)$ rounds after all. It is an interesting open question 
if the technique of the phase-based algorithms above can be utilized for obtaining much faster CONGEST algorithms or not.

\begin{table*}[tb]
\caption{Lower and upper bounds for (semantically) exact maximum matching algorithms in the CONGEST model}
\center
\label{tab:MM}
\renewcommand{\arraystretch}{1.25}
\begin{tabular}{ l c c c}
 & Time Complexity &Approximation Level  & Remark\\
\hline
Ben-Basat et al.~\cite{BKS18} & $\Omega(\Mmax)$ & exact &LOCAL\\
Ben-Basat et~al.~\cite{BKS18}& $\Omega\left(\frac{1}{\epsilon}\right)$ & $1-\epsilon$ &LOCAL\\
Ben-Basat et~al.~\cite{BKS18} & $\tilde{O}(\Mmax^2)$ &exact & \\
Kitamura and Izumi~\cite{KI22} & $\tilde{O}(\Mmax^{1.5})$ &exact & \\
Ahmadi et~al.~\cite{AKO18}& $\tilde{O}(\Mmax)$& exact & bipartite\\
Fischer et~al.~\cite{FMJ22}& $O(\mathrm{poly}(\epsilon, \log n))$ & $1-\epsilon$ & \\
Huang and Su~\cite{HS23}& $O(\mathrm{poly}(\epsilon, \log n))$ & $1-\epsilon$ & weighted \\

Ahmadi et~al.~\cite{AKO18}& $O\left(\frac{\log^2 \Delta+\log^{*}n}{\epsilon^2}\right)$& $1-\epsilon$ & bipartite\\
Ahmadi et~al.~\cite{AKO18}& $\tilde{\Omega}(\sqrt{n} + D)$ & exact & bipartite\\

\textbf{Our result}& \boldmath{$\tilde{O}(\Mmax)$} & \textbf{exact} &\\
\hline
\end{tabular}
\end{table*}

\section{Preliminaries}
\label{sec:pre}

\subsection{Notations and Terminologies}

We denote the vertex set and edge set of a given graph $G$ by $V(G)$ and
$E(G)$, respectively. The diameter of a graph $G$ is denoted by $D(G)$. 
We denote by $I_G(v)$ the set of edges incident to $v$ in $G$.
Throughout this paper, we only consider simple undirected graphs, and an edge connecting two vertices $u$ and $v$ is denoted by $\{u, v\}$.

A \emph{path} of $G$ is an alternating sequence $P = v_0, e_1, v_1, e_2, \dots, e_\ell, v_\ell$ of vertices and edges such that all the vertices are distinct and $e_i = \{v_{i-1}, v_i\} \in E(G)$ for every $1 \leq i \leq \ell$.
A path is often regarded as a subgraph of $G$.
For a path $P = v_0, e_1, v_1, \dots, v_\ell$ of $G$ and an edge $e = \{v_\ell, u\} \in E(G)$ satisfying $u \not\in V(P)$, we denote by $P \circ e$ or $P \circ u$ the path obtained by adding $e$ and $u$ to the tail of $P$.
Similarly, for another path $P'$ starting at $v_\ell$ with $V(P) \cap V(P') = \{v_\ell\}$, we denote by $P \circ P'$ the path obtained by concatenating $P'$ to the tail of $P$ (without duplication of $v_\ell$).
The inversion of the path 
$P = v_0, e_1, v_1, \dots, v_\ell$ (i.e., the path $v_\ell, e_\ell, v_{\ell - 1}, e_{\ell -1}, \dots, v_0$) is denoted 
by $\overline{P}$. The length $\ell$ of a path $P$ is denoted by $|P|$. Given a path $P$ containing two vertices $u$ and $v$ 
such that $u$ precedes $v$, we denote by $P[u, v]$ the subpath of $P$ starting from $u$ and terminating with $v$.

For a graph $G$, a \emph{matching} $M\subseteq E(G)$ is a set of edges that do not share endpoints.
We denote by $\Mmax$ the maximum size of a matching of a graph $G$. 
In the following definitions, we fix a matching $M$ of $G$. A vertex $v$ is called \emph{matched} 
if $M$ intersects $I_G(v)$, and \emph{unmatched} otherwise. A path $P = v_0, e_1, v_1, e_2, \dots, e_\ell, v_\ell$ is called 
\emph{alternating} if exactly one of $e_{i-1}$ and $e_i$ is in $M$ for every $1 < i \le \ell$.
An \emph{augmenting path} is an alternating path connecting two different unmatched vertices. 
We say that $(G, M)$ has an augmenting path if there exists an augmenting path in $G$ 
with respect to $M$.
For notations $V(G)$, $E(G)$, $D(G)$, and $\Mmax$, we often use the shorthand without argument $G$ unless 
it causes any ambiguity.

Given a subgraph $H$ of $G$ and a matching $M$ of $H$, an alternating path with an 
odd (resp.\ even) length is called an \emph{odd-alternating path} (resp.\ \emph{even-alternating 
path}). For $u, v \in V(H)$, $v$ is called \emph{odd-reachable} (resp.\ \emph{even-reachable}) 
from $u$ in $H$ if there exists an odd-alternating (resp.\ even-alternating) path from $u$ to $v$. A vertex $v$ is simply called \emph{reachable} from $u$ if it is odd-reachable or even-reachable from $u$.
We denote the length of the shortest odd (resp.\ even)-alternating path from $u$ to $v$ in $H$ by $\dist^{\Odd}_H(u, v)$ (resp.\ $\dist^{\Even}_H(u,v)$). If $v$ is not odd (resp.\ even)-reachable from $u$, 
we define $\dist^{\Odd}_H(u,v) = \infty$ (resp.\ $\dist^{\Even}_H(u,v) = \infty$).
For $\theta \in \Parity$, $\OL{\theta}$ represents the parity different from $\theta$. The parity 
$\theta \in \Parity$ satisfying $\dist^{\theta}_H(u, v) < \dist^{\OL{\theta}}_H(u, v)$ (i.e., the parity of the 
shortest alternating path from $u$ to $v$) is denoted by $\gamma_H(u, v)$. Notice that it is well-defined for any reachable 
$v$ because $\dist^{\theta}_H(u, v) \neq \dist^{\OL{\theta}}_H(u, v)$ necessarily holds as they have different parities. 
We also introduce the notion of parity $\rho(e)$ of each edge $e \in E_G$, which is defined as $\rho(e) = \Odd$ if $e$ is a
matching edge, and $\rho(e) = \Even$ otherwise. Intuitively, $\rho(e)$ represents the parity of the alternating path $P$ from $f$ such that $P \circ e$ can become an alternating path.

While a majority of notations and terminologies defined above actually depend on the matching 
$M$ currently given, $M$ is fixed in most of the following arguments, and thus we introduce them 
without describing an argument $M$ for simplicity. In addition, 
the subscripts $H$ of notations $\dist^{\theta}_H(u, v)$ and $\gamma_H(u, v)$ are also abbreviated 
if $H$ = $G$.


\subsection{CONGEST Model}
This paper assumes the CONGEST model, which is one of the standard models in
designing distributed graph algorithms. A distributed system is represented by a simple connected undirected graph 
$G$ of $n$ vertices and $m$ edges. Vertices and edges are uniquely identified by $O(\log n)$-bit integer values, and the 
computation follows round-based synchronous message passing. In one round, each vertex $v$ sends and receives $O(\log n)$-bit messages 
through each of its incident edges, and executes local computation following its internal state, local random bits, 
and all the received messages. A vertex can send different messages to different neighbors 
in one round, and all the messages sent in a round are necessarily delivered to their 
destinations within the same round. 


Throughout this paper, we assume that each vertex initially knows the value of $n$ and a 2-approximate upper bound $\AMmax$ of $\Mmax$. This assumption is not essential because both values are easily computed by an $O(\Mmax)$-round preprocessing presented in \cite{KI22}. 


\section{$O(\ell)$-Round Construction of Augmenting Path}
\label{sec:augpathconstruction}

\subsection{Specifications of Frameworks}
As stated in the introduction, the technical core of our algorithm is to develop an 
$O(\ell)$-round algorithm of constructing an augmenting path of length at most $\ell$ 
on the top of the Ahmadi--Kuhn framework. Such a construction is transformed into 
an $\tilde{O}(\Mmax)$-round maximum matching algorithm in the CONGEST model following the framework by the Hopcroft--Karp analysis. We first present the formal specification
of these two frameworks. On the Hopcroft--Karp framework, we have the following lemma:

\begin{lemma}[Ahmadi, Kuhn, and Oshman\cite{AKO18}]
\label{lma:HK-framework}
Assume a CONGEST algorithm $\SAPath(M, \ell)$ with the following property:
\begin{quote}
For any graph $G = (V, E)$ and matching $M \subseteq E$,
$\SAPath(M, \ell)$ finds a nonempty set of vertex-disjoint augmenting paths with high probability if $(G, M)$ has an augmenting path of length at most $\ell$, and it always terminates within $O(\ell)$ rounds. Each vertex $u \in V(G)$ outputs the predecessor and successor of the output augmenting path to which $u$ belongs (if it exists).
\end{quote}
Then there exists a randomized CONGEST algorithm computing a maximum matching within $\tilde{O}(\Mmax)$ rounds in total. 
\end{lemma}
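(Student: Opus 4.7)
The plan is to run $\SAPath$ with a doubling schedule of length thresholds. Initialize $M := \emptyset$ and, for $i = 0, 1, \ldots, I$ with $I := \lceil \log_2 (2\AMmax) \rceil$, repeatedly invoke $\SAPath(M, 2^i)$; after each invocation each vertex simultaneously augments $M$ along every returned disjoint augmenting path using its reported predecessor/successor information, and the algorithm advances from phase $i$ to phase $i+1$ once a call within phase $i$ returns an empty set. Since $\AMmax \geq \Mmax$, we have $2^I \geq 2\Mmax$; as every augmenting path has length at most $2\Mmax - 1$, once the algorithm completes phase $I$ no augmenting path of length at most $2^I$ remains in $(G, M)$, and hence none remains at all, so by Berge's theorem $M$ is a maximum matching.

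For the round complexity, let $k_i := \Mmax - |M|$ at the beginning of phase $i$, so that $k_0 \leq \Mmax$ and $k_{I+1} = 0$. Each nonempty return of $\SAPath(M, 2^i)$ strictly decreases the gap by at least one, so the number of invocations in phase $i$ is at most $(k_i - k_{i+1}) + 1$, each costing $O(2^i)$ rounds. The $+1$ (empty) calls contribute $\sum_{i=0}^{I} O(2^i) = O(2^I) = O(\Mmax)$ rounds in total. The remaining cost telescopes by Abel summation to
\begin{equation*}
    \sum_{i=0}^{I} O\bigl((k_i - k_{i+1}) \cdot 2^i\bigr) \;=\; O\Bigl(k_0 + \sum_{i=1}^{I} k_i \cdot 2^{i-1}\Bigr).
\end{equation*}
Applying Theorem~\ref{thm:hk} at the end of phase $i-1$ (which concludes because no augmenting path of length at most $2^{i-1}$ remains) yields $k_i < 2\Mmax / 2^{i-1}$ for $i \geq 1$, so each term $k_i \cdot 2^{i-1}$ is $O(\Mmax)$ and the sum becomes $O(I \cdot \Mmax) = \tilde{O}(\Mmax)$.

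For the high-probability guarantee, each of the $\tilde{O}(\Mmax)$ invocations of $\SAPath$ fails (returning empty while an augmenting path of length at most $2^i$ exists) with probability at most $1/n^c$; a union bound with a slightly enlarged $c$ then preserves the overall success probability $1 - 1/n^{c'}$ for any prescribed constant $c'$. I expect the main subtlety to lie in ruling out a cascade effect of such false negatives: a wrongly empty return in phase $i$ would violate the Hopcroft--Karp invariant $k_{i+1} < 2\Mmax / 2^i$ that drives the summation above. This is handled cleanly by conditioning the entire running-time analysis on the good event that no invocation fails, which is exactly what the union bound supplies; on the exceptional event, Berge's theorem still certifies that the algorithm terminates, because the threshold $2^I$ eventually exceeds the length of every possible augmenting path, and any residual augmenting path at phase $I$ would be caught by one further (retried) invocation.
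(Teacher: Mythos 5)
Your proof is correct, but it takes a genuinely different route from the paper's. The paper's Algorithm~\ref{alg:mm} uses a \emph{fixed, non-adaptive} schedule: it runs exactly $\AMmax$ iterations, with the threshold in iteration $i$ set to $\lfloor 2\AMmax/(\AMmax - i)\rfloor$; the harmonic sum $\sum_{i<\AMmax} 2\AMmax/(\AMmax-i) = O(\AMmax\log\AMmax)$ gives the round bound directly, and correctness follows by inductively showing $s(\AMmax-\Mmax+i)\geq i$ using Theorem~\ref{thm:hk}. You instead use an \emph{adaptive doubling} schedule with $O(\log\Mmax)$ phases, exhausting each threshold before doubling, and bound the cost via Abel summation together with the Hopcroft--Karp bound $k_i < 2\Mmax/2^{i-1}$ that holds at each phase boundary. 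Both arguments are valid; the paper's buys a \emph{deterministic} round bound (the iteration count and each per-iteration threshold are fixed in advance), whereas in your version the per-phase iteration count depends on the random outcomes of $\SAPath$, so the $\tilde{O}(\Mmax)$ bound only holds on the good event — on the exceptional event, all $\Mmax$ nonempty calls could in principle land in the final phase at cost $O(2^I)$ each. This is easily repaired by hard-cutting the algorithm after $C\Mmax\log\Mmax$ rounds for a suitable constant $C$ (which preserves whp correctness), but it is worth being explicit since the lemma, read strictly, asserts a round bound rather than a round bound with high probability. Two small nits: (i) the "caught by one further (retried) invocation" remark in your last paragraph describes a retry mechanism your stated algorithm does not have and can be dropped — termination of each phase already follows since $|M|\leq\Mmax$ bounds the nonempty calls; and (ii) the invocation of Berge's theorem for \emph{termination} is a red herring; Berge is used only for \emph{correctness} once phase $I$ returns empty on the good event.
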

The framework by Ahmadi and Kuhn is formalized by the following lemma:
\begin{lemma}[Ahmadi and Kuhn\cite{AK20}, Kitamura and Izumi\cite{KI22}]
\label{lma:AK-framework}
Assume that there exists an algorithm $\SAPath'(M, \ell)$ satisfying the specification of Lemma~\ref{lma:HK-framework} under the following additional assumptions.
\begin{itemize}
\item The graph $G$ contains exactly two unmatched vertices $f$ and $g$, which satisfies 
$\dist^{\Odd}(f, g) \leq \ell$.
\item Every vertex $v \in V(G)$ has a shortest alternating path from $f$ of length at most $\ell$, i.e., $v$ satisfies $\dist^{\gamma(f,v)}(f, v) \leq \ell$. In addition, $v$ knows $\dist^{\Odd}(f, v)$, $\dist^{\Even}(f,v)$, and the ID of $f$. This assumption 
implies the diameter of $G$ is $O(\ell)$. 
\end{itemize}
Then there exists a randomized algorithm $\SAPath(M, \ell)$ satisfying the specification of 
Lemma~\ref{lma:HK-framework} without these assumptions. 
\end{lemma}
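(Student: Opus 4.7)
The plan is to use the Ahmadi--Kuhn verification algorithm as an $O(\ell)$-round preprocessing that promotes an arbitrary instance $(G, M)$ into one meeting the extra preconditions of $\SAPath'$, and then invoke $\SAPath'$ on a subgraph carved out of $G$. The central ingredient is the cited structural consequence of Ahmadi and Kuhn: when $(G, M)$ admits an augmenting path of length at most $\ell$, their verification output pinpoints a subgraph containing exactly two unmatched vertices together with a shortest augmenting path between them.

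Concretely, I would proceed in four steps. \emph{Step 1.} Run the Ahmadi--Kuhn verification procedure with parameter $\ell$, which in $O(\ell)$ rounds lets every vertex $v$ learn $\dist^{\Odd}(f, v)$, $\dist^{\Even}(f, v)$, and $\mathrm{ID}(f)$ for each unmatched vertex $f$ within alternating distance $\ell$. If no pair of distinct unmatched vertices $f, g$ with $\dist^{\Odd}(f, g) \le \ell$ is reported, output $\emptyset$. \emph{Step 2.} Otherwise, deterministically select one such pair, say the one minimizing $(\mathrm{ID}(f), \mathrm{ID}(g))$, by a pipelined min-ID convergecast along the depth-$O(\ell)$ alternating distance forest already constructed in Step~1, followed by a broadcast of the winner to every relevant vertex. \emph{Step 3.} Define the subgraph $G_{f,g}$ per the cited structural claim---intuitively, the vertices lying on some shortest augmenting path from $f$ to $g$, detectable locally from the distance labels of Step~1. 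By the claim, $G_{f,g}$ contains exactly the two unmatched vertices $f$ and $g$, has diameter $O(\ell)$, and each of its vertices already possesses the distance labels $\dist^{\Odd}(f, \cdot)$, $\dist^{\Even}(f, \cdot)$, and $\mathrm{ID}(f)$ required by the preconditions of $\SAPath'$. \emph{Step 4.} Invoke $\SAPath'$ on $G_{f,g}$ with its induced matching, having vertices in $V(G) \setminus V(G_{f,g})$ stay silent and vertices in $G_{f,g}$ ignore incident edges leaving $G_{f,g}$. The vertex-disjoint augmenting paths returned lie inside $G_{f,g} \subseteq G$, so they form a valid nonempty set of augmenting paths for $(G, M)$.

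The main obstacle is making Step~2 fit within the $O(\ell)$-round budget despite the possibly large number of simultaneous candidate pairs. The saving is that the distance structure computed in Step~1 is already a forest of depth $O(\ell)$ rooted at the relevant unmatched vertices, so a pipelined min-ID convergecast-and-broadcast on this forest decides the winner using only $O(\log n)$-bit messages per link per round. A secondary subtlety is verifying that the distance labels Ahmadi--Kuhn computed in $G$ remain valid inside $G_{f,g}$; this is exactly the content of the cited structural claim, which ensures $G_{f,g}$ is chosen so that shortest alternating paths from $f$ are preserved. Since Steps~1 and~4 are $O(\ell)$ by the cited result and the hypothesis on $\SAPath'$, respectively, and Step~3 is purely local, the composition stays within $O(\ell)$ rounds and inherits the high-probability success guarantee of $\SAPath'$.
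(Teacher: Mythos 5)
Your overall plan---preprocess with Ahmadi--Kuhn, carve out a subgraph meeting $\SAPath'$'s preconditions, and then run $\SAPath'$---is the right shape, but Step~2 introduces a genuine error and Step~3 elides the key mechanism that the paper actually relies on.

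The paper invokes the Ahmadi--Kuhn result in the form of two primitives (see Theorem~\ref{theo:verification}): a verification routine $\MV(M,\ell,f)$ that labels vertices with alternating distances from a designated unmatched vertex $f$, and a \emph{partitioning} routine $\PART(M,\ell)$ that outputs a family of pairwise disjoint vertex sets $V^1,\dots,V^N$, each inducing a subgraph $G^i$ with exactly two unmatched vertices $f^i,g^i$, an augmenting path of length at most $\ell$ between them, and diameter $O(\ell)$. The transformation is then simply: run $\PART$ once, then within each $G^i$ run $\MV(M,\ell,f^i)$ followed by $\SAPath'$, all components in parallel. Because the $V^i$ are disjoint, the augmenting paths returned by the different copies of $\SAPath'$ are automatically vertex-disjoint, and no global coordination is ever needed.

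Your Step~2 is where the proposal breaks. You propose a global min-ID convergecast-and-broadcast to elect a single winning pair $(f,g)$. The distance forest from Step~1 consists of trees rooted at the various unmatched vertices within alternating distance $\ell$, and any one tree has depth $O(\ell)$; but different trees can be separated in $G$ by distance up to $\Theta(n)$. There is no assumption that $D(G)=O(\ell)$---that bound holds only \emph{inside} each $G^i$, which is exactly what $\PART$ is for. So the global election cannot be completed in $O(\ell)$ rounds in general, and the step as written violates the round budget. It is also superfluous: disjointness of the $V^i$ already makes the parallel runs non-interfering, and the specification of Lemma~\ref{lma:HK-framework} permits (indeed expects) returning a \emph{set} of vertex-disjoint augmenting paths, not a single one.

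Step~3 has a related gap. You describe $G_{f,g}$ informally as ``vertices lying on some shortest augmenting path from $f$ to $g$, detectable locally from the distance labels of Step~1,'' and defer to ``the cited structural claim'' for the fact that the distance labels computed in $G$ remain valid in $G_{f,g}$. But in the paper's route the distance labels used by $\SAPath'$ are recomputed by running $\MV(M,\ell,f^i)$ \emph{inside} $G^i$ after the partition is fixed, so they are distances in $G^i$ by construction; there is nothing to verify. Your version, computing distances in $G$ first and then restricting to a hand-defined $G_{f,g}$, would require a separate argument that distances do not change under the restriction, and no such statement is actually supplied by the cited result. Dropping the election step and explicitly invoking $\PART$ to define the $G^i$ (then running $\MV$ and $\SAPath'$ per component) repairs both problems and matches the paper's proof.
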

While the proofs are given in the cited literatures, they do not state the lemmas
explicitly in the forms as above. We provide a complementary explanation of the proofs
in Appendix~\ref{appendix:frameworks}. In virtue of these lemmas, we can focus only on 
the construction of $\SAPath'$ of Lemma~\ref{lma:AK-framework}. Note that the algorithm 
$\SAPath'$ we present is deterministic, i.e., random bits are used only in the part of 
Ahmadi--Kuhn. Hence, we do not pay much attention to the success probability of the algorithm.

In the following argument, we assume that the input graph $G$ satisfies the assumption of
Lemma~\ref{lma:AK-framework}, and treat $\SAPath'$ as 
an algorithm of finding an augmenting path from $f$.
The notations $\dist^{\theta}(u, v)$ and 
$\gamma(u, v)$ are abbreviated to $\dist^{\theta}(v)$ and $\gamma(v)$ if $u = f$. 

\subsection{Alternating Base Tree and Edge Level}
\label{sec:abtreeandlevel}

Our algorithm uses the concept of alternating base trees proposed by Kitamura and Izumi~\cite{KI22}.

\begin{definition}[Alternating base tree]

An \emph{alternating base tree} for $G$, $M$, and $f$ is a rooted spanning tree $T$ of $G$
satisfying the following conditions:
\begin{itemize}
\item $f$ is the root of $T$.
\item For any $v \in V(G) \setminus \{f \}$ and
its parent $u$, $\dist^{\gamma(v)}(v) = \dist^{\OL{\gamma(v)}}(u) + 1$ holds.

\end{itemize}
\end{definition}
\begin{figure*}[tb]
\begin{center}
\includegraphics[width=13cm]{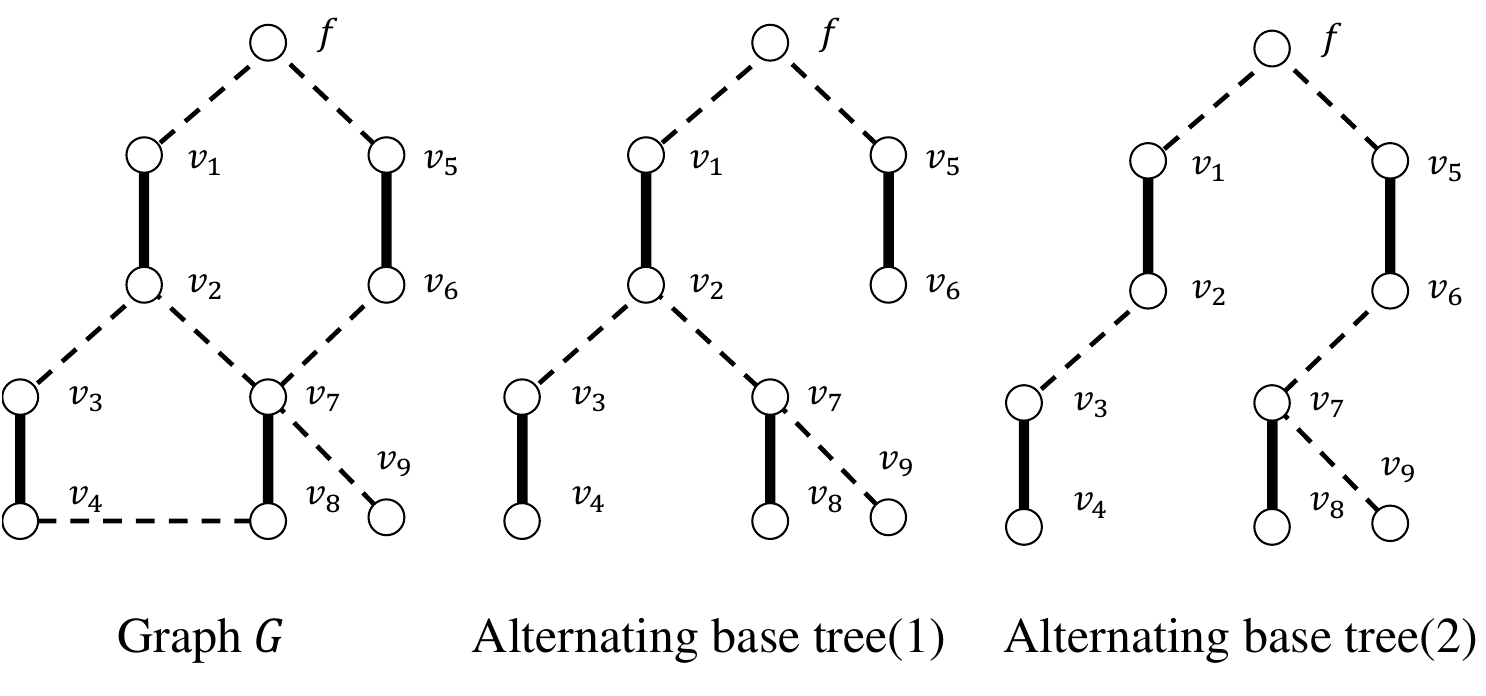}
\end{center}
\caption{Examples of alternating base trees (from \cite{KI22}). Solid lines are matching edges, and broken lines are non-matching edges.}
 \label{fig:abt}
\end{figure*}
It is easy to check that such a spanning tree always exists. As a vertex might have two or more shortest alternating paths, 
$T$ is not uniquely determined (see 
Figure~\ref{fig:abt} (1) and (2) for examples). In the following argument, we fix one alternating base tree $T$ arbitrarily chosen. The alternating base tree does not 
necessarily contain an alternating path from $f$ to each vertex $v$. For example, both alternating base trees in Figure~\ref{fig:abt} have no alternating path from $f$ to $v_{9}$.

We denote by $\parent(v)$ the parent of $v$ in $T$, and denote by $T_v$ the subtree of $T$ rooted by $v$. 
We define the \emph{level} of each edge as follows:
\begin{definition}[Edge level]
For any $e=\{u,v\}\in E(G)\setminus E(T)$, we define the \emph{sum-level} and \emph{max-level} of $e$ as $\sumlevel(e)=\dist^{\rho(e)}(u)+\dist^{\rho(e)}(v)$ and $\maxlevel(e)=\max(\dist^{\rho(e)}(u), \dist^{\rho(e)}(v))$, respectively.
The \emph{level} of an edge $e \in E(G)$, denoted by $\level(e)$, is defined 
as the ordered pair $(\sumlevel(e),\maxlevel(e))$ if $e \in E(G) \setminus E(T)$, and as $(0, 0)$ if $e \in E(T)$.
\end{definition}

We introduce the comparison operator $<$ of edge levels following their lexicographic order, 
i.e., for any two edges $e_1, e_2 \in E(G)$, $\level(e_1)<\level(e_2)$ holds if and only if either $\sumlevel(e_1)<\sumlevel(e_2)$ or $(\sumlevel(e_1)=\sumlevel(e_2)) \wedge (\maxlevel(e_1)<\maxlevel(e_2))$ holds.
Given an alternating path $P$, we refer to the maximum level of the edges in $P$ (with respect to $<$) as the \emph{level} of $P$,
which is denoted by $\level(P)$. The \emph{sum-level} of $P$ and its notation $\sumlevel(P)$ are also defined 
similarly.

The notion of edge levels is illustrated in Figure~~\ref{fig:edgelevel}.
A finite level of an edge $e$ implies that there exists an odd alternating walk starting and ending at $f$ and containing $e$.
Suppose that one of such walks forms a blossom, i.e., an odd cycle containing $e$ and an alternating path connecting $f$ and the odd cycle, called the \emph{stem}, that share exactly one vertex.
Then the sum-level of $e$ corresponds to the ``minimum volume'' of such a blossom, where the volume is measured 
by the number of edges in the blossom with double count of those in the stem. The max-level of $e$ is an upper bound for 
the ``height'' of a minimum-volume blossom $B$, which is defined as the distance from $f$ to the farthest vertex along $B$. 
Especially, the minimum max-level of edges in $B$ is equal to the height. 
If we take a minimum-volume blossom $B$ among all the blossoms containing $e$ in the cycle, for any vertex $w$ 
in the cycle except for the one shared with the stem, the two paths from $f$ to $w$ along $B$ are the shortest $\Odd$- and 
$\Even$-alternating paths to $w$ respectively.

\begin{figure*}[tb]
\begin{center}
\includegraphics[width=6cm]{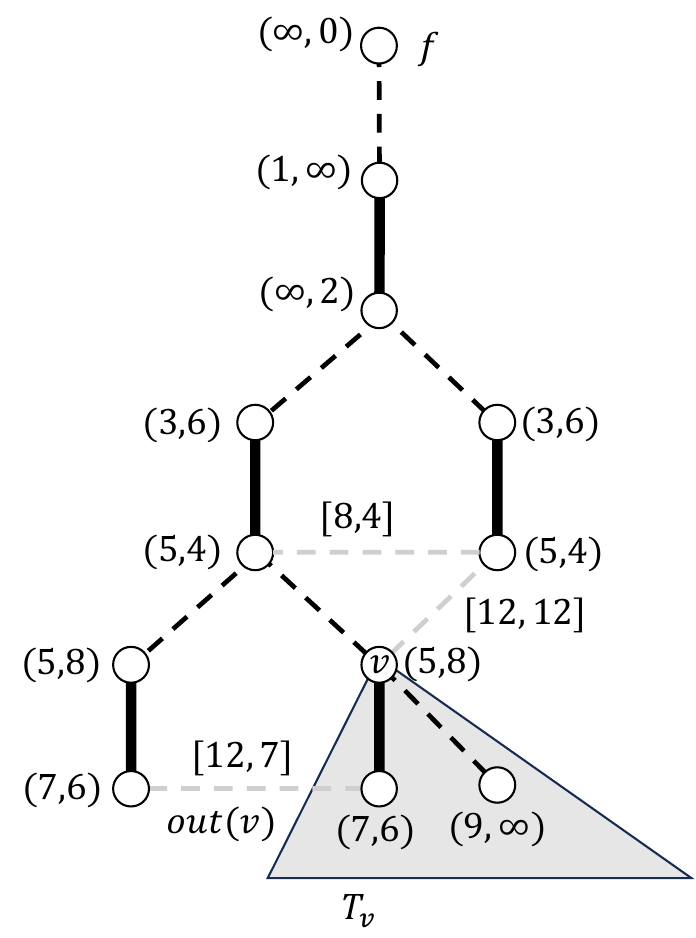}
\end{center}
\caption{An example of edge levels. Two values paired by parentheses respectively mean
the values of $\dist^{\Odd}(\cdot)$ and $\dist^{\Even}(\cdot)$. The values paired by brackets represent edge levels. All tree edges have level $[\infty, \infty]$, which is not explicitly described in the figure.}
 \label{fig:edgelevel}
\end{figure*}

\begin{definition}[Minimum outgoing edge (MOE) of $T_v$]
Let $v \in V(G)$ be a vertex. An edge which is not contained in $T$ (not restricted to $T_v$) and crosses 
$V(T_{v})$ and $V(G) \setminus V(T_v)$ is called an \emph{outgoing edge} of $T_v$. A \emph{minimum outgoing edge (MOE)} 
of $T_v$ is an outgoing edge of $T_v$ whose level is minimum. 
\end{definition}
For simplicity of the proof, for each vertex $v$ such that $T_v$ has no outgoing edge (including the case of $v = f$), 
we introduce a virtual outgoing edge $e_v$ of $T_v$ whose level is $(\infty,\infty)$.
While a minimum outgoing edge is not necessarily unique, we choose one of them for each $v \in V(G)$
as the \emph{canonical MOE} of $T_v$ following the rules below:
\begin{itemize}
\item If $T_v$ has no outgoing edge, the virtual edge $e_v$ is chosen.
\item If there exist MOEs incident to $v$, an arbitrary one of them is chosen.
\item Otherwise, an arbitrary MOE is chosen.
\end{itemize}
We denote the canonical MOE of $T_v$ by $\out(v)$, and $E^{\ast}(T_v)$ as the set of all outgoing 
edges of $T_v$ with a finite edge level. 
In the following argument, we denote the two endpoints of $\out(v)$ by $y_v$ and $z_v$, i.e., $\out(v) = \{y_v, z_v\}$. 
Using this notation, we always assume $z_v \in V(T_v)$. We also use the shorthand notation $\rho(v) = \rho(\out(v))$.


The key technical observation is the following lemma (the proof 
is deferred to Section~\ref{sec:correct}). 

\begin{lemma}
\label{lma:twopaths}
For any $t \in V(G)$ such that $E^{\ast}(T_t) \neq \emptyset$,
the following three statements hold (see Fig.~\ref{fig:lma3}):
\begin{itemize}
    \item[(S1)] There exists a shortest $\rho(t)$-alternating path $Y'$ from $f$ to $y_t$ of level less than $\level(\out(t))$.
    \item[(S2)] There exists a shortest $\rho(t)$-alternating path $Z'$ from $f$ to $z_t$ of level less than $\level(\out(t))$.
    \item[(S3)] The path $X' = Y' \circ \{y_t, z_t\} \circ \OL{Z'}[z_t, t]$ is a shortest $\OL{\gamma(t)}$-alternating path from $f$ to $t$.
    This does not depend on the choices of $Y'$ and $Z'$ in the above two statements.
\end{itemize}
\end{lemma}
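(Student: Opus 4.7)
The plan is to first establish (S1) and (S2) together via a blossom-based construction, and then derive (S3) from the level bound so obtained.

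For (S1) and (S2), since $\out(t) \in E^{\ast}(T_t)$ has finite sum-level, both $y_t$ and $z_t$ are $\rho(t)$-reachable from $f$, which (as discussed in the exposition following the definition of edge levels) witnesses a blossom $B$ through $\out(t)$ of volume exactly $\sumlevel(\out(t)) + 1$. I would fix such a $B = (S, C)$ with stem $S$ and odd cycle $C \ni \out(t)$, chosen primarily to minimize volume and secondarily to minimize height, and let $Y'$ (resp.\ $Z'$) be the $\rho(t)$-alternating walk from $f$ to $y_t$ (resp.\ $z_t$) along $B$. The property of min-volume blossoms already stated in the paper guarantees that $Y'$ and $Z'$ are shortest $\rho(t)$-alternating paths, so the remaining work is to show $\level(e') < \level(\out(t))$ for every edge $e'$ on $Y' \cup Z'$. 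This I plan to argue by a blossom-surgery construction: given such an $e'$, build a blossom through $e'$ by replacing part of $S$ or $C$ with shortest-path subpaths inherited from $B$, and show its volume is strictly less than that of $B$, whence $\sumlevel(e') < \sumlevel(\out(t))$; in the boundary case of equality, the secondary height-minimization of $B$ should force $\maxlevel(e') < \maxlevel(\out(t))$ so that the lexicographic inequality still holds.

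For (S3), I would start from two reachability observations. First, $t \neq f$, for otherwise $T_t = T$ admits no outgoing edges, contradicting $E^{\ast}(T_t) \neq \emptyset$. Second, as $\level(Z') < \level(\out(t))$ and every outgoing edge of $T_t$ has level at least $\level(\out(t))$ by MOE-minimality, $Z'$ contains no outgoing edge of $T_t$; hence the only way for $Z'$ to cross between $V(T_t)$ and $V(G) \setminus V(T_t)$ is through the tree edge $\{t, \parent(t)\}$. Since $Z'$ runs from $f \notin V(T_t)$ to $z_t \in V(T_t)$, this crossing occurs exactly once, and so $Z'$ passes through $t$. Consequently $\OL{Z'}[z_t, t]$ is well-defined and lies entirely in $V(T_t)$. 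The same argument applied to $Y'$ (whose endpoints both lie outside $V(T_t)$) shows that $V(Y') \cap V(T_t) = \emptyset$. Therefore $X' = Y' \circ \{y_t, z_t\} \circ \OL{Z'}[z_t, t]$ is automatically a simple path, and its alternation at $y_t$ and $z_t$ follows from $\rho(\out(t)) = \rho(t)$, which is precisely the edge-parity that allows concatenating two $\rho(t)$-alternating paths via $\out(t)$.

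For optimality, I would compute $|X'| = \dist^{\rho(t)}(y_t) + 1 + |Z'[t, z_t]|$ and match it against $\dist^{\OL{\gamma(t)}}(t)$, using $|Z'[f, t]| + |Z'[t, z_t]| = \dist^{\rho(t)}(z_t)$ together with the alternating-base-tree identity $\dist^{\gamma(t)}(t) = \dist^{\OL{\gamma(t)}}(\parent(t)) + 1$ applied to the tree relationship at $t$; since $X'$ itself provides a $\OL{\gamma(t)}$-alternating path of this length, the matching implies it is shortest. The independence from the particular choices of $Y'$ and $Z'$ then follows because $|X'|$ is fully determined by $\dist^{\rho(t)}(y_t)$, $\dist^{\rho(t)}(z_t)$, and the forced split of $Z'$ at $t$, none of which depends on the particular paths chosen. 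The main obstacle I anticipate is the blossom-surgery argument for (S1) and (S2), particularly the lexicographic boundary case $\sumlevel(e') = \sumlevel(\out(t))$: ensuring the sub-blossom built through $e'$ is itself a valid blossom (simple cycle, stem appropriately disjoint from cycle) and has strictly smaller max-level will require careful choice of shortcuts within $B$ and an effective use of its secondary height-minimization.
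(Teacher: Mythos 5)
Your approach is genuinely different from the paper's. The paper proves (S1) and (S2) by induction on $\level(\out(t))$, using a chain of distance lemmas (in particular, Lemma~\ref{lma:levelbound}, Lemma~\ref{lma:Y2}, Lemma~\ref{lma:Y6}, and Lemma~\ref{lma:MOEparity}); it carefully proves (S2) first and then uses (S2) to derive (S1), reflecting a real asymmetry between $z_t$ (inside $T_t$) and $y_t$ (outside $T_t$). You instead attempt a direct construction via a minimum-volume blossom. Both directions are reasonable, but your version has several concrete gaps.

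First, the blossom's existence is not established. That $y_t$ and $z_t$ are $\rho(t)$-reachable gives you two alternating paths whose union with $\out(t)$ is an odd alternating \emph{walk}, not automatically an odd cycle glued to a disjoint stem: the two paths can interleave through $V(G)\setminus V(T_t)$, and the paper's informal ``blossom'' remarks in Section~\ref{sec:abtreeandlevel} are expository, not proved. You cannot fix a blossom $B$ of volume $\sumlevel(\out(t))+1$ ``primarily minimizing volume, secondarily minimizing height'' without first showing a valid blossom of any volume exists through $\out(t)$; and even granting one exists, the claim that the two $B$-paths are \emph{shortest} $\rho(t)$-alternating paths to $y_t$ and $z_t$ is exactly the substantive content that the paper proves via Lemma~\ref{lma:levelbound} and the induction. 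You also flag your ``blossom surgery'' (for the level bound on every edge of $Y'\cup Z'$, especially the lexicographic tie-break case) as an anticipated obstacle rather than an argument; this is the core of (S1)/(S2) and cannot be left open.

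Second, your optimality argument for (S3) is missing a key inequality. You compute $|X'| = \dist^{\rho(t)}(y_t) + 1 + |Z'[t,z_t]|$, which (using $|Z'[f,t]| = \dist^{\gamma(t)}(t)$) gives $|X'| = \dist^{\rho(t)}(y_t) + \dist^{\rho(t)}(z_t) + 1 - \dist^{\gamma(t)}(t)$. Since $X'$ is a $\OL{\gamma(t)}$-alternating path, this only shows $\dist^{\OL{\gamma(t)}}(t) \le |X'|$; you still need the reverse inequality $\dist^{\OL{\gamma(t)}}(t) \ge \dist^{\rho(t)}(y_t) + \dist^{\rho(t)}(z_t) + 1 - \dist^{\gamma(t)}(t)$, which is exactly Lemma~\ref{lma:Y2}(S2) in the paper and requires its own argument (taking an arbitrary shortest $\OL{\gamma(t)}$-alternating path to $t$ and analyzing where it crosses out of $T_t$). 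The alternating-base-tree identity $\dist^{\gamma(t)}(t)=\dist^{\OL{\gamma(t)}}(\parent(t))+1$ constrains $\dist^{\gamma(t)}(t)$, not $\dist^{\OL{\gamma(t)}}(t)$, so it does not close this gap.

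The reachability observations you give for (S3) (that $Z'$ crosses into $T_t$ only through $\{\parent(t),t\}$, $Y'$ avoids $T_t$, and hence $X'$ is simple) are correct and match the paper's use of Lemma~\ref{lma:outgoingedgeproperty}(S1)/(S5). But they rely on already having $\level(Y'),\level(Z') < \level(\out(t))$, i.e., on (S1) and (S2), which is precisely the part you haven't established.
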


An illustrative explanation of this lemma is provided in Figure~\ref{fig:lma3}.
Any bireachable vertex $t$ is contained in the odd cycle of some blossom $B$ ``strictly'' (i.e., $t$ is not the vertex shared with the stem of $B$). 
This cycle must contain an edge in $E^{\ast}(T_v)$ because the parent of $t$ is 
the predecessor of the shortest $\gamma(t)$-alternating path 
to $t$, and thus the $\OL{\gamma(t)}$-alternating path along $B$ must cross between $V(T_v)$ and $V(G) \setminus V(T_v)$ 
at least once. Lemma~\ref{lma:twopaths} implies that at least one blossom containing $\out(t)$ and $t$ crosses 
exactly once. In addition, since $\out(t)$ has the minimum sum-level of all edges in $E^{\ast}(T_v)$, some blossom 
containing $\out(t)$ is minimum of all ones containing $t$ (in their cycles strictly). 
Hence the $\gamma(t)$- and $\OL{\gamma(t)}$-alternating paths along such a blossom is shortest.

\begin{figure*}[tb]
\begin{center}
\includegraphics[width=6cm]{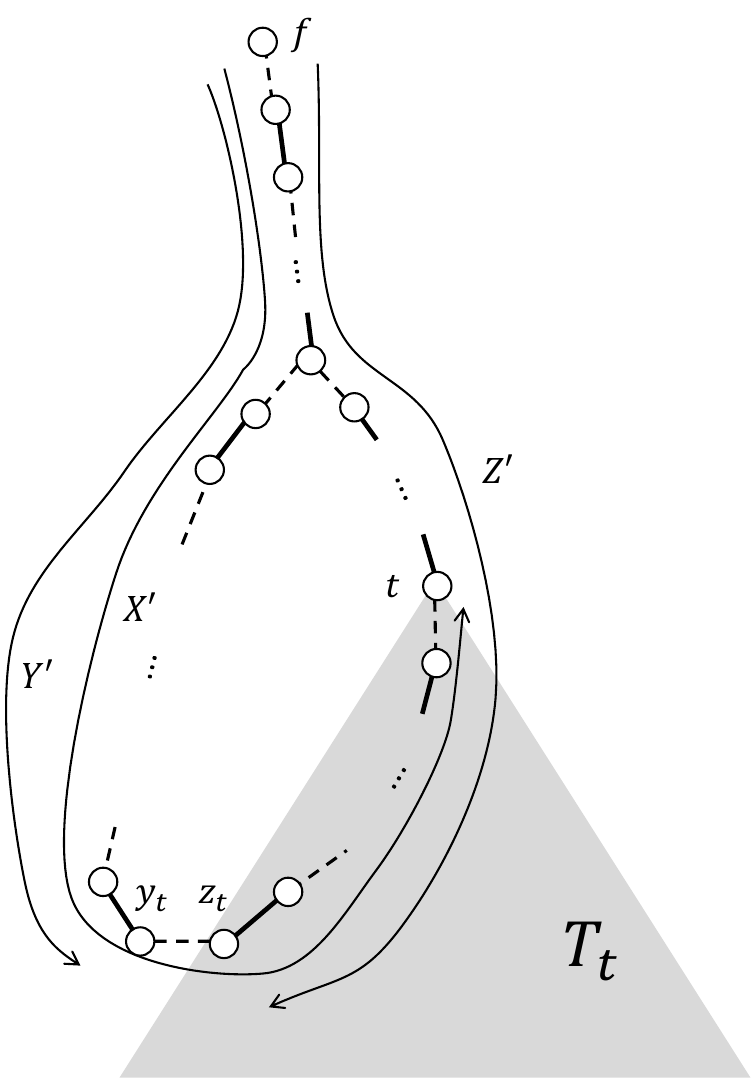}
\end{center}
\caption{A sketch of Lemma~\ref{lma:twopaths}.}
 \label{fig:lma3}
\end{figure*}

\subsection {Algorithm Details}
\label{subsec:algdetails}
We present our algorithm in the centralized manner, and discuss its distributed implementation 
in Section~\ref{sec:distributed}. We first introduce the notion of \emph{extendability}
of alternating paths. 

\begin{definition}
For $s, t \in V(G)$ such that $s$ is an ancestor of $t$, and $\theta \in \{\Odd, \Even\}$, 
a pair $(s, t)$ is called \emph{$\theta$-extendable} if there exists a shortest $\theta$-alternating
path $P$ from $f$ to $t$ satisfying $s \in V(P)$ and $\level(P) < \level(\out(s))$. 
Then the path $P$ satisfying the condition above is called a \emph{$\theta$-extension} of $(s, t)$, and $P[s, t]$ 
is called a \emph{$\theta$-extendable path} from $s$ to $t$. 
\end{definition}

A key feature of extendable paths is that the task of finding minimum-level extendable paths is self-reducible. 
The procedure 
$\FUNC(s, t, \theta)$ recursively computes a minimum-level $\theta$-extendable path from $s$ to $t$ 
following such a reduction scheme. 
The pseudocode of this procedure is presented in Algorithm~\ref{alg:func}. The principle behind the algorithm
is explained as follows:

\begin{itemize}
\item 
\sloppy{
For $\theta = \gamma(t)$: One can show that $(s, \parent(t))$ is 
$\OL{\theta}$-extendable, and hence the algorithm recursively invokes $\FUNC(s, \parent(t), \OL{\theta})$, which returns 
a minimum-level $\OL{\theta}$-extendable path $Y$ from $s$ to $\parent(t)$. It is easy to show that $Y$ does not contain $t$.
Thus, the algorithm finally returns $Y \circ \{\parent(t), t\}$, which is shown to be a minimum-level $\theta$-extendable path from $s$ to $t$. 
}

\item For $\theta = \OL{\gamma(t)}$: Since any $\OL{\gamma(t)}$-alternating path 
from $f$ to $t$ cannot contain the edge $\{\parent(t), t\}$, any $\theta$-extension of $(s, t)$
must contain an outgoing edge of $T_t$ to reach $t$. That is, $\level(\out(t)) < \level(\out(s))$ holds 
(recall that any $\theta$-extension of $(s, t)$ has a level less than $\level(\out(s))$). This means that both $y_t$ and $z_t$ are in $T_s$.
By Lemma~\ref{lma:twopaths}, 
there exist two alternating paths $Y' \circ \{y_t, z_t\}$ and $Z'$ from $f$ to $z_t$ whose levels 
are at most $\level(\out(t)) < \level(\out(s))$.
Since $\{\parent(s), s\}$ is the only edge of level less than $\level(\out(s))$ to reach the vertices in $V(T_s)$, 
both $Y'$ and $Z'$ necessarily contain $\{\parent(s), s\}$. 
Hence, $Y'$ and $Z'$ are respectively $\rho(t)$-extensions of $(s, y_t)$ and $(t, z_t)$, i.e., they are $\rho(t)$-extendable. 
The algorithm recursively invokes $\FUNC(s, y_t, \rho(t))$ and $\FUNC(t, z_t, \rho(t))$ for identifying two minimum-level 
$\rho(t)$-extendable paths, $Y$ from $s$ to $y_t$ and $Z$ from $t$ to $z_t$. Due to the existence of $Y'[s, y_t]$ and $Z'[t, z_t]$, 
whose levels are less than $\level(\out(t))$, 
the levels of $Y$ and $Z$ are also less than $\level(\out(t))$. Then $Y$ lies at the outside of $T_t$, and $Z$ lies at the inside 
of $T_t$. It implies that the concatenated path $Y \circ \{y_t, z_t\} \circ \OL{Z}$, which is the output of the algorithm, is 
an alternating path from $s$ to $t$. Utilizing the statement (S3) of Lemma~\ref{lma:twopaths}, one can prove that the output is 
a minimum-level $\theta$-extendable path.
\end{itemize}

\begin{algorithm}[ht] 
\caption{Procedure $\FUNC(s,t,\theta)$}
\label{alg:func}
{\setlength{\baselineskip}{14pt}
\begin{algorithmic}[1]
\IF {$s = t$}
\STATE return the path of length zero 
\ELSIF{$\theta=\gamma(t)$}
\STATE $Y \leftarrow \FUNC(s,\parent(t),\OL{\theta})$ 
\STATE return $Y \circ \{\parent(t), t\}$ 

\ELSE
\STATE $Y \leftarrow \FUNC(s,y_t,\rho(t))$ \COMMENT{ $\out(t) = \{y_t, z_t\}$, where $z_t$ is in $T_t$}
\STATE $Z \leftarrow \FUNC(t, z_t,\rho(t))$
\STATE return $Y \circ \{y_t,z_t\} \circ \OL{Z}$
\ENDIF
\end{algorithmic}
}
\end{algorithm}

The correctness of the algorithm follows the lemma below. The formal proof is presented in Section~\ref{sec:extpath}.

\begin{lemma}
\label{lma:augpathcorrectness}
For any tuple $(s, t, \theta)$ such that $(s, t)$ is $\theta$-extendable, $\FUNC(s, t, \theta)$ outputs a minimum-level $\theta$-extendable path from $s$ to $t$.
\end{lemma}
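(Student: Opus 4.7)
My plan is to prove Lemma~\ref{lma:augpathcorrectness} by strong induction on the length of the minimum-level $\theta$-extendable path from $s$ to $t$. This measure strictly decreases at every recursive call: the $\theta = \gamma(t)$ branch peels off the final edge, while in the $\theta = \OL{\gamma(t)}$ branch the two subproblem outputs $Y$ and $Z$ satisfy $|Y| + 1 + |Z| = |\text{output}|$. The base case $s = t$ is immediate, as the empty path is the only path from $s$ to itself and inherits extendability directly from the hypothesis.

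For the inductive step when $\theta = \gamma(t)$, I would first establish that $(s, \parent(t))$ is $\OL{\theta}$-extendable. The defining identity $\dist^{\OL{\theta}}(\parent(t)) = \dist^{\theta}(t) - 1$ of the alternating base tree, together with the observation $\rho(\{\parent(t), t\}) = \OL{\theta}$, guarantees that appending $\{\parent(t), t\}$ to any shortest $\OL{\theta}$-alternating path $P''$ ending at $\parent(t)$ yields a shortest $\theta$-alternating path to $t$; the simplicity check $t \notin V(P'')$ follows from $\dist^{\OL{\theta}}(t) > \dist^{\theta}(t)$ since $\theta = \gamma(t)$. A rerouting argument then produces such a $P''$ passing through $s$ with level at most that of the given $\theta$-extension, and hence below $\level(\out(s))$. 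Applying the induction hypothesis to $Y = \FUNC(s, \parent(t), \OL{\theta})$ yields a minimum-level $\OL{\theta}$-extendable path, whose level bound combined with the outgoing-edge structure of $T_t$ forces $Y$ to avoid $t$. Appending $\{\parent(t), t\}$ then produces the claimed minimum-level $\theta$-extendable path, with minimality inherited from $Y$.

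For the inductive step when $\theta = \OL{\gamma(t)}$, I would first observe $\level(\out(t)) < \level(\out(s))$: any $\theta$-extension of $(s, t)$ cannot end with the parity-flipping tree edge $\{\parent(t), t\}$, so it must contain an outgoing non-tree edge of $T_t$ of level at least $\level(\out(t))$, while the extension itself has level below $\level(\out(s))$. Lemma~\ref{lma:twopaths} then supplies paths $Y'$ and $Z'$ of level strictly below $\level(\out(t)) < \level(\out(s))$; both must cross between $V(G) \setminus V(T_s)$ and $V(T_s)$ only through the unique low-level bridge $\{\parent(s), s\}$ and thus pass through $s$, while $Z'$ must further enter $T_t$ via $\{\parent(t), t\}$, placing $t$ on $Z'$. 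Hence $(s, y_t)$ and $(t, z_t)$ are both $\rho(t)$-extendable, and the induction hypothesis returns $Y$ and $Z$ whose level bounds confine them to the outside and inside of $T_t$ respectively; their concatenation $Y \circ \{y_t, z_t\} \circ \OL{Z}$ is therefore a simple path from $s$ to $t$. Applying statement~(S3) of Lemma~\ref{lma:twopaths} to the full $\rho(t)$-extensions that witness $Y$ and $Z$ identifies this concatenation as the $[s, t]$-subpath of a shortest $\theta$-alternating path from $f$ to $t$, of level exactly $\level(\out(t))$, which matches the lower bound above and is therefore minimum.

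The main obstacle I anticipate is the rerouting argument in the $\theta = \gamma(t)$ case, since the predecessor of $t$ on an arbitrary $\theta$-extension need not coincide with $\parent(t)$, so one cannot merely truncate; transferring the path through $s$ to end at $\parent(t)$ without increasing the level likely requires an auxiliary structural lemma about alternating base trees complementary to Lemma~\ref{lma:twopaths}. The remaining subtleties — confining recursive outputs to the correct subtree, verifying alternation and parity at concatenation points, and propagating minimality through the IH — follow routinely from the level-bound invariants maintained throughout the induction.
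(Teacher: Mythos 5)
Your overall strategy coincides with the paper's: strong induction on the length of the minimum-level $\theta$-extendable path, a two-case split by whether $\theta = \gamma(t)$ or $\theta = \OL{\gamma(t)}$, and the use of Lemma~\ref{lma:twopaths} to supply the two building blocks $Y'$, $Z'$ in the second case. The level-comparison arguments you sketch for confining $Y$ and $Z$ to the outside and inside of $T_t$, and for the minimality claim $\level(X_{s,t,\theta}) \geq \level(\out(t))$, match the paper's reasoning.

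However, you have correctly identified a genuine gap and then left it unfilled, which means the proposal as written is incomplete precisely where the difficulty lies. The missing piece is the paper's Lemma~\ref{lma:orthodoxpathparent}: for $\theta = \gamma(t)$, the \emph{minimum-level} $\gamma(t)$-extendable path $X_{s,t,\gamma(t)}$ always contains the tree edge $\{\parent(t), t\}$. You characterize this as a ``rerouting argument'' that ``likely requires an auxiliary structural lemma'', but do not supply it, and the rest of your Case~1 (the rerouting of $P''$, the inheritance of minimality from $Y$) rests on it. The paper's resolution is a short contradiction: if $X_{s,t,\gamma(t)}$ avoided $\{\parent(t), t\}$, it would have to use an outgoing edge of $T_t$, forcing $\level(\out(t)) < \level(\out(s))$; then Lemma~\ref{lma:twopaths}(S2) plus Lemma~\ref{lma:outgoingedgeproperty}(S1) and (S3) yield a $\gamma(t)$-extendable path from $s$ to $t$ of strictly lower level passing through $\parent(t)$, contradicting minimality. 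You should also make Lemma~\ref{lma:extsamelength} explicit (every $\theta$-extendable path from $s$ to $t$ has the same length): you assert the induction measure decreases in Case~2 via $|Y| + 1 + |Z| = |\text{output}|$, but what the induction hypothesis needs is that the \emph{minimum-level extendable paths} $X_{s,y_t,\rho(t)}$ and $X_{t,z_t,\rho(t)}$ are strictly shorter than $X_{s,t,\theta}$, and the length-invariance lemma is what lets you transfer $|Y'[s,y_t]|$ and $|Z'[t,z_t]|$ to those quantities. Without these two auxiliary lemmas, both the well-foundedness of the recursion and the minimality propagation in Case~1 remain unjustified.
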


By the definition, any $\theta$-extendable path from $f$ to $t$ becomes 
a shortest $\theta$-alternating path from $f$ to $t$ (recall $\level(\out(f)) = (\infty, \infty)$).
Hence, it suffices to call $\FUNC(f, g, \Odd)$ for finding a shortest augmenting path from $f$ to $g$. 

\subsection{Distributed Implementation}
\label{sec:distributed}
%
In this section, we explain how $\FUNC$ is implemented in the CONGEST model with $O(\ell)$
rounds, where $\ell$ represents the length of the path returned by the procedure. We define the
distributed implementation of $\FUNC(s, t, \theta)$ as the CONGEST algorithm such that
its invocation is triggered by the vertex $t$, and finally each vertex in the output path 
knows its successor and predecessor. As precomputed information, our distributed implementation 
assumes that each vertex $u$ locally owns the following knowledge:
\begin{itemize}
    \item The parent and children of $u$ in $T$, and the value of $\gamma(u)$.
    \item The IDs of two endpoints of $\out(u)$ (if $\out(u)$ is not virtual).  
    \item Let $R_{v, z_v}$ be the unique path from $v$ to $z_v$ in $T$.
    If $u \in V(R_{v, z_v})$, $u$ has the information for routing messages from 
    $v$ to $z_v$. More precisely, $u$ knows the pair $(z_v, u')$, where $u'$ is the successor of $u$ in 
    $R_{v, z_v}$.
\end{itemize}
The knowledge above is obtained by the almost same technique as that presented in~\cite{KI22},
whose detail is explained in Appendix~\ref{appendix:aggregate}. Utilizing the precomputed information stated above, it is very straightforward to implement $\FUNC$ in the CONGEST model.
If $\theta = \gamma(t)$, $t$ sends the triggering message with the information of $s$ and $\theta$ to $\parent(t)$ for invoking $\FUNC(s, \parent(t), \OL{\theta})$, as well as 
adding $\{\parent(t), t\}$ to the output edge set. Otherwise, $t$ sends the triggering message with $s$ to the endpoints of $\out(t)$ along the path $R_{t, z_t} \circ \out(t)$. Receiving the 
message, $y_t$ and $z_t$ add $\out(t)$ to the output edge set, and respectively invoke $\FUNC(s, y_t, \rho(t))$ and $\FUNC(t, z_t, \rho(t))$. Notice that $\rho(t)$ is local information available at $y_t$ and $z_t$.

The correctness of this distributed implementation obviously follows Lemma~\ref{lma:augpathcorrectness}.
The remaining issue is the running-time analysis. The running time of each invocation is dominated
by the rounds for transferring triggering messages. Particularly, the case of $\theta = \OL{\gamma(t)}$
is the only nontrivial point. One can show that the length of $|R_{t, z_t}|$ is bounded
by the minimum of the lengths of extendable paths from $s$ to $y_t$ and from $t$ to $z_t$. The
following lemma holds.

\begin{lemma}
 \label{lma:token_time}
 For any $\overline{\gamma(t)}$-extendable pair $(s,t)$, let $Z$ be a $\rho(t)$-extendable path from $t$ to $z_t$ and $Y$ be a $\rho(t)$-extendable path from $s$ to $y_t$. Then, $|R_{t, z_t}| \leq |Z|$ and $|R_{t, z_t}| \leq |Y|$ hold.
\end{lemma}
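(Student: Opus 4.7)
The plan is to bound each of $|Z|$ and $|Y|$ from below by the corresponding tree depth, after first localising the extendable path to the relevant subtree using the level constraint and MOE-minimality.

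I would start with $|R_{t,z_t}| \le |Z|$. By hypothesis, $Z$ is the tail from $t$ of a shortest $\rho(t)$-alternating path $P$ from $f$ to $z_t$ with $\level(P) < \level(\out(t))$. The MOE-minimality of $\out(t)$ implies that every non-tree outgoing edge of $T_t$ has level at least $\level(\out(t))$, so $P$ cannot use any such edge; hence $P$ crosses the boundary of $V(T_t)$ only through the tree edge $\{\parent(t), t\}$, exactly once (since $P$ is simple and starts outside $V(T_t)$ at $f$ but ends inside at $z_t$). Thus $Z$ lies entirely in $V(T_t)$ and every edge of $Z$ has level strictly less than $\level(\out(t))$.

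Next I would prove $|Z| \ge |R_{t,z_t}| = \depth_T(z_t) - \depth_T(t)$ by a tree-depth telescoping: the endpoints of $Z$ have depths $\depth_T(t)$ and $\depth_T(z_t)$, so it suffices to show that every edge of $Z$ changes $\depth_T$ by at most $1$ in absolute value. Tree edges change depth by exactly $1$ automatically. For a non-tree edge $e = \{u, v\}$ of $Z$ with $u, v \in V(T_t)$ and $\level(e) < \level(\out(t))$, the blossom interpretation of $\level(e)$ given in Section~\ref{sec:abtreeandlevel} places $e$ on the cycle of a minimum-volume blossom contained in $V(T_t)$; iterating MOE-minimality through the nested subtrees whose boundaries that cycle crosses forces $|\depth_T(u) - \depth_T(v)| \le 1$. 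Summing the absolute depth changes across all $|Z|$ edges of $Z$ then yields the claimed inequality.

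For $|R_{t,z_t}| \le |Y|$, the same plan applies inside $V(T_s)$: using $\level(\out(t)) < \level(\out(s))$ noted in Section~\ref{subsec:algdetails}, the MOE argument confines $Y$ to $V(T_s)$ and every edge of $Y$ has level strictly less than $\level(\out(s))$, so the depth-step bound gives $|Y| \ge \depth_T(y_t) - \depth_T(s)$. Since $s$ is an ancestor of $t$, $\depth_T(s) \le \depth_T(t)$, so it remains to show $\depth_T(z_t) - \depth_T(y_t) \le \depth_T(t) - \depth_T(s)$; this follows from $\depth_T(z_t), \depth_T(y_t) \le \maxlevel(\out(t))$ combined with an elementary tree-path accounting that uses the LCA of $y_t$ and $t$ inside $V(T_s)$.

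The hard part will be the depth-step claim for low-level non-tree edges inside a subtree. The blossom intuition is standard, but turning it into a rigorous bound on $|\depth_T(u) - \depth_T(v)|$ requires a careful nested-MOE induction over the subtrees of $T_t$ together with the structural properties of the alternating base tree; once that is in place, the telescoping for $|Z|$ and the depth comparison for $|Y|$ are routine.
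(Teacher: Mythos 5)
Your approach diverges from the paper's in a way that creates a serious gap. You attempt to telescope \emph{tree depth} along the alternating path $Z$, which requires the claim that every edge of $Z$ (in particular, every non-tree edge with level below $\level(\out(t))$) changes $\depth_T$ by at most $1$. You acknowledge this is ``the hard part,'' but it is more than hard: the level of an edge is defined via alternating distances $\dist^{\rho(e)}(\cdot)$, which need not track $\depth_T$, and nothing in the MOE-minimality of $\out(t)$ nor in the blossom intuition of Section~\ref{sec:abtreeandlevel} constrains the tree-depth span of a low-level non-tree edge lying inside $T_t$. The ``nested-MOE induction'' is not sketched in a way that suggests it will close; you would need a genuinely new structural lemma, and the paper's proof goes out of its way to avoid needing one. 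The second inequality compounds this: the reduction to $\depth_T(z_t) - \depth_T(y_t) \le \depth_T(t) - \depth_T(s)$ rests on the same unproven depth-step claim plus an unjustified ``tree-path accounting.''

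The paper telescopes in the opposite direction: over the \emph{tree path} $R_{t,z_t}$, using Lemma~\ref{lma:outgoingedgeproperty}(S4), which gives the clean per-tree-edge bound $\dist^{\gamma(v)}(v) \ge \dist^{\gamma(\parent(v))}(\parent(v)) + 1$ and hence $|R_{t,z_t}| \le \dist^{\gamma(z_t)}(z_t) - \dist^{\gamma(t)}(t)$ immediately. It then computes $|Z|$ and $|Y|$ explicitly in terms of alternating distances by taking the paths $Z'$, $Y'$ guaranteed by Lemma~\ref{lma:twopaths}, showing $|Z| = \dist^{\rho(t)}(z_t) - \dist^{\gamma(t)}(t)$ via Lemmas~\ref{lma:outgoingedgeproperty}(S1)(S3)(S5) and \ref{lma:extsamelength}, and similarly for $|Y|$, and chains the inequalities. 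I recommend you abandon the edge-by-edge depth argument and instead establish the required per-tree-edge monotonicity of $\dist^{\gamma(\cdot)}(\cdot)$ (Lemma~\ref{lma:outgoingedgeproperty}(S4)), which is a direct consequence of the alternating-base-tree definition and requires no blossom analysis.
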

Thus, the total running time is bounded as follows:
\begin{lemma}
\label{lma:exe_time}
    Let $g(h)$ be the worst-case round complexity of $\FUNC(s,t,\theta)$ over all 
    $(s, t, \theta)$ admitting a $\theta$-extendable path from $s$ to $t$ of length $h$.
    For any $0 \leq h \leq \ell$, $g(h) \leq 2h$ holds.
\end{lemma}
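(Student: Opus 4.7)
The plan is to proceed by induction on $h$, carefully tracking the critical-path round cost through the recursive structure of $\FUNC$ and using Lemma~\ref{lma:token_time} to absorb the token-routing cost into the branches. For the base case $h = 0$, the only $\theta$-extendable path from $s$ to $t$ of length $0$ forces $s = t$, so the first branch of the procedure triggers immediately and returns the empty path in $0$ rounds, which matches $2 \cdot 0 = 0$.

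For the inductive step, I would assume $g(h') \leq 2h'$ for every $h' < h$ and consider an invocation $\FUNC(s, t, \theta)$ whose output path has length $h \geq 1$. In the branch $\theta = \gamma(t)$, the procedure spends one round forwarding a triggering message from $t$ to $\parent(t)$ and then invokes $\FUNC(s, \parent(t), \OL{\theta})$, whose output has length $h - 1$. The induction hypothesis bounds its round cost by $2(h - 1)$, giving a total of at most $2h - 1 \leq 2h$.

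The main case is $\theta = \OL{\gamma(t)}$, where the triggering message is routed from $t$ along $R_{t, z_t}$ and then across $\out(t)$, so $z_t$ receives it at round $|R_{t, z_t}|$ and $y_t$ at round $|R_{t, z_t}| + 1$; the two recursive calls $\FUNC(s, y_t, \rho(t))$ and $\FUNC(t, z_t, \rho(t))$ are then executed in parallel, returning paths of lengths $|Y|$ and $|Z|$ with $|Y| + 1 + |Z| = h$. Hence the critical-path cost is at most $|R_{t, z_t}| + 1 + \max(g(|Y|), g(|Z|))$. By Lemma~\ref{lma:token_time} we have $|R_{t, z_t}| \leq \min(|Y|, |Z|)$, and by the induction hypothesis (which applies since $|Y|, |Z| < h$) the recursive calls terminate within $2|Y|$ and $2|Z|$ rounds. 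A short arithmetic check, using $\min(|Y|, |Z|) + \max(|Y|, |Z|) = |Y| + |Z|$ and $\max(|Y|, |Z|) \leq |Y| + |Z|$, then yields
\[
\min(|Y|, |Z|) + 1 + 2\max(|Y|, |Z|) \;\leq\; 2(|Y| + 1 + |Z|) \;=\; 2h,
\]
closing the induction.

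The main obstacle lies in the second branch: without care, one would pay separately for the routing along $R_{t, z_t}$ and for both recursive calls, which could blow up the bound since $|R_{t, z_t}|$ may be as large as the tree depth. Lemma~\ref{lma:token_time} is exactly the lever that resolves this, charging the routing cost to the shorter of the two recursive output lengths so that it fits inside the combined $|Y| + |Z|$ budget. A minor but worth-noting subtlety is that $z_t$ can begin its recursion one round before $y_t$, but this asymmetry is harmless and is subsumed by the $\max$ in the inequality above.
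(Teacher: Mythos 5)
Your inductive round-accounting mirrors the paper's proof almost exactly: the same base case, the same two cases split on whether $\theta = \gamma(t)$, the same use of Lemma~\ref{lma:token_time} to bound the tree-routing cost, and the same arithmetic closure $\min\{|Y|,|Z|\} + 1 + 2\max\{|Y|,|Z|\} \leq 2h$. However, you are silently assuming that the two parallel recursive calls $\FUNC(s, y_t, \rho(t))$ and $\FUNC(t, z_t, \rho(t))$ can genuinely be ``charged'' with $\max(g(|Y|), g(|Z|))$ rounds, which is only legitimate in the CONGEST model if the parallel branches never contend for the same edge in the same direction simultaneously. Otherwise, congestion could force serialization and inflate the true round count beyond the claimed $2h$ bound, in which case the induction hypothesis would not transfer in the asserted additive way.

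The paper handles this explicitly as a preliminary step: it argues that a simple induction shows messages of $\FUNC(s, t, \theta)$ emanate only from $V(T_s)$, and that two simultaneously-running invocations $\FUNC(s, \cdot, \cdot)$ and $\FUNC(s', \cdot, \cdot)$ have overlapping source sets only if $s$ and $s'$ are in an ancestor--descendant relationship, which forces one call to be nested inside the other and hence not concurrent. This congestion-freeness is a genuine load-bearing claim, not a formality, and your proof has a gap without it. To be safe you should establish that the message traffic of sibling parallel branches is routed over disjoint edge sets (the essential reason being that, in Case~2, $Y$ lies outside $T_t$ while $Z$ and the tree-routing path $R_{t,z_t}$ lie inside $T_t$), and only then conclude that the critical-path analysis is valid.
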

As we stated in Section~\ref{subsec:algdetails}, $\FUNC(f, g, \Odd)$ suffices to find a
shortest augmenting path from $f$ to $g$. For its length $\ell$, the running time is 
at most $2\ell$. Putting all the results
presented in this section together, we obtain the following corollary.

\begin{corollary}
\label{corol:augpathalgorithm}
There exists an algorithm $\SAPath'(M, \ell)$ satisfying the specification of Lemma~\ref{lma:AK-framework}.
\end{corollary}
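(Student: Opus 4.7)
The plan is to instantiate $\SAPath'(M,\ell)$ as the composition of two phases. First, an $O(\ell)$-round preprocessing (detailed in Appendix~\ref{appendix:aggregate}) equips every vertex $u$ with the local knowledge listed in Section~\ref{sec:distributed}: its parent and children in the alternating base tree $T$, the value $\gamma(u)$, the endpoints of its canonical MOE $\out(u)$, and the routing entries $(z_v, u')$ for every $v$ whose path $R_{v, z_v}$ passes through $u$. Second, the vertex $g$ triggers the distributed implementation of $\FUNC(f, g, \Odd)$ described in Section~\ref{sec:distributed}; note that $g$ already knows the identity of $f$ and the relevant distance bounds by the preconditions of Lemma~\ref{lma:AK-framework}.

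To verify correctness I would first observe that $f$ is the root of the spanning tree $T$, hence an ancestor of $g$, and that $\out(f)$ is the virtual edge of level $(\infty,\infty)$. Consequently, any shortest $\Odd$-alternating path from $f$ to $g$, which exists because $\dist^{\Odd}(f,g) \leq \ell$ by hypothesis, vacuously has level less than $\level(\out(f))$ and contains $f$, witnessing that the pair $(f,g)$ is $\Odd$-extendable. Lemma~\ref{lma:augpathcorrectness} then guarantees that $\FUNC(f, g, \Odd)$ returns a minimum-level $\Odd$-extendable path from $f$ to $g$, which by the definition of $\Odd$-extendability is a shortest $\Odd$-alternating path; since $f$ and $g$ are the only unmatched vertices, this path is an augmenting path. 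The distributed implementation writes each output edge at both of its endpoints, so every vertex on the path learns its predecessor and successor, fulfilling the output specification of Lemma~\ref{lma:HK-framework}.

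For the round complexity, let $h$ be the length of the returned path. Since it is a shortest $\Odd$-alternating path from $f$ to $g$, we have $h = \dist^{\Odd}(f, g) \le \ell$, and Lemma~\ref{lma:exe_time} bounds the cost of the $\FUNC$ phase by $2h \le 2\ell$. Combined with the $O(\ell)$-round preprocessing, the total round count is $O(\ell)$, yielding the required bound. The single returned augmenting path forms a (nonempty) family of pairwise vertex-disjoint augmenting paths, so all quantitative clauses of Lemma~\ref{lma:HK-framework}, under the extra assumptions of Lemma~\ref{lma:AK-framework}, are discharged.

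The only nontrivial gluing step I anticipate is confirming that the routing tables for all paths $R_{v, z_v}$ can be distributed in parallel within $O(\ell)$ rounds even though a single vertex may be interior to many such paths; this is exactly the role of the aggregation technique inherited from Kitamura--Izumi and recalled in Appendix~\ref{appendix:aggregate}. Everything else is a direct assembly of Lemma~\ref{lma:augpathcorrectness} (for correctness of $\FUNC$) and Lemma~\ref{lma:exe_time} (for its running time).
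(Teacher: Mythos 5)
Your proposal is correct and matches the paper's intended argument: the paper also assembles the corollary directly from the $O(\ell)$-round precomputation of Appendix~\ref{appendix:aggregate}, the observation that $\level(\out(f)) = (\infty,\infty)$ makes $(f,g)$ trivially $\Odd$-extendable so that $\FUNC(f,g,\Odd)$ returns a shortest augmenting path by Lemma~\ref{lma:augpathcorrectness}, and the $2\ell$-round bound from Lemma~\ref{lma:exe_time}. The paper treats this as an immediate consequence of the preceding lemmas without a separate proof; your write-up merely spells out the same chain of reasoning in full.
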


Combining with Lemmas~\ref{lma:HK-framework} and \ref{lma:AK-framework}, Theorem~\ref{thm:main} is deduced.

\section{Detailed Proofs}
\label{sec:correct}

\subsection{Fundamental Properties of Alternating Base Trees} 

Before proceeding to the proofs of key technical lemmas,
we present several fundamental properties related to alternating base trees and edge levels.
We say that \emph{a path $P$ from $s$ to $t$ lies at the inside of $T_v$} if 
$V(P) \subseteq V(T_v)$ holds. We also say that \emph{$P$ lies at the outside of $T_v$} if $V(P) \subseteq (V(G) \setminus V(T_v)) \cup \{v\}$, 
i.e., $P$ is either completely disjoint from $V(T_v)$ or disjoint from $V(T_v)$ except for one of its endpoints that is $v$.

\begin{lemma} 
\label{lma:outgoingedgeproperty}
The following statements hold:
\begin{itemize}
    \item[(S1)] For any alternating path $P$ from $f$ to $t$ and an ancestor $s$ of $t$ (including the case of $s = t$), 
    $P$ contains $\{\parent(s), s\}$ if $\level(P) < \level(\out(s))$. Hence any alternating path $P$ from $f$ to $t$ not 
    containing $\{\parent(t), t\}$ satisfies $\level(P) \geq \level(\out(t))$.
    \item[(S2)] For any $t \in V(G)$ and a shortest $\OL{\gamma(t)}$-alternating path $P$ from $f$ to 
    $\parent(t)$, $P$ does not contain $t$. 
    \item[(S3)] The parity of any alternating path $P$ from $f$ to $t$ terminating with edge $\{\parent(t), t\}$ is $\gamma(t)$,
    and thus any $\OL{\gamma(t)}$-alternating path from $f$ to $t$ does not contain $\{\parent(t), t\}$.
    \item[(S4)] For $u \in V(G)$, $\dist^{\gamma(\parent(v))}(\parent(v)) 
< \dist^{\gamma(v)}(v)$ holds, and thus for any descendant $v$ of $u$, $\dist^{\gamma(u)}(u) < \dist^{\gamma(v)}(v)$ holds.
    \item[(S5)] Any shortest $\gamma(t)$-alternating $P$ path from $f$ to $t$ lies at the outside of $T_t$.
\end{itemize}
\end{lemma}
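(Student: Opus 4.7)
The plan is to handle the five statements in the order (S1), (S3), (S2), (S4), (S5), since (S3) feeds into (S2) through parity bookkeeping and (S4) into (S5). Throughout I repeatedly exploit one structural fact: if $s \neq f$ is an ancestor of $t$ and $P$ is any path from $f$ to $t$, then since $f \notin V(T_s)$ but $t \in V(T_s)$, $P$ must enter $V(T_s)$ at some edge, and that edge is either the tree edge $\{\parent(s), s\}$ or an outgoing (non-tree) edge of $T_s$.

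For (S1), assume $P$ does not contain $\{\parent(s), s\}$. Then some outgoing edge $e$ of $T_s$ lies on $P$; by the definition of the canonical MOE, $\level(e) \ge \level(\out(s))$, and since the level of $P$ is the maximum level of its non-tree edges, $\level(P) \ge \level(\out(s))$, contradicting the hypothesis. The sub-case $s = t$ is identical. For (S3), because $f$ is unmatched, every alternating path starting at $f$ begins with a non-matching edge and alternates thereafter, so the parity of its length is uniquely determined by whether the last edge is matching. The defining equality $\dist^{\gamma(t)}(t) = \dist^{\OL{\gamma(t)}}(\parent(t))+1$ of the alternating base tree exhibits at least one alternating path that terminates at $\{\parent(t), t\}$ with length $\dist^{\gamma(t)}(t)$, hence of parity $\gamma(t)$; therefore every alternating path terminating at $\{\parent(t), t\}$ has parity $\gamma(t)$.

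Statement (S2) follows by a prefix-length argument. If a shortest $\OL{\gamma(t)}$-alternating path $P$ from $f$ to $\parent(t)$ contained $t$, then
\[
|P[f,t]| \le |P| - 1 = \dist^{\OL{\gamma(t)}}(\parent(t)) - 1 = \dist^{\gamma(t)}(t) - 2.
\]
Yet $P[f,t]$ is an alternating path to $t$, whose length is at least $\min(\dist^{\Odd}(t), \dist^{\Even}(t)) = \dist^{\gamma(t)}(t)$, a contradiction. For (S4), a direct definition chase gives $\dist^{\gamma(v)}(v) = \dist^{\OL{\gamma(v)}}(\parent(v))+1 \ge \dist^{\gamma(\parent(v))}(\parent(v))+1$, using that $\gamma(\parent(v))$ attains the smaller of the two parity-distances at $\parent(v)$; iterating along the tree path from $u$ to any descendant $v$ yields the stated inequality.

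Finally, (S5) combines (S4) with the same prefix-length trick. If some shortest $\gamma(t)$-alternating path $P$ from $f$ to $t$ visited $v \in V(T_t) \setminus \{t\}$, then $v$ would be a strict descendant of $t$, so (S4) would give $\dist^{\gamma(v)}(v) > \dist^{\gamma(t)}(t) = |P|$; but the prefix $P[f, v]$ is an alternating path to $v$ with $|P[f, v]| \le |P|$, forcing $|P[f, v]| \ge \dist^{\gamma(v)}(v) > |P|$, a contradiction. The most delicate place I expect is (S3), where path parity must be cleanly distinguished from edge-type parity; once this is pinned down, the remaining statements reduce to the two elementary templates above, namely entry-edge reasoning for (S1) and prefix/length comparisons for (S2) and (S5), with (S4) a direct definition chase.
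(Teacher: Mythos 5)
Your proof follows the paper's arguments for (S1), (S2), (S4), and (S5) essentially verbatim: entry-edge reasoning for (S1), the prefix-length comparison for (S2), the definition chase for (S4), and combining (S4) with the prefix-length trick for (S5). The one genuine issue is in (S3), and it stems from your proposed ordering. You place (S3) before (S2), asserting that ``(S3) feeds into (S2),'' but the dependence runs the other way.

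Concretely, your (S3) argument claims that the defining equality $\dist^{\gamma(t)}(t) = \dist^{\OL{\gamma(t)}}(\parent(t))+1$ ``exhibits at least one alternating path that terminates at $\{\parent(t), t\}$.'' The defining equality is a constraint on distances only; to actually exhibit such a path, you must take a shortest $\OL{\gamma(t)}$-alternating path $Q$ from $f$ to $\parent(t)$ and append the edge $\{\parent(t), t\}$. That append is a legitimate simple path only if $t \notin V(Q)$, which is exactly the content of (S2). So (S3) relies on (S2), not vice versa, and as written your proof of (S3) has an unjustified step at precisely the point you yourself flag as ``the most delicate.'' The paper proves (S2) first and invokes it inside (S3) for exactly this reason.

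The fix is cosmetic rather than substantive: your own proof of (S2) nowhere uses (S3) (it is the same prefix-length contradiction as in the paper), so you need only swap the order and cite (S2) inside (S3) to conclude that $Q \circ \{\parent(t), t\}$ is a well-formed alternating path whose parity is therefore $\gamma(t)$. With that reordering, your argument coincides with the paper's.
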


\begin{proof}
\textbf{Proof of (S1)}: The path $P$ must contain at least 
one edge in $E^{\ast}(T_s) \cup \{ \{\parent(s), s\}\}$ to reach $s$. Since only $\{\parent(s), s\}$ has a level less than $\level(\out(s))$, $P$ necessarily contains it.

\noindent
\textbf{Proof of (S2)}: Suppose for contradiction that there exists a shortest $\OL{\gamma(t)}$-alternating path $P$ from 
$f$ to $\parent(t)$ that contains $t$. Then $P[f, t]$ is an alternating path from $f$ to $t$ of length at most $|P| - 1 < \dist^{\OL{\gamma(t)}}(f, \parent(t))$.
However, by the definition of alternating base trees, the shortest alternating path from $f$ to $t$ is of length 
$\dist^{\OL{\gamma(t)}}_{G}(f,\parent(t))+1$. It is a contradiction.

\noindent
\textbf{Proof of (S3)}: Since any two alternating paths from $f$ terminating with the same edge must have 
the same parity, it suffices to show that there exists a $\gamma(t)$-alternating path from $f$ terminating
with $\{\parent(t), t\}$. Let $Q$ be any shortest $\OL{\gamma(t)}$-alternating path from $f$ to $\parent(t)$. 
By statement (S2), $Q$ does not contain $t$, and thus we obtain a $\gamma(t)$-alternating path 
$P' = Q \circ \{\parent(t), t\}$ from $f$ to $t$.

\noindent
\textbf{Proof of (S4)}:
By the definition of alternating base trees, $\dist^{\gamma(v)}(v) = 
\dist^{\OL{\gamma(v)}}(\parent(v)) + 1$ holds. 
Since the case of $\OL{\gamma(v)} = \gamma(\parent(v))$ is obvious, we consider the case of 
$\gamma(v) = \gamma(\parent(v))$. By definition, we obtain $\dist^{\gamma(\parent(v))}(\parent(v)) < 
\dist^{\OL{\gamma(\parent(v))}}(\parent(v))$. It follows $\dist^{\gamma(v)}(v) = 
\dist^{\OL{\gamma(v)}}(\parent(v)) + 1 = \dist^{\OL{\gamma(\parent(v))}}(\parent(v)) + 1 > \dist^{\gamma(\parent(v))}(\parent(v))$. 

\noindent
\textbf{Proof of (S5)}:
Every vertex $u \in V(P)$ satisfies $\dist^{\gamma(u)}(u) \leq \dist^{\gamma(t)}(t)$. By statement (S4), $P$ 
does not contain any vertex in $V(T_t) \setminus \{t \}$, i.e., $P$ lies at the outside of $T_s$. 
\end{proof}

\subsection{Proof of Lemma~\ref{lma:twopaths}}

\begin{figure*}[tb]
\begin{center}
\includegraphics[width=12cm]{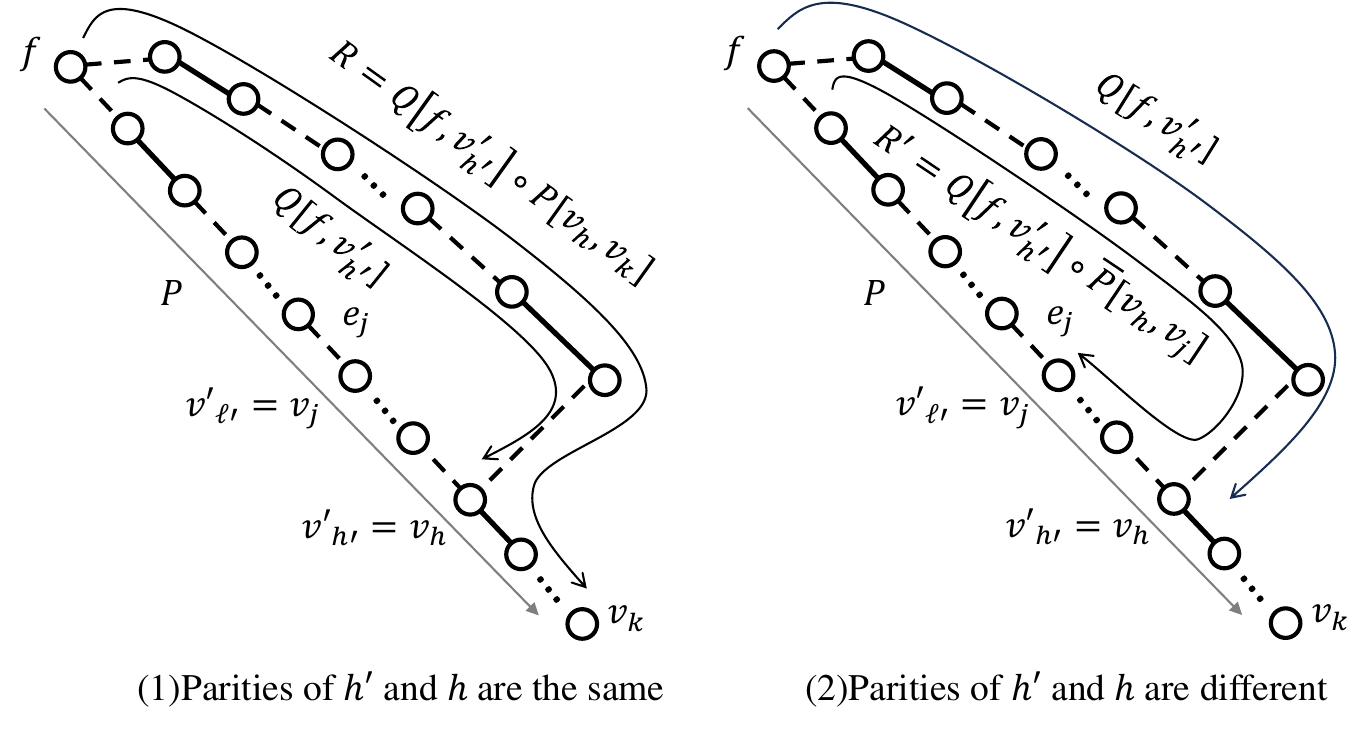}
\end{center}
\caption{Illustration of Case 2 in the proof of Lemma~\ref{lma:levelbound}.}
 \label{fig:lma8}
\end{figure*}

\begin{lemma}
\label{lma:levelbound}
For any $\theta\in\{\Odd, \Even\}$ and $\theta$-reachable $u \in V(G)$, there exists a shortest 
$\theta$-alternating path $P$ from $f$ to $u$ such that $\sumlevel(P) \leq 2\dist^{\theta}(u) - 1$.
\end{lemma}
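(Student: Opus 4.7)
Plan: I would split on whether $\theta = \gamma(u)$ or $\theta = \OL{\gamma(u)}$, and in the harder case bound the sum-level of each edge of a carefully chosen shortest $\theta$-alternating path individually. In the easy case $\theta = \gamma(u)$, the tree path from $f$ to $u$ in $T$ is already a shortest $\gamma(u)$-alternating path and all of its edges are tree edges of level $(0,0)$, so taking $P$ to be this tree path gives $\sumlevel(P) = 0 \leq 2\dist^\theta(u) - 1$ for free.

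The substantive case is $\theta = \OL{\gamma(u)}$. I would fix any shortest $\theta$-alternating path $P = v_0, e_1, v_1, \ldots, e_\ell, v_\ell$ with $\ell = \dist^\theta(u)$ and bound $\sumlevel(e_i)$ edge by edge. Tree edges contribute $0$, so let $e_i = \{v_{i-1}, v_i\}$ be a non-tree edge and set $p := \rho(e_i)$; then the parity of $p$ equals that of $i - 1$, since $P[f, v_{i-1}]$ must have that parity to admit extension by $e_i$. The prefix $P[f, v_{i-1}]$ itself witnesses $\dist^p(v_{i-1}) \leq i - 1$, so the task reduces to bounding $\dist^p(v_i) \leq 2\ell - i$. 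I would further split on $\gamma(v_i)$. When $\gamma(v_i) = p$, the prefix $P[f, v_i]$ yields $\dist^{\OL{\gamma(v_i)}}(v_i) \leq i$, and the strict parity-flip inequality $\dist^{\gamma(v_i)}(v_i) < \dist^{\OL{\gamma(v_i)}}(v_i)$ in the definition of $\gamma$ tightens this to $\dist^p(v_i) \leq i - 1$, so $\sumlevel(e_i) \leq 2i - 2 \leq 2\ell - 2$.

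The remaining subcase $\gamma(v_i) = \OL{p}$ is the one that Figure~\ref{fig:lma8} illustrates. Here I would splice the shortest $\OL{\theta}$-alternating path $Q$ from $f$ to $u$ — which, because $\gamma(u) = \OL{\theta}$, is exactly the tree path to $u$ and, by Lemma~\ref{lma:outgoingedgeproperty}(S5), lies outside $T_u$ except at $u$ — onto the reversed suffix $\overline{P}[u, v_i]$. A routine parity check verifies that this glue is alternating at $u$, that the resulting walk from $f$ to $v_i$ has parity $p$, and that its length is $\dist^{\OL{\theta}}(u) + (\ell - i) \leq (\ell - 1) + (\ell - i) = 2\ell - i - 1$, using that $\dist^{\OL{\theta}}(u) = \dist^{\gamma(u)}(u)$ is strictly less than $\ell$ and of opposite parity. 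Extracting a simple $p$-alternating path from this walk of no larger length then gives $\dist^p(v_i) \leq 2\ell - i - 1$, and hence $\sumlevel(e_i) \leq 2\ell - 2 \leq 2\ell - 1$, as required.

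The main obstacle will be precisely this walk-to-path extraction in the last subcase. In general graphs the spliced walk may revisit vertices, and one must produce a simple alternating path of the same parity $p$ without exceeding the walk length. The containment of $Q$ in $(V(G) \setminus V(T_u)) \cup \{u\}$ granted by Lemma~\ref{lma:outgoingedgeproperty}(S5) sharply limits where $Q$ and $\overline{P}[u, v_i]$ can intersect, and I expect to handle any remaining inner overlap by a blossom-style shortcut that preserves both parity and length — possibly by additionally choosing $P$ to minimize such overlaps among all shortest $\theta$-alternating paths.
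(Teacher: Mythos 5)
Your proposal breaks down at the very first step. You assert that when $\theta = \gamma(u)$, ``the tree path from $f$ to $u$ in $T$ is already a shortest $\gamma(u)$-alternating path,'' and later you reuse this to claim that the shortest $\OL{\theta}$-alternating path $Q$ to $u$ ``is exactly the tree path to $u$.'' This is false: the paper explicitly notes, right after defining alternating base trees, that ``the alternating base tree does not necessarily contain an alternating path from $f$ to each vertex $v$'' (e.g., to $v_9$ in Figure~\ref{fig:abt}). The defining identity $\dist^{\gamma(v)}(v) = \dist^{\OL{\gamma(v)}}(\parent(v)) + 1$ only says that a shortest $\gamma(v)$-alternating path terminates with the tree edge $\{\parent(v), v\}$; the rest of it is a shortest $\OL{\gamma(v)}$-alternating path to $\parent(v)$, which in general contains non-tree edges. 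So in your ``easy'' case, $\sumlevel(P) = 0$ does not hold, and your ``hard'' case cannot use $Q$ as a level-$0$ tree path either. This is precisely why the paper proves the lemma by induction on $k = \dist^{\theta}(u)$: in Case~1 it invokes the inductive hypothesis at $(\parent(u), \OL{\gamma(u)})$, which sits at level $k-1$, to bound $\sumlevel(Q)$ there; the bound $2\dist^{\theta}(u)-1$ is designed so that this recursion closes, not because the tree path is free.

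The second gap is the one you flag yourself but underestimate: extracting a simple alternating path from the concatenation $Q \circ \OL{P}[u, v_i]$ while preserving the parity and not increasing the length. In the matching setting, a walk cannot in general be shortcut to a path of the same parity --- this is exactly the failure of BFS-honesty the paper is built around, and a ``blossom-style shortcut'' is not a routine repair. The paper's proof of Case~2 does something sharper: it locates the \emph{last} offending edge $e_j$ with $\sumlevel(e_j) > 2k-1$, uses it to show $\OL{\rho(e_j)} = \gamma(v_j)$ and $\dist^{\gamma(v_j)}(v_j) < k$, applies the inductive hypothesis to get a low-sum-level shortest $\gamma(v_j)$-alternating path $Q$, and then splices $Q[f, v'_{h'}]$ with $P[v_h, v_k]$ where $h'$ is chosen to be the \emph{smallest} index at which $Q$ meets $P[v_j, v_k]$ --- this choice makes the two pieces vertex-disjoint, so the result is a genuine path, not a walk. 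It then separately rules out the bad parity case of $h$ versus $h'$ by a contradiction argument. Your edge-by-edge bound on a fixed $P$, by contrast, never modifies $P$ and has no mechanism for producing the required vertex-disjointness; the overlap problem you raise is not a technical loose end but the heart of the lemma, and the induction you omit is the tool that resolves it.
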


\begin{proof}
Let $\mathcal{Q}_k$ be the set of pairs $(v, \theta)$ such that $\dist^{\theta}(v) = k$.
We prove that the lemma holds for any $k$ and $(u, \theta) \in \mathcal{Q}_k$ by
the induction on $k$.
(Basis) For any $(u, \theta) \in \mathcal{Q}_1$, the edge $\{f,u\}$ is in the alternating
base tree. Since the sum-level of any tree edge is zero, the lemma obviously holds.
(Inductive Step) Suppose as the induction hypothesis that the lemma holds for any $(u', \theta') \in \bigcup_{k' < k} \mathcal{Q}_{k'}$, and consider the pair $(u, \theta) \in \mathcal{Q}_k$. 
We divide the proof into the following two cases:

\noindent
\textbf{(Case 1)} $\theta=\gamma(u)$:
Let $P$ be any minimum-level shortest $\OL{\gamma(u)}$-alternating path from $f$ to $\parent(u)$. By 
Lemma~\ref{lma:outgoingedgeproperty}(S2), $P$ does not contain $u$, and thus 
we obtain an alternating path $P' = P \circ (\parent(u), u)$, whose parity is $\gamma(u)$
by Lemma~\ref{lma:outgoingedgeproperty}(S3). 
By the definition of alternating base trees, the length of $P'$ is $|P| + 1 = 
\dist^{\OL{\gamma(u)}}(\parent(u)) + 1 = \dist^{\gamma(u)}(u) = k$, i.e., $(\parent(u), \OL{\gamma(u)}) \in \mathcal{Q}_{k-1}$.
By the induction hypothesis, we obtain $\sumlevel(P) \leq 2k - 3$. It follows 
$\sumlevel(P') = \max\{\sumlevel(P), \sumlevel(\{\parent(u), u\})\} \leq 2k - 3$ 
(recall $\sumlevel(\{\parent(u), u\}) = 0$). That is, the sum-level of $P'$ is at most 
$2\dist^{\gamma(u)}(u) - 1$. Since $|P'| = \dist^{\gamma(u)}(u)$ holds, it is the path satisfying the condition of the lemma.

\noindent
\textbf{(Case 2)} $\theta=\OL{\gamma(u)}$: 
Let $P={v_0, e_1, v_1, \dots, e_{k}, v_{k}}$ be any shortest $\theta$-alternating path from $f$ 
to $u$, where $v_0 = f$ and $v_k = u$, and $e_j$ ($1 \leq j \leq k$) be the edge such that 
$\sumlevel(e_j) > 2k - 1$ and $\sumlevel(P[v_j, v_k]) \leq 2k - 1$ hold. If such an edge does not exist, 
$P$ obviously satisfies $\sumlevel(P) \leq 2\dist^{\theta}(u) - 1$, and the proof finishes. 
Since $P[f, v_{j-1}]$ is a $\rho(e_j)$-alternating path from $f$ to $v_{j-1}$,
we obtain $\dist^{\rho(e_j)}(v_{j-1}) \leq |P[f, v_{j-1}]| = j - 1$. Then $\dist^{\rho(e_j)}(v_j) = 
\sumlevel(e_j) - \dist^{\rho(e_j)}(v_{j-1}) > 2k - j \geq k$ holds. Since $P[f, v_{j}]$ is a $\OL{\rho(e_j)}$-alternating 
path from $f$ to $v_j$ of length $j \leq k$, we have $\dist^{\rho(e_j)}(v_j) > k \geq \dist^{\OL{\rho(e_j)}}(v_j)$. That is, 
$\OL{\rho(e_j)} = \gamma(v_j)$ holds. It follows $j < k$ since we assume that the parity $\theta$ 
of $P = P[f, v_k]$ is $\OL{\gamma(v_k)}$.

The induction hypothesis provides a shortest $\gamma(v_j)$-alternating path $Q = v'_0, e'_1, v'_1, \dots, 
e'_{\ell}, v'_\ell$ (where $\ell = \dist^{\gamma(v_j)}(v_j)$, $v'_0 = f$, and $v'_\ell = v_j$) from $f$ to $v_j$ 
of sum-level at most $2\ell - 1$. Since $Q$ is a shortest alternating path from $f$ to $v_j$, its length $\ell$ 
is at most $|P[f, v_j]| = j$.
Let $h'$ be the smallest index such that $v'_{h'} \in V(P[v_j, v_k])$, and $h$ be the index such that 
$v'_{h'} = v_h$ holds ($1 \leq h' \leq \ell$, $j \leq h \leq k$). Note that such $h'$ and $h$ necessarily exist because $v_j$ is contained in both $V(Q)$ and $V(P[v_j, v_k])$.

Consider the case that the parities of $h$ and $h'$ are the same (see Fig.~\ref{fig:lma8}(1)). Since $Q[f, v'_{h'}]$ and $P[v_h, v_k]$ are vertex disjoint except 
for $v'_{h'} = v_h$, we obtain a $\theta$-alternating path $R = Q[f, v'_{h'}] \circ P[v_h, v_k]$ from $f$ to $v_k = u$.
The length of $R$ is $h' + k - h \leq \ell + k - j \leq k = \dist^{\theta}(u)$, i.e., $R$ is a 
shortest $\theta$-alternating path. In addition, $\sumlevel(Q) \leq 2j - 1 \leq 2k - 1$ holds and $P[v_j, v_k]$ 
does not contain any edge of sum-level greater than $2k - 1$. It implies $\sumlevel(R) = \max\{\sumlevel(Q[f, v'_{h'}]), 
\sumlevel(P[v_h, v_k])) \leq 2k - 1$. That is, $R$ satisfies the condition of the lemma.

Finally, we rule out the case that the parities of 
$h$ and $h'$ are different (see Fig.~\ref{fig:lma8}(2)). Suppose for contradiction that such a case occurs. Then $R'=Q[f, v'_{h'}] \circ 
\OL{P}[v_h, v_j]$ becomes a $\rho(e_j)$-alternating path from $f$ to $v_j$ of length $h' + h - j \leq 
j + k - j = k$ (because of $h' \leq \ell \leq j$ and $|\OL{P}[v_h, v_j]| \leq |P[v_j, v_k]|$). 
Since $\dist^{\rho(e_j)}(v_{j-1}) \leq |P[f, v_{j-1}]| \leq k - 1$ holds, we have 
$\sumlevel(e_j) \leq 2k - 1$ holds. It is a contradiction.
\end{proof}

\begin{lemma} 
\label{lma:Y2}
For any $t \in V(G)$ such that $E^{\ast}(T_t)$ is not empty, the following two inequalities hold:
\begin{itemize}
    \item[(S1)] $\level(\out(t)) \leq (\dist^{\Odd}(t) + \dist^{\Even}(t) - 1, \dist^{\OL{\gamma(t)}}(t) - 1)$.    
    \item[(S2)] $\dist^{\OL{\gamma(t)}}(t) \geq \dist^{\rho(t)}(z_t) + 
\dist^{\rho(t)}(y_t) + 1 - \dist^{\gamma(t)}(t)$.
\end{itemize}
\end{lemma}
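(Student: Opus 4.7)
I will first reduce (S2) to (S1). Since $\sumlevel(\out(t)) = \dist^{\rho(t)}(y_t) + \dist^{\rho(t)}(z_t)$ and $\dist^{\Odd}(t) + \dist^{\Even}(t) = \dist^{\gamma(t)}(t) + \dist^{\OL{\gamma(t)}}(t)$, statement (S2) is exactly $\sumlevel(\out(t)) \leq \dist^{\Odd}(t) + \dist^{\Even}(t) - 1$, i.e., the first-coordinate bound of (S1). So it suffices to prove (S1), and I may assume $t$ is bireachable (else $\dist^{\OL{\gamma(t)}}(t) = \infty$ and (S1) is vacuous). Because $\out(t)$ has the minimum level among outgoing edges of $T_t$, it is enough to exhibit a single outgoing edge $e^\ast$ of $T_t$ satisfying the lex bound.

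\textbf{Constructing $e^\ast$ and bounding its sum-level.} Let $P$ be a shortest $\OL{\gamma(t)}$-alternating path and $Q$ a shortest $\gamma(t)$-alternating path, both from $f$ to $t$. By Lemma~\ref{lma:outgoingedgeproperty}(S5), $Q$ lies outside $T_t$; by (S3), $P$ avoids $\{\parent(t), t\}$, so $P$ must cross into $V(T_t)$ via some outgoing edge. I will take $e^\ast = \{u, v\}$ (with $u \notin V(T_t)$, $v \in V(T_t)$) to be the \emph{last} such crossing along $P$, so that $V(P[v, t]) \subseteq V(T_t)$; let $k$ be the position of $v$ in $P$. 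Since $f$ is unmatched, the prefix $P[f, u]$ has length $k - 1$ and parity $\rho(e^\ast)$, giving $\dist^{\rho(e^\ast)}(u) \leq k - 1$. For the other endpoint, I will look at $R := Q \circ \overline{P[v, t]}$: because $V(Q) \cap V(T_t) \subseteq \{t\}$ whereas $V(P[v, t]) \subseteq V(T_t)$, $R$ is a simple path, and a parity check shows that $|R| = |Q| + |P| - k$ has parity $\rho(e^\ast)$ (using that $|P|+|Q|$ is odd) and that alternation holds at $t$ (since $|P|, |Q|$ have opposite parities, the joining edges of $Q$ and $\overline{P[v, t]}$ at $t$ have opposite matching-types). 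Hence $\dist^{\rho(e^\ast)}(v) \leq |Q| + |P| - k$, and adding yields $\sumlevel(e^\ast) \leq (k-1) + (|Q| + |P| - k) = \dist^{\Odd}(t) + \dist^{\Even}(t) - 1$, giving the sum-level half.

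\textbf{Max-level: the hard part.} The bound $\dist^{\rho(e^\ast)}(u) \leq k - 1 \leq |P| - 1$ is direct, but $\dist^{\rho(e^\ast)}(v) \leq |P| - 1$ requires a case split, and this is where I expect the main obstacle. If $v = t$, then $k = |P|$, forcing $\rho(e^\ast) = \gamma(t)$ and $\dist^{\rho(e^\ast)}(v) = |Q| \leq |P| - 1$. Otherwise $v$ is a proper descendant of $t$, and Lemma~\ref{lma:outgoingedgeproperty}(S4) yields $\dist^{\gamma(v)}(v) \geq \dist^{\gamma(t)}(t) + 1 = |Q| + 1$. The potentially bad subcase is $k \leq |Q|$; but then $P[f, v]$ would be an $\OL{\rho(e^\ast)}$-alternating path of length $k \leq |Q| < \dist^{\gamma(v)}(v)$, contradicting that $\dist^{\gamma(v)}(v) = \min\{\dist^{\Odd}(v), \dist^{\Even}(v)\}$. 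Thus necessarily $k \geq |Q| + 1$, giving $|Q| + |P| - k \leq |P| - 1$ and hence $\maxlevel(e^\ast) \leq |P| - 1 = \dist^{\OL{\gamma(t)}}(t) - 1$. Together with the sum-level bound, $\level(e^\ast) \leq (\dist^{\Odd}(t) + \dist^{\Even}(t) - 1, \dist^{\OL{\gamma(t)}}(t) - 1)$, and the minimality of $\out(t)$ yields (S1); (S2) is then an immediate reformulation.
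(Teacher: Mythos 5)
Your proof is correct and follows essentially the same route as the paper's: you take the same witness outgoing edge (the last crossing of a shortest $\OL{\gamma(t)}$-alternating path into $T_t$), bound its sum-level by concatenating with a shortest $\gamma(t)$-alternating path that lies outside $T_t$, bound its max-level via Lemma~\ref{lma:outgoingedgeproperty}(S4), and invoke the minimality of $\out(t)$. Your upfront observation that (S2) is just the sum-level half of (S1) is a tidier packaging than the paper's separate derivation, but the underlying inequality and construction are identical.
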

\begin{proof}
Let $P$ be any shortest $\gamma(t)$-alternating path from $f$ to $t$. 
By Lemma~\ref{lma:outgoingedgeproperty}(S5), $P$ lies at the outside of $T_t$.
Let $Q$ be any shortest $\OL{\gamma(t)}$-alternating path from $f$ to $t$. 
By Lemma~\ref{lma:outgoingedgeproperty}(S3), $Q$ does not contain $\{\parent(t), t\}$ and thus 
contains at least one edge in $E^{\ast}(T_t)$. 
Let $e = \{y',z'\}$, where $z' \in V(T_t)$, be the edge in $E(Q) \cap E^{\ast}(T_t)$ such that $Q[z', t]$ 
does not contain any edge in $E^{\ast}(T_t)$. Since $\OL{Q}[t, z']$ lies at the inside of $T_t$, 
$P \circ \OL{Q}[t, z']$ is a $\rho(e)$-alternating path from $f$ to $z'$ and thus
$\dist^{\rho(e)}(z') \leq |P \circ \OL{Q}[t, z']|$ holds. 
In addition, since $Q[f, y']$ is a  $\rho(e)$-alternating path from $f$ to $y'$, 
$\dist^{\rho(e)}(y') \leq |Q[f, y']|$ also holds.

\noindent
\textbf{Proof of (S1)}: Since $\level(\out(t)) \leq \level(e)$ holds, it suffices 
to show  $\level(e) \leq (\dist^{\Odd}(t) + \dist^{\Even}(t) - 1, 
\dist^{\OL{\gamma(t)}}(t) - 1)$. We have
\begin{align*}
    \sumlevel(e)&=\dist^{\rho(e)}(z')+\dist^{\rho(e)}(y')\\
    &\leq |P \circ \OL{Q}[t, z']| + |Q[f, y']|\\
    &= |P| + |\OL{Q}[t, z']| +  |Q[f, y']| \\ 
    &= |P| + |Q| - 1 \\
    &= \dist^{\gamma(t)}(t) + \dist^{\OL{\gamma(t)}}(t) - 1 \\
    &= \dist^{\Odd}(t) + \dist^{\Even}(t) - 1.
\end{align*}

Next, we upper bound $\maxlevel(e) = \max\{\dist^{\rho(e)}(y'), \dist^{\rho(e)}(z')\}$ 
by $\dist^{\OL{\gamma(t)}}(t) - 1$. We have $\dist^{\rho(e)}(y') \leq |Q[f, y']| \leq |Q| - 1 = 
\dist^{\OL{\gamma(t)}}(t) - 1$.
Since $z'$ is a descendant of $t$ (including $t$ itself), $|Q[f, z']| > \dist^{\gamma(t)}(t)$ holds as follows.
If $z' = t$, then $|Q[f, z']| = |Q| > |P| = \dist^{\gamma(t)}(t)$; otherwise, $|Q[f, z']| \geq \dist^{\gamma(z')}(z') > \dist^{\gamma(t)}(t)$ by Lemma~\ref{lma:outgoingedgeproperty}(S4). 
Then $\dist^{\rho(e)}(z') \leq |P \circ \OL{Q}[t, z']| = \dist^{\gamma(t)}(t) + (\dist^{\OL{\gamma(t)}}(t) - |Q[f, z']|) 
< \dist^{\OL{\gamma(t)}}(t)$ holds. Consequently, we have 
$\maxlevel(e) \leq \dist^{\OL{\gamma(t)}}(t) - 1$.

\noindent
\textbf{Proof of (S2)}: Since $P \circ \OL{Q}[t, z']$ is a $\rho(e)$-alternating path from $f$ to $z'$, 
$\dist^{\rho(e)}(z') \leq |P \circ \OL{Q}[t, z']| =|Q[z',t]| + \dist^{\gamma(t)}(t)$ holds, i.e., 
$|Q[z',t]| \geq \dist^{\rho(e)}(z') - \dist^{\gamma(t)}(t)$ holds. Then we have 
\begin{align*}
    \dist^{\OL{\gamma(t)}}(t) & =    |Q| \\
    &  =  |Q[f, y'] \circ e \circ Q[z', t]|\\
    &  \geq  \dist^{\rho(e)}(y') + 1 + \dist^{\rho(e)}(z') - \dist^{\gamma(t)}(t) \\
    &  \geq  \dist^{\rho(t)}(y_t) + 1 + \dist^{\rho(t)}(z_t) - \dist^{\gamma(t)}(t),
\end{align*}
where the last inequality follows from the definition of MOEs, i.e.,
$\sumlevel(e) \geq \sumlevel(\out(t))$ implies that
$\dist^{\rho(e)}(z')+ \dist^{\rho(e)}(y') \geq \dist^{\rho(t)}(z_t) + \dist^{\rho(t)}(y_t)$. 
\end{proof}


\begin{lemma}
\label{lma:Y6}
For any $t \in V(G)$ such that $E^{\ast}(T_t)$ is not empty, if $\dist^{\rho(t)}(y_t) \leq \dist^{\rho(t)}(z_t)$ holds,
$\dist^{\OL{\rho(t)}}(z_t) \leq \dist^{\rho(t)}(y_t) + 1$ holds.
\end{lemma}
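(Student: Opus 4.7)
The plan is to construct, from a shortest $\rho(t)$-alternating path $P$ from $f$ to $y_t$, an $\OL{\rho(t)}$-alternating path from $f$ to $z_t$ of length at most $\dist^{\rho(t)}(y_t) + 1$. Since $E^{\ast}(T_t) \neq \emptyset$, the edge $\out(t) = \{y_t, z_t\}$ has finite level, so in particular $\dist^{\rho(t)}(y_t)$ is finite and such a $P$ exists. The natural candidate is $P \circ \{y_t, z_t\}$: its length is $|P|+1 = \dist^{\rho(t)}(y_t)+1$, and since $\rho(\{y_t, z_t\}) = \rho(t)$ equals the parity of $P$ while the last edge of $P$ has parity $\OL{\rho(t)}$, the concatenation remains alternating and its parity is $\OL{\rho(t)}$. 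This yields the conclusion provided $z_t \notin V(P)$.

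It remains to treat the case $z_t \in V(P)$, which I would handle by a short contradiction/casework argument. Since $z_t \neq y_t$, the vertex $z_t$ must appear strictly before $y_t$ along $P$, so the prefix $P[f, z_t]$ has length at most $|P| - 1 = \dist^{\rho(t)}(y_t) - 1$. If the parity of $P[f, z_t]$ were $\rho(t)$, then $\dist^{\rho(t)}(z_t) \leq \dist^{\rho(t)}(y_t) - 1$, contradicting the hypothesis $\dist^{\rho(t)}(y_t) \leq \dist^{\rho(t)}(z_t)$. Hence its parity must be $\OL{\rho(t)}$, and we directly obtain
\[
\dist^{\OL{\rho(t)}}(z_t) \;\leq\; |P[f, z_t]| \;\leq\; \dist^{\rho(t)}(y_t) - 1 \;\leq\; \dist^{\rho(t)}(y_t) + 1,
\]
which is the desired inequality.

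Combining the two cases finishes the proof. The only subtle point, which I consider the main obstacle to verify carefully, is the parity bookkeeping in the construction $P \circ \{y_t, z_t\}$: one must check both that alternation is preserved at the vertex $y_t$ (using that a $\rho(t)$-alternating path ending at $y_t$ has last edge of parity $\OL{\rho(t)}$, as dictated by the paper's convention for $\rho(e)$) and that the resulting path's parity flips to $\OL{\rho(t)}$. No appeal to the structural Lemmas~\ref{lma:outgoingedgeproperty}--\ref{lma:Y2} appears necessary; the argument uses only the definitions of $\out(t)$, $\rho(t)$, and shortest alternating distances.
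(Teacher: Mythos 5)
Your proof is correct, and it takes a genuinely different route from the paper's. The paper invokes Lemma~\ref{lma:levelbound} to obtain a \emph{specific} shortest $\rho(t)$-alternating path $P$ from $f$ to $y_t$ with $\sumlevel(P) \leq 2\dist^{\rho(t)}(y_t)-1 < \sumlevel(\out(t))$; since $\out(t)$ is a minimum-sum-level outgoing edge of $T_t$, no outgoing edge of $T_t$ appears in $P$, which forces $P$ to lie entirely outside $T_t$ and in particular to avoid $z_t$, after which $P \circ \{y_t,z_t\}$ gives the bound. You instead take an \emph{arbitrary} shortest $\rho(t)$-alternating path $P$ to $y_t$ and dispose of the possible collision $z_t \in V(P)$ by a direct parity/distance argument: if $z_t$ appears in $P$ it does so strictly before $y_t$, so $|P[f,z_t]| \leq \dist^{\rho(t)}(y_t)-1$; the prefix's parity cannot be $\rho(t)$ (that would contradict $\dist^{\rho(t)}(y_t) \leq \dist^{\rho(t)}(z_t)$), so it is $\OL{\rho(t)}$ and the desired bound falls out immediately. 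This is more elementary — it avoids the inductive Lemma~\ref{lma:levelbound} machinery — at the small cost of a case split; the paper's route buys the stronger structural fact that $P$ stays entirely outside $T_t$, which is not needed here but matches how $\level$-based reasoning is used throughout Section~\ref{sec:correct}. The parity bookkeeping you flag as the delicate point is indeed fine: the last edge $e'$ of a $\rho(t)$-alternating path from $f$ satisfies $\rho(e') = \OL{\rho(t)}$, which differs from $\rho(\out(t)) = \rho(t)$, so $e'$ and $\out(t)$ have opposite matching status and the concatenation alternates.
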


\begin{proof}
Utilizing Lemma~\ref{lma:levelbound} with $\theta = \rho(t)$ and $u = y_t$, there exists a shortest 
$\rho(t)$-alternating path $P$ from $f$ to $y_t$ such that $\sumlevel(P) \leq 2\dist^{\rho(t)}(y_t) - 1$ 
(note that $y_t$ is $\rho(t)$-reachable because any edge $e \in E^{\ast}(T_t)$ has a finite edge level and thus
$\dist^{\rho(t)}(y_t)$ is finite). Since we assume $\dist^{\rho(t)}(y_t) \leq \dist^{\rho(t)}(z_t)$ 
we also have $\sumlevel(\out(t)) = \dist^{\rho(t)}(y_t) + \dist^{\rho(t)}(z_t) 
\geq 2\dist^{\rho(t)}(y_t)$, i.e., no outgoing edge of $T_t$ is contained in $P$.
It implies that $P$ lies at the outside of $T_t$ and thus does not contain $z_t$. 
Hence $P' = P \circ \{y_t, z_t\}$ is a $\OL{\rho(t)}$-alternating path from $f$ to $z_t$, whose length 
is $\dist^{\rho(t)}(y_t) + 1$. That is, $\dist^{\OL{\rho(t)}}(z_t) \leq \dist^{\rho(t)}(y_t) + 1$ holds.
\end{proof}



\begin{lemma}
\label{lma:MOEparity}
For any $t, t' \in V(G)$ such that $E^{\ast}(T_t)$ and $E^{\ast}(T_{t'})$ are not empty,
if $\level(\out(t)) = \level(\out(t'))$ holds, then $\rho(t) = \rho(t')$ holds.
\end{lemma}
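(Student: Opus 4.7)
The plan is to compare the parities of the second coordinate of the level pair on the two sides of the hypothesis $\level(\out(t)) = \level(\out(t'))$. I claim that $\maxlevel(\out(t)) \equiv \rho(t) \pmod 2$ (treating $\Odd$ as $1$ and $\Even$ as $0$), and likewise $\maxlevel(\out(t')) \equiv \rho(t') \pmod 2$. Once this is established, the equality of the two max-levels forces the two parities to agree, hence $\rho(t) = \rho(t')$. The $\sumlevel$ component is useless for this purpose because the sum of two integers of the same parity is always even regardless of $\rho$; it is only the max component that captures the information we need.

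To justify the parity claim, I would use the definitional fact that any $\theta$-alternating path has length of parity $\theta$, so whenever $\dist^{\theta}(v)$ is finite, it is congruent to $\theta$ modulo $2$. The hypothesis $E^{\ast}(T_t) \neq \emptyset$, combined with the definition of the canonical MOE as a minimum over the outgoing edges of $T_t$, ensures that $\out(t)$ has finite level (in particular it is not the virtual edge $e_t$); consequently both $\dist^{\rho(t)}(y_t)$ and $\dist^{\rho(t)}(z_t)$ are finite and share the parity $\rho(t)$. Their maximum therefore also has parity $\rho(t)$, giving the claim for $t$, and the identical argument handles $t'$. The main thing to double-check is this ``finiteness from $E^{\ast}(T_t) \neq \emptyset$'' step for the edge selected by the canonical-MOE rules; other than that, there is no real obstacle, and the proof reduces to this short parity bookkeeping followed by the immediate comparison of $\maxlevel(\out(t))$ and $\maxlevel(\out(t'))$.
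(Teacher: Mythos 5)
Your proof is correct and takes essentially the same approach as the paper's: the paper also observes that $\maxlevel(\out(t)) = \max\{\dist^{\rho(t)}(y_t), \dist^{\rho(t)}(z_t)\}$ has parity $\rho(t)$ (and likewise for $t'$), so equality of max-levels forces $\rho(t) = \rho(t')$. Your additional bookkeeping — the explicit parity claim, the remark that $\sumlevel$ carries no parity information, and the check that $E^{\ast}(T_t) \neq \emptyset$ guarantees finiteness — fills in exactly the steps the paper leaves implicit.
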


\begin{proof}
In particular, we have $\maxlevel(\out(t)) = \maxlevel(\out(t'))$, i.e., $\max\{\dist^{\rho(t)}(y_{t}), \dist^{\rho(t)}(z_{t})\} = \max\{\dist^{\rho(t')}(y_{t'}), \dist^{\rho(t')}(z_{t'})\}$.
The both sides must have the same parity, which means $\rho(t) = \rho(t')$.
\end{proof}



\let\temp\thelemma
\renewcommand{\thelemma}{\ref*{lma:twopaths}}
\begin{lemma}[Repeat]
For any $t \in V(G)$ such that $E^{\ast}(T_t) \neq \emptyset$,
the following three statements hold (see Fig.~\ref{fig:lma3}):
\begin{itemize}
    \item[(S1)] There exists a shortest $\rho(t)$-alternating path $Y'$ from $f$ to $y_t$ of level less than $\level(\out(t))$.
    \item[(S2)] There exists a shortest $\rho(t)$-alternating path $Z'$ from $f$ to $z_t$ of level less than $\level(\out(t))$.
    \item[(S3)] The path $X' = Y' \circ \{y_t, z_t\} \circ \OL{Z'}[z_t, t]$ is a shortest $\OL{\gamma(t)}$-alternating path from $f$ to $t$.
    This does not depend on the choices of $Y'$ and $Z'$ in the above two statements.
\end{itemize}
\end{lemma}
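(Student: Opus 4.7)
The plan is to prove (S1) and (S2) in parallel and then derive (S3) by concatenating the resulting paths around the edge $\out(t) = \{y_t, z_t\}$. A unifying structural observation that I would establish first is: by Lemma~\ref{lma:outgoingedgeproperty}(S1), any alternating path from $f$ of level strictly less than $\level(\out(t))$ can reach into $V(T_t)$ only through the tree edge $\{\parent(t), t\}$, because every other outgoing edge of $T_t$ has level at least $\level(\out(t))$. Consequently, a qualifying path $Y'$ whose endpoint $y_t$ lies outside $T_t$ must avoid $V(T_t)$ altogether, and a qualifying path $Z'$ whose endpoint $z_t$ lies inside $T_t$ must contain $\{\parent(t), t\}$ and then stay inside $T_t$ along the suffix $Z'[t, z_t]$. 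This ``outside''/``inside'' split is the key geometric input for (S3).

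For (S1) and (S2), my first attempt is to invoke Lemma~\ref{lma:levelbound} with $\theta = \rho(t)$ and $u \in \{y_t, z_t\}$ to obtain candidate shortest $\rho(t)$-alternating paths with $\sumlevel \leq 2\dist^{\rho(t)}(u) - 1$. When $\dist^{\rho(t)}(y_t) \leq \dist^{\rho(t)}(z_t)$, this gives $\sumlevel(Y^*) \leq 2\dist^{\rho(t)}(y_t) - 1 < \sumlevel(\out(t))$ at once, settling (S1); symmetrically, $\dist^{\rho(t)}(z_t) \leq \dist^{\rho(t)}(y_t)$ settles (S2). For the asymmetric residual case---say $\dist^{\rho(t)}(y_t) > \dist^{\rho(t)}(z_t)$ for (S1)---I would invoke Lemma~\ref{lma:Y6} and prove a symmetric mirror yielding a bound of the form $\dist^{\OL{\rho(t)}}(y_t) \leq \dist^{\rho(t)}(z_t) + 1$. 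Combined with the sum-level bound from Lemma~\ref{lma:levelbound}, this extra reachability inequality lets me control the max-level coordinate: even when $\sumlevel$ meets $\sumlevel(\out(t))$, the witness edge has $\maxlevel$ strictly below $\maxlevel(\out(t))$, giving the required level-strict-dominance.

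For (S3), given $Y'$ and $Z'$, I would verify four properties of $X' = Y' \circ \{y_t, z_t\} \circ \OL{Z'}[z_t, t]$. Simplicity follows from the inside/outside split ($V(Y') \cap V(T_t) = \emptyset$ while $V(\OL{Z'}[z_t, t]) \subseteq V(T_t)$, and $y_t \neq z_t$). Alternation at $y_t$ and $z_t$ follows from the parity relation: $\{y_t, z_t\}$ has $\rho$-value $\rho(t)$, whereas the last edges of $Y'$ and of $Z'$ each have $\rho$-value $\OL{\rho(t)}$ (since both $|Y'|$ and $|Z'|$ have parity $\rho(t)$). A short parity computation using $|Z'[f, t]| \equiv \gamma(t) \pmod 2$ (by Lemma~\ref{lma:outgoingedgeproperty}(S3), as $Z'[f, t]$ ends with $\{\parent(t), t\}$) gives $|X'| \equiv \OL{\gamma(t)} \pmod 2$, so $X'$ is $\OL{\gamma(t)}$-alternating. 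Finally, from $|Z'[f, t]| \geq \dist^{\gamma(t)}(t)$ I obtain $|X'| \leq \dist^{\rho(t)}(y_t) + 1 + \dist^{\rho(t)}(z_t) - \dist^{\gamma(t)}(t)$, which is at most $\dist^{\OL{\gamma(t)}}(t)$ by Lemma~\ref{lma:Y2}(S2); combined with the trivial lower bound $|X'| \geq \dist^{\OL{\gamma(t)}}(t)$, $X'$ is a shortest $\OL{\gamma(t)}$-alternating path. Since the argument uses only the characterizing properties of $Y'$ and $Z'$, the conclusion is independent of the specific choices made for (S1) and (S2).

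The main obstacle, I expect, is the asymmetric case of (S1)/(S2): the sum-level bound $2\dist^{\rho(t)}(u) - 1$ from Lemma~\ref{lma:levelbound} degrades precisely when $u$ is the far endpoint relative to the MOE, and closing that gap requires combining it with Lemma~\ref{lma:Y6} (or a mirror variant) to pin down the max-level coordinate via the MOE structure. The remainder---the outside/inside split and the concatenation into $X'$---is routine bookkeeping once (S1) and (S2) are in hand.
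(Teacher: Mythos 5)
Your outline of (S3) is essentially sound, and your ``inside/outside'' observation and the application of Lemma~\ref{lma:Y2}(S2) there match the paper's argument. However, there is a genuine gap in your plan for (S1) and (S2): the asymmetric case is not closable by the tools you cite, because the whole lemma is actually proved by induction on $\level(\out(t))$, and you never set up that induction.

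Concretely, suppose $\dist^{\rho(t)}(y_t) > \dist^{\rho(t)}(z_t)$ and you want (S1). Lemma~\ref{lma:levelbound} applied to $y_t$ yields a path with $\sumlevel \leq 2\dist^{\rho(t)}(y_t)-1$. Since both distances have parity $\rho(t)$ and differ, $\dist^{\rho(t)}(y_t) \geq \dist^{\rho(t)}(z_t)+2$, hence $2\dist^{\rho(t)}(y_t)-1 > \dist^{\rho(t)}(y_t)+\dist^{\rho(t)}(z_t)=\sumlevel(\out(t))$. So the sum-level bound from Lemma~\ref{lma:levelbound} does not ``meet'' $\sumlevel(\out(t))$ as you claim — it strictly exceeds it, and no amount of control over the $\maxlevel$ coordinate (the lexicographic tie-breaker) can rescue a path whose first coordinate is already too large. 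A mirror of Lemma~\ref{lma:Y6} only gives you $\gamma(y_t)=\OL{\rho(t)}$, i.e.\ that the parity you need at $y_t$ is the ``non-tree'' one; but then you still have to \emph{produce} a shortest $\OL{\gamma(y_t)}$-alternating path to $y_t$ of small level, which is exactly an instance of (S1)/(S2) at another vertex — you are going in circles. The paper resolves this by inducting on the MOE level: in the asymmetric case it shows either $\level(\out(z_t)) < \level(\out(t))$ (resp.\ $\level(\out(y_t)) < \level(\out(t))$) and invokes the induction hypothesis on that smaller-level vertex, or $\level(\out(z_t)) = \level(\out(t))$ and uses Lemma~\ref{lma:MOEparity} plus the already-established part of the current inductive step, or derives a contradiction with Lemma~\ref{lma:Y2}(S1). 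This case structure and the underlying induction are the actual content of the proof, and your sketch has no substitute for it.
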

\let\thelemma\temp
\addtocounter{lemma}{-1}

\begin{proof}
For $k \in \mathbb{N}^2$, let $\mathcal{V}_k$ be the set of the vertices $v$ such that $E^{\ast}(T_v)$ is not empty and 
$\level(\out(v)) = k$ holds. We show that the lemma holds for any $k$ and $t \in \mathcal{V}_k$ 
following the induction on
$k$. The base case of $k = (0, 0)$ is obvious (as $\mathcal{V}_k = \emptyset$), and thus we focus on the inductive step. The proof of the inductive step 
consists of the following three parts: 
\begin{itemize}
\item[(P1)] Supposing as the induction hypothesis that the lemma holds for any $t' \in \bigcup_{k' < k} \mathcal{V}_{k'}$, 
show the statement (S2) holds for any vertex $t \in \mathcal{V}_k$.
\item[(P2)] Supposing that the lemma holds for any $t' \in \bigcup_{k' < k} \mathcal{V}_{k'}$ as the induction hypothesis, and
that the statement (S2) holds for any $t \in \mathcal{V}_k$ by the consequence of (P1), show the statement $(S1)$ holds for any 
$t \in \mathcal{V}_k$.
\item[(P3)] For any $t \in \mathcal{V}_k$, (S1) and (S2) imply (S3).
\end{itemize}

\noindent
\textbf{The proof of (P1)}: The proof is divided into the following three cases:

\noindent
\textbf{(Case 1)} $\dist^{\rho(t)}(z_t) \leq \dist^{\rho(t)}(y_t)$: Then $\sumlevel(\out(t)) \geq 2\dist^{\rho(t)}(z_t)$ holds. 
By Lemma~\ref{lma:levelbound}, there exists a shortest $\rho(t)$-alternating path $P$ from $f$ to $z_t$ of sum-level 
at most $2\dist^{\rho(t)}(z_t)  - 1 < \sumlevel(\out(t))$, i.e., $P$ is the path satisfying the condition of (S2).

\noindent
\textbf{(Case 2)} $\dist^{\rho(t)}(z_t) > \dist^{\rho(t)}(y_t)$ and 
$\level(\out(z_t))<\level(\out(t))$: By the induction hypothesis, there exists a shortest $\OL{\gamma(z_t)}$-alternating path $X'$ from 
$f$ to $z_t$ whose level is bounded by $\level(\out(z_t)) < \level(\out(t))$.
Since $\dist^{\rho(t)}(z_t) > \dist^{\rho(t)}(y_t)$, by Lemma~\ref{lma:Y6}, we obtain $\dist^{\OL{\rho(t)}}(z_t) \leq \dist^{\rho(t)}(y_t) + 1 \leq \dist^{\rho(t)}(z_t)$.
It implies $\OL{\rho(t)} = \gamma(z_t)$ because of $\dist^{\OL{\rho(t)}}(z_t) \neq \dist^{\rho(t)}(z_t)$.
Then $X'$ is the path satisfying the condition of (S1).

\noindent
\textbf{(Case 3)} $\dist^{\rho(t)}(z_t) > \dist^{\rho(t)}(y_t)$ and $\level(\out(z_t)) \geq \level(\out(t))$: 
We show that this case never happens by leading to a contradiction. 
Let $t' = z_t$, $y_{t'} = y_{z_t}$, and $z_{t'} = z_{z_t}$ for short.
Since $\out(t)$ is an edge incident to $z_t$ which reaches the vertex $y_t$ at the outside of $T_t$, 
$\out(t)$ is also an outgoing edge of $T_{z_t}$, i.e., $\level(\out(z_t)) \leq \level(\out(t))$ holds. 
Combining with the assumption of $\level(\out(z_t)) \geq \level(\out(t))$, we obtain 
$\level(\out(z_t)) = \level(\out(t))$. In addition, 
by the rule of choosing canonical MOEs, the edges of level $\level(\out(z_t))$ incident to $z_t$ 
have higher priority. Hence $\out(z_t)$ is incident to $z_t$, i.e., $z_{t'} = t'$ holds. 
Since $\rho(t) = \rho(t')$ is obtained by Lemma~\ref{lma:MOEparity}, the assumption of 
$\dist^{\rho(t)}(z_t) > \dist^{\rho(t)}(y_t)$ implies $\dist^{\rho(t')}(z_{t'}) > \dist^{\rho(t')}(y_{t'})$. 
Then we obtain $\dist^{\OL{\rho(t')}}(z_{t'}) \leq \dist^{\rho(t')}(y_{t'}) + 1$ by Lemma~\ref{lma:Y6}.
Putting all the (in)equalities above together, we conclude $\dist^{\OL{\rho(t')}}(t') \leq \dist^{\rho(t')}(y_{t'}) + 1 \le \dist^{\rho(t')}(t')$. 
Then, the level of $\out(t')$ is bounded as follows:
\begin{align*}
\sumlevel(\out(t'))&= \dist^{\rho(t')}(y_{t'}) + \dist^{\rho(t')}(z_{t'})\\
&\geq \dist^{\OL{\rho(t')}}(t') -1  + \dist^{\rho(t')}(t') \\
&= \dist^{\Odd}(t') + \dist^{\Even}(t') -1, \\
\maxlevel (\out(t')) &= \max\{\dist^{\rho(t')}(y_{t'}), \dist^{\rho(t')}(z_{t'})\}\\
&= \dist^{\rho(t')}(t') \\
&> \dist^{\OL{\gamma(t')}}(t') - 1.
\end{align*}
That is, $\level(\out(t')) > (\dist^{\Odd}(t') + \dist^{\Even}(t') -1, \dist^{\OL{\gamma(t')}}(t') - 1)$. 
It contradicts Lemma~\ref{lma:Y2}(S1).

\noindent
\textbf{The proof of (P2)}: The proof is divided into the two cases below:

\noindent
\textbf{(Case 1)} $\level(\out(t)) \leq \level(\out(y_t))$:
Suppose that $y_t$ is an ancestor of $z_t$.
If $\rho(t) = \gamma(y_t)$, then a shortest $\rho(t)$-alternating path from $f$ to $y_t$ plus $\out(t)$ is a $\rho(t)$-alternating path from $f$ to $z_t$, say $P$.
By Lemma~\ref{lma:outgoingedgeproperty}(S4), we obtain $\dist^{\gamma(z_t)}(z_t) \le \dist^{\rho(t)}(z_t) \le |P| = \dist^{\gamma(y_t)}(y_t) + 1 \le \dist^{\gamma(z_t)}(z_t)$, which implies $\rho(t) = \gamma(z_t)$.
However, this implies that $\dist^{\gamma(y_t)}(y_t)$ and $\dist^{\gamma(z_t)}(z_t)$ have the same parity, so $\dist^{\gamma(y_t)}(y_t) \le \dist^{\gamma(z_t)}(z_t) - 2$ holds.
Thus we conclude $\dist^{\gamma(z_t)}(z_t) \le |P| = \dist^{\gamma(y_t)}(y_t) + 1 \le \dist^{\gamma(z_t)}(z_t) - 1$, a contradiction.

Otherwise, $\rho(t) = \OL{\gamma(y_t)}$.
By Lemma~\ref{lma:outgoingedgeproperty}(S4) again, we have
\begin{align*}
    \sumlevel(\out(t))
    &= \dist^{\rho(t)}(y_t) + \dist^{\rho(t)}(z_t)\\
    &\ge \dist^{\OL{\gamma(y_t)}}(y_t) + \dist^{\gamma(z_t)}(z_t)\\
    &> \dist^{\OL{\gamma(y_t)}}(y_t) + \dist^{\gamma(y_t)}(y_t),
\end{align*}
which implies that there exists an edge $e \in T^*(y_t)$ such that $\sumlevel(\out(t)) > \sumlevel(e)$ (since the union of shortest $\gamma(y_t)$- and $\OL{\gamma(y_t)}$-alternating paths from $f$ to $y_t$ must contain an edge in $T^*(y_t)$ of sum-level at most $\dist^{\OL{\gamma(y_t)}}(y_t) + \dist^{\gamma(y_t)}(y_t)$).
Hence $\level(\out(t)) > \level(\out(y_t))$, a contradiction.

Now we assume that $y_t$ is not an ancestor of $z_t$, which means that $\out(t)$ is also an outgoing edge of $T_{y_t}$.
Hence $\level(\out(t)) \geq \level(\out(y_t))$ necessarily holds and it suffices to consider the case of $\level(\out(t)) = \level(\out(y_t))$, i.e., $y_t \in \mathcal{V}_k$.
Since $\out(t)$ is the edge incident to $y_t$, $\out(y_t)$ must be incident to $y_t$ by the rule of choosing canonical MOEs.
By applying (S1) of this lemma (recall that the proof of (P2) assumes that (P1) holds even for $t \in \mathcal{V}_k$), we have a shortest $\rho(y_t)$-alternating 
path $Z'$ from $f$ to $y_t$ of level less than $\level(\out(y_t)) = \level(\out(t))$. By Lemma~\ref{lma:MOEparity}, 
$\level(\out(t)) = \level(\out(y_t))$ implies $\rho(t) = \rho(y_t)$. Hence $Z'$ is the one satisfying the condition of (S1).

\noindent
\textbf{(Case 2)} $\level(\out(t)) > \level(\out(y_t))$: In this case, $y_t \in \bigcup_{k' < k} \mathcal{V}_{k'}$ holds. 
By the induction hypothesis, we obtain a shortest $\rho(y_t)$-alternating path $Z'$ from $f$ to $z_{y_t}$ of level less than 
$\level(\out(y_t)) < \level(\out(t))$. Since $\level(Z') < \level(\out(y_t))$ holds, $Z'$ must contain $\{\parent(y_t), y_t\}$ to 
reach the inside of $T_{y_t}$, and contains no outgoing edge in $E^{\ast}(T_{y_t})$. Then $Z'[f, y_t]$ lies at the outside of $T_{y_t}$
and $Z'[y_t, z_{y_t}]$ lies at the inside of $T_{y_t}$. It implies that $Z'[f, y_t]$ is a shortest $\gamma(y_t)$-alternating path 
from $f$ to $y_t$ because if there exists a $\gamma(y_t)$-alternating path $Q$ from $f$ to $y_t$ shorter than $Z'[f, y_t]$, 
$Q \circ Z'[y_t, z_{y_t}]$ becomes a $\rho(y_t)$-alternating path shorter than $Z'$ (note that $Q$ lies at the outside of $T_{y_t}$
by Lemma~\ref{lma:outgoingedgeproperty}(S5) and thus $Q \circ Z'[y_t, z_{y_t}]$ necessarily becomes an alternating path).
In addition, by the induction hypothesis, 
we also has a shortest $\OL{\gamma(y_t)}$-alternating path $X'$ from $f$ to $y_t$ of level less than 
$\level(\out(y_t)) < \level(\out(t))$. Consequently, either $Z'[f, y_t]$ or $X'$ has parity $\rho(t)$, which is
the path satisfying the condition of (S1).

\noindent
\textbf{The proof of (P3)}: Suppose the existence of $Y'$ and $Z'$ stated in (S1) and (S2) for $t \in \mathcal{V}_k$. 
Since the levels of $Y'$ and $Z'$ are less than $\out(t)$, they respectively lie at the outside of $T_t$ and
at the inside of $T_t$. Hence the path $X' = Y' \circ \out(t) \circ \OL{Z'}[z_t, t]$ is an alternating path from $f$ to $t$. 
Since $Z'[f, t]$ is an alternating path of level less than $\level(\out(t))$, by Lemma~\ref{lma:outgoingedgeproperty}(S1) and (S4), 
$Z'[f, t]$ is a shortest $\gamma(t)$-alternating path from $f$ to $t$.
Hence $X'$ is a $\OL{\gamma(t)}$-alternating path, and we have $|\OL{Z'}[z_t, t]| \leq |Z'| - |Z[f, t]| = \dist^{\rho(t)}(z_t) - \dist^{\gamma(t)}(t)$. 
Also, $|Y'| = \dist^{\rho(t)}(y_t)$ holds. 
By Lemma~\ref{lma:Y2}(S2), $\dist^{\OL{\gamma(t)}}(t) \geq \dist^{\rho(t)}(z_t) + 
\dist^{\rho(t)}(y_t) + 1 - \dist^{\gamma(t)}(t)$ holds. 
Combining the three (in)equalities, we conclude 
$|X'| = |Y'| + 1 + |\OL{Z'}[z_t, t]| \leq \dist^{\OL{\gamma(t)}}(t)$.
\end{proof}

\subsection{Correctness of $\FUNC$}
\label{sec:extpath}

For the proof of Lemma~\ref{lma:augpathcorrectness}, we present two auxiliary lemmas.

\begin{lemma}
\label{lma:extsamelength}
For any $\theta$-extendable pair $(s, t)$, every $\theta$-extendable path from $s$ to $t$ has the same length.
\end{lemma}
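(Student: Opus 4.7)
The plan is to pin down the length of the prefix $P[f, s]$ of an arbitrary $\theta$-extension $P$ of $(s, t)$. Every $\theta$-extension has total length exactly $\dist^{\theta}(t)$, so once $|P[f, s]|$ is shown to be independent of the choice of $P$, the length of every $\theta$-extendable path $P[s, t]$ is automatically pinned down as well. My concrete target is the identity $|P[f, s]| = \dist^{\gamma(s)}(s)$ (the case $s = f$ is immediate, so I assume $s \neq f$ below).

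First I would establish the easy lower bound $|P[f, s]| \geq \dist^{\gamma(s)}(s)$. Since $\level(P[f, s]) \leq \level(P) < \level(\out(s))$, Lemma~\ref{lma:outgoingedgeproperty}(S1) forces $P[f, s]$ to contain $\{\parent(s), s\}$, which must in fact be its last edge because $s$ is visited only once. Lemma~\ref{lma:outgoingedgeproperty}(S3) then pins the parity of $P[f, s]$ to $\gamma(s)$, giving the claimed lower bound.

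The reverse inequality would come from a length-shortening exchange. The bound $\level(P) < \level(\out(s))$ also shows that $P$ uses no non-tree outgoing edge of $T_s$; combined with the fact that $\{\parent(s), s\}$ has already been consumed inside $P[f, s]$, this decomposes $P$ cleanly into a prefix $P[f, s]$ lying at the outside of $T_s$ and a suffix $P[s, t]$ lying at the inside of $T_s$. Now pick any shortest $\gamma(s)$-alternating path $R$ from $f$ to $s$; Lemma~\ref{lma:outgoingedgeproperty}(S5) applied at $s$ places $R$ at the outside of $T_s$, so $R$ and $P[s, t]$ share only $s$ and $R \circ P[s, t]$ is simple. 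Since $R$ and $P[f, s]$ both start at the unmatched vertex $f$ and share the length parity $\gamma(s)$, their last edges have the same matching status, so the alternation of the concatenation at $s$ agrees with that of $P$. If $|P[f, s]| > \dist^{\gamma(s)}(s) = |R|$ held, then $R \circ P[s, t]$ would be a $\theta$-alternating path from $f$ to $t$ of length strictly less than $\dist^{\theta}(t)$, a contradiction.

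I expect the only real subtlety to be the splicing step, where one must verify simultaneously that $R \circ P[s, t]$ is vertex-disjoint (apart from $s$) and properly alternating at $s$. Both checks reduce to tools already provided by Lemma~\ref{lma:outgoingedgeproperty}: the level constraint on $P$ supplies the inside/outside decomposition, while the interplay of (S3) and (S5) supplies the matching-status alignment of the last edges at the splice vertex $s$.
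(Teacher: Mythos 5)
Your proof is correct and reaches the same conclusion via a closely related but not identical route. Both you and the paper rely on the same inside/outside decomposition of a $\theta$-extension $P$ at $s$: the level bound $\level(P) < \level(\out(s))$ together with Lemma~\ref{lma:outgoingedgeproperty}(S1) forces the unique crossing into $T_s$ to be $\{\parent(s), s\}$, splitting $P$ into a prefix outside $T_s$ and a suffix inside $T_s$. The difference is the exchange partner. The paper takes \emph{two} extensions $P'$ and $Q'$, observes that both prefixes terminate with the same edge $\{\parent(s),s\}$ (so the splice is automatically alternating, with no parity argument required), and swaps a prefix with a suffix to contradict shortestness when the suffix lengths disagree. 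You instead compare one extension's prefix against an \emph{arbitrary} shortest $\gamma(s)$-alternating path $R$; this requires invoking (S3) to identify the prefix parity as $\gamma(s)$ and (S5) to place $R$ outside $T_s$, then a parity argument to check that the splice at $s$ alternates. Your version is marginally longer but buys a stronger, explicit conclusion: every $\theta$-extension has prefix length exactly $\dist^{\gamma(s)}(s)$, and hence every $\theta$-extendable path from $s$ to $t$ has length exactly $\dist^{\theta}(t) - \dist^{\gamma(s)}(s)$, which the paper does not spell out.
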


\begin{proof}
Let $P$ and $Q$ be two different $\theta$-extendable paths from $s$ to $t$, and $P'$ and $Q'$ be their $\theta$-extensions. 
Since the levels of $P'$ and $Q'$ are both less than $\out(s)$, they contain $\{\parent(s), s\}$ and 
do not contain any outgoing edge of $T_s$. It implies that $P'[f, \parent(s)]$ and $Q'[f, \parent(s)]$ do not contain 
any vertex in $T_s$, and $P'[s, t]$ and $Q'[s, t]$ consist only of the vertices in $T_s$. Suppose for contradiction that 
$P = P'[s, t]$ and $Q = Q'[s, t]$ have different lengths. Without loss of generality, we suppose 
$|P'[s, t]| < |Q'[s, t]|$. Since $|P'| = |Q'|$ holds, we have $|P'[f, \parent(s)]| > |Q'[f, \parent(s)]|$. 
It implies that the path $Q'[f, \parent(s)] \circ \{\parent(s), s\} \circ Q'[s, t]$ 
becomes a $\theta$-alternating path shorter than $P'$ and $Q'$. It is a contradiction.
\end{proof}

For each $(s, t, \theta)$ such that $(s, t)$ is $\theta$-extendable, pick a minimum-level $\theta$-extendable path $X_{s, t, \theta}$ from $s$ to $t$ (with any tie-breaking rule).
Let $\mathcal{Q}_i$ be the set of all tuples $(s, t, \theta)$ such that $|X_{s, t, \theta}| = i$. Note that 
$\mathcal{Q}_i$ to which $(s, t, \theta)$ belongs is not affected by the tie-breaking rule for the choice of $X_{s, t, \theta}$ because of Lemma~\ref{lma:extsamelength}. 

\begin{lemma}
\label{lma:orthodoxpathparent}
For any $i$ and $(s, t, \gamma(t)) \in \mathcal{Q}_i$, $\{\parent(t), t\} \in E(X_{s, t, \gamma(t)})$ holds.
\end{lemma}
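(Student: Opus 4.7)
The plan is a proof by contradiction, where I may assume $s \neq t$ (otherwise $X_{s,t,\gamma(t)}$ is empty and the statement is vacuous for the purposes it is applied). Write $X := X_{s, t, \gamma(t)}$ and let $P$ be a $\gamma(t)$-extension of $(s, t)$ with $P[s, t] = X$; assume that the final edge of $X$ is not $\{\parent(t), t\}$, and aim to build a $\gamma(t)$-extendable path from $s$ to $t$ of strictly smaller level than $X$.

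First I would observe that, since $P$ is a shortest $\gamma(t)$-alternating path from $f$ to $t$, Lemma~\ref{lma:outgoingedgeproperty}(S5) forces $P$ to lie outside $T_t$ except at $t$. Hence the final edge of $P$ (and of $X$) must be a non-tree edge $e = \{v, t\}$ with $v \notin V(T_t)$, i.e., $e \in E^{\ast}(T_t)$; in particular $E^{\ast}(T_t) \neq \emptyset$ and $\level(\out(t)) \le \level(e) \le \level(X) \le \level(P) < \level(\out(s))$. I would then invoke Lemma~\ref{lma:twopaths}(S2) to extract a shortest $\rho(t)$-alternating path $Z'$ from $f$ to $z_t$ with $\level(Z') < \level(\out(t))$. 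Applying Lemma~\ref{lma:outgoingedgeproperty}(S1) with ancestor $t$ of $z_t$ places $\{\parent(t), t\}$ on $Z'$, so $Z'[f, t]$ is well-defined, ends with $\{\parent(t), t\}$, and by Lemma~\ref{lma:outgoingedgeproperty}(S3) has parity $\gamma(t)$; trivially $|Z'[f, t]| \geq \dist^{\gamma(t)}(t)$.

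The main obstacle will be the reverse inequality $|Z'[f, t]| \leq \dist^{\gamma(t)}(t)$, which requires a carefully chosen comparison path. My plan is to take any shortest $\OL{\gamma(t)}$-alternating path $Q$ from $f$ to $\parent(t)$ (which avoids $t$ by Lemma~\ref{lma:outgoingedgeproperty}(S2)) and set $W := Q \circ \{\parent(t), t\}$, a shortest $\gamma(t)$-alternating path to $t$ ending with $\{\parent(t), t\}$. Because $W$ ends with exactly the same edge as $Z'[f, t]$, splicing $W \circ Z'[t, z_t]$ preserves alternation at $t$ (the first edge of $Z'[t, z_t]$, if any, has matching status opposite to $\{\parent(t), t\}$ by the alternation of $Z'$), and is therefore a $\rho(t)$-alternating path from $f$ to $z_t$ of length $\dist^{\gamma(t)}(t) + |Z'[t, z_t]|$. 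Comparing with the shortest such length $|Z'| = \dist^{\rho(t)}(z_t)$ rearranges to the desired bound. The technical care here is that an arbitrary shortest $\gamma(t)$-alternating path to $t$ could end with an edge of the wrong matching status to splice onto $Z'[t, z_t]$; the specific $W$ constructed through the tree edge precisely sidesteps this.

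With $|Z'[f, t]| = \dist^{\gamma(t)}(t)$, the path $Z'[f, t]$ is a shortest $\gamma(t)$-alternating path from $f$ to $t$ of level strictly below $\level(\out(s))$, and it contains $s$ (trivially if $s = f$, otherwise by Lemma~\ref{lma:outgoingedgeproperty}(S1) with ancestor $s$ of $t$). Hence $Z'[s, t]$ is a $\gamma(t)$-extendable path from $s$ to $t$ of level at most $\level(Z') < \level(\out(t)) \leq \level(X)$, contradicting the minimality of $\level(X)$. Therefore the final edge of $X$ must be $\{\parent(t), t\}$, as required.
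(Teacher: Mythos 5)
Your proposal is correct and follows essentially the same route as the paper's proof: derive $\level(\out(t)) < \level(\out(s))$ from the contradiction hypothesis, invoke Lemma~\ref{lma:twopaths}(S2) to obtain a low-level shortest $\rho(t)$-alternating path $Z'$ to $z_t$, and show that $Z'[s,t]$ is a $\gamma(t)$-extendable path of strictly smaller level than $X_{s,t,\gamma(t)}$, contradicting minimality. The one place you go beyond the paper is the splice-and-compare argument establishing $|Z'[f,t]| = \dist^{\gamma(t)}(t)$; the paper's proof asserts that $Z'[s,t]$ is a $\gamma(t)$-extendable path without explicitly checking that $Z'[f,t]$ is a \emph{shortest} $\gamma(t)$-alternating path (which is required for $Z'[f,t]$ to serve as a $\gamma(t)$-extension), so you have supplied a missing step, and the paper itself uses the same replacement idea later in the proof of Lemma~\ref{lma:token_time}. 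Two small remarks: the ``technical care'' you raise about matching status is actually moot, because $f$ is unmatched, so the parity of an alternating path from $f$ already determines the matching status of its final edge, and any shortest $\gamma(t)$-alternating path to $t$ would splice correctly; and you should state explicitly that $W$ and $Z'[t,z_t]$ meet only at $t$ (they lie at the outside and the inside of $T_t$, respectively, by Lemma~\ref{lma:outgoingedgeproperty}(S5) and the bound $\level(Z') < \level(\out(t))$), since the splice must yield a simple path.
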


\begin{proof}
If $E^{\ast}(T_t)$ is empty, $\{\parent(t), t\}$ is a bridge of $G$ separating $s$ and $t$, and thus $X_{s, t, \gamma(t)}$ necessarily contains 
$\{\parent(t), t\}$.
Suppose that $E^{\ast}(T_t) \neq \emptyset$ and for contradiction that $\{\parent(t), t\} \not\in E(X_{s, t, \gamma(t)})$.
Then $X_{s, t, \gamma(t)}$ must contain an edge in $E^{\ast}(T_t)$ 
to reach vertices in $T_t$. Then we have $\level(X_{s, t, \gamma(t)}) \geq \out(t)$. Since 
$\level(X_{s, t, \gamma(t)}) \leq \level(W) < \level(\out(s))$ by definition, where $W$ is any $\gamma(t)$-extension of $X_{s, t, \gamma(t)}$, it follows $\level(\out(t)) < \level(\out(s))$. Lemma~\ref{lma:twopaths}(S2) 
states that there exists an alternating path 
$Z'$ from $f$ to $z_t$ of level less than $\level(\out(t))$. Since $\level(Z') < \level(\out(t)) < \level(\out(s))$ holds and $s$ and $t$ are the ancestors of $z_t$, both $s$ and $t$ are contained in $Z'$ by Lemma~\ref{lma:outgoingedgeproperty}(S1).
By lemma~\ref{lma:outgoingedgeproperty}(S3), any alternating path from $f$ to $t$ terminating with edge $\{\parent(t), t\}$ is 
a $\gamma(t)$-alternating path. It follows that the subpath of $Z'$ from $s$ to $t$ is a 
$\gamma(t)$-extendable path of level less than $\level(\out(t)) \leq \level(X_{s, t, \gamma(t)})$, which contradicts the definition that 
$X_{s, t, \gamma(t)}$ is the minimum-level.
\end{proof}

Utilizing the lemmas above, we prove Lemma~\ref{lma:augpathcorrectness}.


\let\temp\thelemma
\renewcommand{\thelemma}{\ref*{lma:augpathcorrectness}}
\begin{lemma}[Repeat]
For any tuple $(s, t, \theta)$ such that $(s, t)$ is $\theta$-extendable, $\FUNC(s, t, \theta)$ outputs a minimum-level $\theta$-extendable path from $s$ to $t$.
\end{lemma}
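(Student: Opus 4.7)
I will prove the lemma by strong induction on $|X_{s,t,\theta}|$, the length of a minimum-level $\theta$-extendable path from $s$ to $t$ (well-defined by Lemma~\ref{lma:extsamelength}). The base case $s = t$ is immediate: the first branch of $\FUNC$ returns the empty path, which is the unique $\theta$-extendable path. The inductive step splits into the two nontrivial branches of $\FUNC$.

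\textbf{Case $\theta = \gamma(t)$.} Lemma~\ref{lma:orthodoxpathparent} ensures that $X_{s,t,\gamma(t)}$ terminates with $\{\parent(t), t\}$. Hence for any $\gamma(t)$-extension $W$ of $(s,t)$, the prefix $W[f,\parent(t)]$ is a shortest $\OL{\gamma(t)}$-alternating path from $f$ to $\parent(t)$, by Lemma~\ref{lma:outgoingedgeproperty}(S3) and the defining equation of alternating base trees. So $W[f,\parent(t)]$ is a $\OL{\gamma(t)}$-extension of $(s,\parent(t))$, giving $|X_{s,\parent(t),\OL{\gamma(t)}}| = |X_{s,t,\gamma(t)}| - 1$ and allowing me to apply the inductive hypothesis to $\FUNC(s,\parent(t),\OL{\gamma(t)})$. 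Let $Y$ be its output. Lemma~\ref{lma:outgoingedgeproperty}(S2) gives $t \notin V(Y)$, and since $\{\parent(t), t\}$ is a tree edge of level $(0,0)$, the concatenation $Y \circ \{\parent(t), t\}$ is a valid path with $\level(Y \circ \{\parent(t), t\}) = \level(Y) = \level(X_{s,t,\gamma(t)})$, the latter equality following from extendable-path level comparisons in both directions.

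\textbf{Case $\theta = \OL{\gamma(t)}$.} I first argue $\level(\out(t)) < \level(\out(s))$: any $\OL{\gamma(t)}$-extension $W$ of $(s,t)$ avoids $\{\parent(t), t\}$ by Lemma~\ref{lma:outgoingedgeproperty}(S3), so $W[s,t]$ must cross the boundary of $T_t$ via an outgoing edge of $T_t$, which forces $\level(\out(t)) \le \level(W) < \level(\out(s))$. Invoking Lemma~\ref{lma:twopaths} yields shortest $\rho(t)$-alternating paths $Y'$ from $f$ to $y_t$ and $Z'$ from $f$ to $z_t$, both of level strictly less than $\level(\out(t))$. Two applications of Lemma~\ref{lma:outgoingedgeproperty}(S1) --- with the ancestors $s$ and $t$ respectively --- force $s \in V(Y') \cap V(Z')$ and $t \in V(Z')$, and further confine $V(Y'[s,y_t]) \subseteq V(T_s) \setminus V(T_t)$ and $V(Z'[t,z_t]) \subseteq V(T_t)$. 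Thus $(s,y_t)$ and $(t,z_t)$ are $\rho(t)$-extendable, and the identity $|X_{s,t,\OL{\gamma(t)}}| = |Y'[s,y_t]| + 1 + |Z'[t,z_t]|$ --- obtained by exhibiting the $\OL{\gamma(t)}$-extendable path $Y'[s,y_t] \circ \out(t) \circ \OL{Z'}[z_t,t]$ via Lemma~\ref{lma:twopaths}(S3) and then invoking Lemma~\ref{lma:extsamelength} --- certifies that both recursive calls shrink the induction parameter. Let $Y$ and $Z$ denote the outputs of $\FUNC(s, y_t, \rho(t))$ and $\FUNC(t, z_t, \rho(t))$.

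The main technical obstacle is to verify that $Y \circ \{y_t, z_t\} \circ \OL{Z}$ is a minimum-level $\OL{\gamma(t)}$-extendable path from $s$ to $t$. Vertex-disjointness of $Y$ and $Z$ transfers from that of $Y'[s,y_t]$ and $Z'[t,z_t]$ via Lemma~\ref{lma:extsamelength}, which pins their vertex sets inside $V(T_s) \setminus V(T_t)$ and $V(T_t)$ respectively. Alternation at $y_t$ and $z_t$ follows because $Y$ and $Z$ are tails of $\rho(t)$-alternating paths from $f$ whose last edges can each be followed by $\out(t)$. For extendability, I splice $Y'[f,s]$ onto the front to form the candidate extension $W = Y'[f,s] \circ Y \circ \{y_t, z_t\} \circ \OL{Z}$; alternation at $s$ is secured by the observation that, since $f$ is unmatched, all alternating paths from $f$ of the same length share the same last-edge parity --- and $|Y'[f,s]|$ equals the prefix length of any $\rho(t)$-extension of $Y$ by Lemma~\ref{lma:extsamelength}. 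Finally, Lemma~\ref{lma:twopaths}(S3) together with the length identities $|Y| = |Y'[s,y_t]|$ and $|Z| = |Z'[t,z_t]|$ yields $|W| = \dist^{\OL{\gamma(t)}}(t)$ and $\level(W) = \level(\out(t)) < \level(\out(s))$; combining with the lower bound $\level(X_{s,t,\OL{\gamma(t)}}) \ge \level(\out(t))$ derived at the start of this case pins the output level at exactly $\level(\out(t))$, closing the induction.
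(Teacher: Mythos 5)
Your proof is correct and mirrors the paper's argument exactly: strong induction on the length of the minimum-level extendable path, with Lemma~\ref{lma:orthodoxpathparent} driving the $\theta=\gamma(t)$ branch and Lemma~\ref{lma:twopaths} driving the $\theta=\OL{\gamma(t)}$ branch, exactly as in the paper. One minor citation slip: the confinement of $V(Y)$ to $V(T_s)\setminus V(T_t)$ and of $V(Z)$ to $V(T_t)$ --- and hence their vertex-disjointness --- follows from the level bounds $\level(Y),\level(Z)<\level(\out(t))<\level(\out(s))$ together with Lemma~\ref{lma:outgoingedgeproperty}(S1), not from Lemma~\ref{lma:extsamelength}, which only asserts equality of lengths.
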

\let\thelemma\temp
\addtocounter{lemma}{-1}

\begin{proof}
We prove the lemma for any $i$ and $(s, t, \theta) \in \mathcal{Q}_i$ following the induction on $i$.
In the following argument, let $Y$ and $Z$ be the paths computed in $\FUNC(s, t, \theta)$ (see Algorithm~\ref{alg:func}), and $W$ be any $\theta$-extension of $X_{s, t, \theta}$. 

\noindent
\textbf{(Basis)} For $i = 0$, the algorithm returns the path of length zero, and thus the lemma obviously holds.

\noindent
\textbf{(Inductive step)} Suppose as the induction hypothesis that the lemma holds for any $i' < i$ and 
$(s', t', \theta') \in \mathcal{Q}_{i'}$, and consider a tuple 
$(s, t, \theta) \in \mathcal{Q}_i$. We divide the proof into two cases according to the value of $\theta$:

\noindent
\textbf{(Case 1)} $\theta = \gamma(t)$: 
First, we prove $(s, \parent(t), \OL{\theta}) \in \mathcal{Q}_{i-1}$.
By Lemma~\ref{lma:orthodoxpathparent}, $\parent(t)$ is on the path $X_{s, t, \theta}$, and thus on the path $W$. Then 
$W[f, \parent(t)]$ is a shortest $\OL{\theta}$-alternating path from $f$ to $\parent(t)$ of $\level(W) < \level(\out(s))$ 
because of the equality $|W[f, \parent(t)]| = |W| - 1 = \dist^{\theta}(t) - 1 = \dist^{\OL{\theta}}(\parent(t))$ derived 
from the definition of alternating base trees. It implies that $W[f, \parent(t)]$ is an $\OL{\theta}$-extension of 
$(s, \parent(t))$, i.e., $(s, \parent(t))$ is $\OL{\theta}$-extendable. In addition, $X_{s, t, \theta}[s, \parent(t)]$ is 
a $\OL{\theta}$-extendable path from $s$ to $\parent(t)$ of length $|X_{s, t, \theta}| - 1$,
i.e., $(s, \parent(t), \OL{\theta}) \in \mathcal{Q}_{i-1}$. 

By the induction hypothesis,  
$Y = \FUNC(s, \parent(t), \OL{\theta})$ is a minimum-level $\OL{\theta}$-extendable path from $s$ to $\parent(t)$. By Lemma~\ref{lma:outgoingedgeproperty}(S2), $Y$ does not contain $t$. 
Thus $P = Y \circ \{\parent(t), t\} = \FUNC(s, t, \theta)$ is an alternating path from $s$ to $t$. 
It is necessarily a minimum-level $\theta$-extendable path: We have already shown that 
$X_{s, t, \theta}[s, \parent(t)]$ is a $\OL{\theta}$-extendable path from $s$ to $\parent(t)$. 
Since $Y$ is the minimum-level, $\level(Y) \leq \level(X_{s, t, \theta}[s, \parent(t)])$ holds. 
The levels of $X_{s, t, \theta}$ and $P$ are respectively decided by the levels of their prefixes 
$X_{s, t, \theta}[s, \parent(t)]$ and $P[s, \parent(t)] = Y$ (recall that 
any edge in $T$ has level $(0, 0)$). Hence we can conclude $\level(X_{s,t,\theta}) = 
\level(X_{s, t, \theta}[s, \parent(t)]) \geq \level(Y) = \level(P)$. 
Since $\level(P) < \level(\out(s))$ holds, $P$ lies at the inside of $T_s$. Then 
$P' = W[f, s] \circ P$ becomes an alternating path of length $|W|$ because 
$|Y| = |W[s, \parent(t)]|$ holds by Lemma~\ref{lma:extsamelength}, i.e., $P$ admits a 
$\theta$-extension $P'$ and thus is a $\theta$-extendable path. 

\noindent
\textbf{(Case 2)} $\theta = \OL{\gamma(t)}$: By Lemma~\ref{lma:outgoingedgeproperty}(S3), $W$ does 
not contain $\{\parent(t), t\}$. Applying Lemma~\ref{lma:outgoingedgeproperty}(S1), we obtain 
$\level(\out(t)) \leq \level(W) < \level(\out(s))$. 
Let $X'$, $Y'$, and $Z'$ be the alternating paths defined in Lemma~\ref{lma:twopaths}. Two alternating paths 
$Y' \circ \{y_t, z_t\}$ and $Z'$ from $f$ to $z_t$ have levels at most $\level(\out(t)) < \level(\out(s))$. 
Since $s$ is an ancestor of $t$ and thus that of $z_t$, 
$s \in V(Y')$ holds by Lemma~\ref{lma:outgoingedgeproperty}(S1).
Then $Y'$ and $Z'$ are respectively 
$\rho(t)$-extensions of $(s, y_t)$ and $(t, z_t)$, and $Y'[s, y_t]$ and $Z'[t, z_t]$ are 
the corresponding $\rho(t)$-extendable paths. Since $X'$ is also a $\theta$-extension of $(s, t)$, 
$X'[s, t]$ is a $\theta$-extendable path from $s$ to $t$. 
It follows $|Y'[s, y_t]| + |Z'[t, z_t]| = |X_{s, y_t, \rho(t)}| + |X_{t, z_t, \rho(t)}|  < 
|X'[s, t]| = |X_{s, t, \theta}|$, where the first and last equalities come from 
Lemma~\ref{lma:extsamelength}. It implies $|Y'[s, y_t]| < |X_{s, t, \theta}|$ 
and $|Z'[t, z_t]| < |X_{s, t, \theta}|$, i.e., $(s, y_t), (t, z_t) \in \bigcup_{i' < i} 
\mathcal{Q}_{i'}$. By the induction hypothesis,  
$Y = \FUNC(s, y_t, \rho(t))$ and $Z = \FUNC(t, z_t, \rho(t))$ are respectively 
the minimum-level $\rho(t)$-alternating paths from $s$ to $y_t$ and from $t$ to $z_t$. 
Since $Y$ and $Z$ are the minimum-level, their levels are
respectively upper bounded by $\level(Y'[s, y_t])$ and $\level(Z'[t, z_t])$, and here, less than $\level(\out(t))$. Since we have already shown
$\level(\out(t)) < \level(\out(s))$, it also follows 
$\level(Y) < \level(\out(s))$ and $\level(Z) < \level(\out(s))$. 
Then $Y$ and $Z$ do not contain any edge in $E^{\ast}(T_s) \cup E^{\ast}(T_t)$, i.e., 
$Y$ lies at the inside of $T_s$ but the outside of $T_t$, and $Z$ lies at the inside of $T_t$. 
Consequently, $P = Y \circ \{y_t, z_t\} \circ \OL{Z}$ returned by the algorithm is an 
alternating path of level $\level(\out(t))$ lying at the inside of $T_s$. By 
Lemma~\ref{lma:extsamelength}, we have 
\begin{align*}
\dist^{\theta}(t) &= |X'| \\
& = |X'[f, s]| + |X'[s, y_t]| + 1 + |X'[z_t, t]|  \\
& = |X'[f, s]| + |Y| + |Z| + 1 \\
&= |X'[f, s]| + |P|.
\end{align*}
That is, $X'[f, s] \circ P$ is a shortest $\theta$-alternating path from $f$ to $t$.
The level of $X'[f, s] \circ P$ is bounded as follows:
\begin{align*}
\level(X'[f, s] \circ P) &= \max\{\level(X'[f, s]), \level(P)\} \\
&\leq 
\max\{\level(X'), \level(Y), \level(Z), \level(\{y_t, z_t\})\} \\
&= \level(\out(t)) \\
&< \level(\out(s)), 
\end{align*}
where we already see $\level(X') = \level(\{y_t, z_t\}) = \level(\out(t))$, $\level(Y) \leq \level(\out(t))$, and $\level(Z) < \level(\out(t))$.
It implies $X'[f, s] \circ P$ is a $\theta$-extension of 
$(s, t)$, and thus $P$ is a $\theta$-extendable path of level at most $\level(\out(t))$.
Since $W$ does not contain $\{\parent(t), t\}$ and $s$ is at the outside of $T_t$, 
$W[s, t] = X_{s, t, \theta}$ must contain at least one outgoing edge of $T_t$. Then 
$\level(X_{s, t, \theta}) \geq \level(\out(t))$ holds and thus the constructed path $P$ is 
of the minimum-level.
\end{proof}

\subsection{Running-Time Analysis of Distributed Implementation}



\let\temp\thelemma
\renewcommand{\thelemma}{\ref*{lma:token_time}}
\begin{lemma}[Repeat]
For any $\overline{\gamma(t)}$-extendable pair $(s,t)$, let $Z$ be a $\rho(t)$-extendable path of $(t,z_t)$ and $Y$ be a $\rho(t)$-extendable path of $(s,y_t)$. Then 
$|R_{t, z_t}| \leq |Z|$ and $|R_{t, z_t}| \leq |Y|$ hold.
\end{lemma}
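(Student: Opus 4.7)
The plan is to establish closed-form expressions for $|Z|$ and $|Y|$, bound $|R_{t,z_t}|$ by a tree-distance difference, and close with a small case split.

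First, iterating Lemma~\ref{lma:outgoingedgeproperty}(S4) along the descending tree path from $t$ to $z_t$ yields $|R_{t,z_t}| \le \dist^{\gamma(z_t)}(z_t) - \dist^{\gamma(t)}(t)$, since each parent-to-child step strictly increases $\dist^{\gamma(\cdot)}(\cdot)$.

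Second, I would pin down $|Z| = \dist^{\rho(t)}(z_t) - \dist^{\gamma(t)}(t)$ and $|Y| = \dist^{\rho(t)}(y_t) - \dist^{\gamma(s)}(s)$ exactly. Let $P_Z$ be a $\rho(t)$-extension underlying $Z$, so $\level(P_Z) < \level(\out(t))$. By Lemma~\ref{lma:outgoingedgeproperty}(S1) applied with ancestor $t$, $\{\parent(t),t\} \in E(P_Z)$, and by (S3) the prefix $P_Z[f,t]$ has parity $\gamma(t)$, hence $|P_Z[f,t]| \ge \dist^{\gamma(t)}(t)$. For the reverse bound, any shortest $\gamma(t)$-alternating path $Q_t$ from $f$ to $t$ lies at the outside of $T_t$ by (S5), whereas $P_Z[t,z_t]$ lies at the inside of $T_t$ (the level bound on $P_Z$ excludes edges of $E^\ast(T_t)$), so $Q_t \circ P_Z[t,z_t]$ is again a $\rho(t)$-alternating path from $f$ to $z_t$; minimality of $|P_Z|$ forces $|P_Z[f,t]| \le \dist^{\gamma(t)}(t)$. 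Hence the formula for $|Z|$ holds, and the identical argument with the pair $(s,y_t)$ in place of $(t,z_t)$ delivers the formula for $|Y|$.

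Third, the first inequality is immediate: since $\dist^{\gamma(z_t)}(z_t) \le \dist^{\rho(t)}(z_t)$, we get $|R_{t,z_t}| \le \dist^{\rho(t)}(z_t) - \dist^{\gamma(t)}(t) = |Z|$. For the second, I would split on whether $\dist^{\rho(t)}(y_t) \ge \dist^{\rho(t)}(z_t)$. In the $\ge$ case, combining $\dist^{\gamma(s)}(s) \le \dist^{\gamma(t)}(t)$ from (S4) with the two closed forms yields $|Y| \ge |Z| \ge |R_{t,z_t}|$. In the $<$ case, Lemma~\ref{lma:Y6} gives $\dist^{\OL{\rho(t)}}(z_t) \le \dist^{\rho(t)}(y_t)+1$, which in turn forces $\gamma(z_t) = \OL{\rho(t)}$ and $\dist^{\gamma(z_t)}(z_t) \le \dist^{\rho(t)}(y_t)+1$; I then need the \emph{strict} ancestry $\dist^{\gamma(s)}(s)+1 \le \dist^{\gamma(t)}(t)$, which is available because $\OL{\gamma(t)}$-extendability of $(s,t)$ combined with Lemma~\ref{lma:outgoingedgeproperty}(S1) and (S3) (any $\OL{\gamma(t)}$-extension avoids $\{\parent(t),t\}$ and hence has level at least $\level(\out(t))$) forces $\level(\out(t)) < \level(\out(s))$, ruling out $s = t$. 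The main obstacle is precisely this second case: without the $+1$ slack from strict ancestry the estimate is off by one, so the crux is extracting $s \ne t$ from the extendability hypothesis.
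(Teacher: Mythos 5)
Your proof is correct and follows the same skeleton as the paper's: bound $|R_{t,z_t}|$ by $\dist^{\gamma(z_t)}(z_t) - \dist^{\gamma(t)}(t)$ via iterated use of Lemma~\ref{lma:outgoingedgeproperty}(S4), then express $|Z| = \dist^{\rho(t)}(z_t) - \dist^{\gamma(t)}(t)$ and $|Y| = \dist^{\rho(t)}(y_t) - \dist^{\gamma(s)}(s)$ (the paper obtains these implicitly through the paths $Z'$ and $Y'$ supplied by Lemma~\ref{lma:twopaths}, whereas you establish them cleanly for an arbitrary extension). Your identification of strict ancestry $s \neq t$ via (S1)/(S3) is also the right move and matches what the paper needs tacitly when it invokes (S4). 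The one place you diverge is the second inequality: the paper gets $\dist^{\gamma(z_t)}(z_t) \leq \dist^{\rho(t)}(y_t) + 1$ unconditionally, by taking the specific $Y'$ from Lemma~\ref{lma:twopaths}(S1) — which lies at the outside of $T_t$ because $\level(Y') < \level(\out(t))$ — and concatenating $Y' \circ \{y_t,z_t\}$ to $z_t$; you instead split on $\dist^{\rho(t)}(y_t)$ versus $\dist^{\rho(t)}(z_t)$, handling one branch by reduction to the first inequality and the other via Lemma~\ref{lma:Y6}. Both routes are valid and of comparable difficulty; the paper's is slightly shorter since it dodges the case analysis, but your version leans less heavily on Lemma~\ref{lma:twopaths} in the final step.
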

\let\thelemma\temp
\addtocounter{lemma}{-1}

\begin{proof}
    By Lemma~\ref{lma:outgoingedgeproperty}(S4), $\dist^{\gamma(t)}(t) \geq \dist^{\gamma(\parent(t))}(\parent(t)) + 1$ holds
    for any $t \in V(G) \setminus \{f\}$. It implies
    $|R_{t, z_t}| \leq \dist^{\gamma(z_t)}(z_t) - \dist^{\gamma(t)}(t)$. Let $Z'$ be a  
    shortest $\rho(t)$-alternating path from $f$ to $z_t$ of level less than $\level(\out(t))$ by Lemma~\ref{lma:twopaths}. Then $|Z'| \geq 
    \dist^{\gamma(z_t)}(z_t)$ holds. Since $Z'[t, z_t]$ is a $\rho(t)$-extendable
    path, its length is equal to $|Z|$ by Lemma~\ref{lma:extsamelength}. Since $\level(Z') < \level(\out(t))$ holds,
    $Z'[f, t]$ must terminate with $\{\parent(t), t\}$ by Lemma~\ref{lma:outgoingedgeproperty}(S1), and thus 
    it is a $\gamma(t)$-alternating path by Lemma~\ref{lma:outgoingedgeproperty}(S3). Then 
    Lemma~\ref{lma:outgoingedgeproperty}(S5) implies $|Z'[f, t]| = \dist^{\gamma(t)}(t)$ because if it does not hold one can obtain a $\rho(t)$-alternating path from $f$ to $z_t$ shorter than $Z'$ by replacing the prefix $Z'[f, t]$ of $Z'$ with 
    the shortest $\gamma(t)$-alternating path from $f$ to $t$. Putting all the inequalities together,
    we obtain $|R_{t, z_t}| \leq \dist^{\gamma(z_t)}(z_t) - \dist^{\gamma(t)}(t) \leq |Z'| - |Z'[f, t]| = |Z'[t, z_t]| = |Z|$.

    Next, we consider the second inequality. Let $Y'$ be a shortest $\rho(t)$-alternating path from $f$ to $y_t$ of level 
    less than $\level(\out(t))$ by Lemma~\ref{lma:twopaths}. Since it lies at the outside of $T_t$, $Y' \circ \{y_t, z_t\}$ is
    an alternating path from $f$ to $z_t$. Then $\dist^{\gamma(z_t)}(z_{t}) \leq \dist^{\rho(t)}(y_t) + 1$ holds.
    Since $Y'[s, y_t]$ is a $\rho(t)$-extendable path, its length is equal to $|Y|$. Similarly, we obtain  
    $|Y'[f, s]| = \dist^{\gamma(s)}(s)$. Since $s$ is an ancestor of $t$,
    we obtain $\dist^{\gamma(s)}(s) < \dist^{\gamma(t)}(t)$ by Lemma~\ref{lma:outgoingedgeproperty}(S4). Putting all 
    the inequalities together, we have $|Y| = |Y'[s, y_t]| = |Y'| - |Y'[f, s]| = 
    \dist^{\rho(t)}(y_t) - \dist^{\gamma(s)}(s) > \dist^{\gamma(z_t)}(z_t) - 1 - \dist^{\gamma(t)}(t) \geq |R_{t, z_t}| - 1$.
    That is, $|Y| \geq |R_{t, z_t}|$ holds.
\end{proof}

The total running time is bounded as follows:


\let\temp\thelemma
\renewcommand{\thelemma}{\ref*{lma:exe_time}}
\begin{lemma}[Repeat]
Let $g(h)$ be the worst-case round complexity of $\FUNC(s,t,\theta)$ over all 
$(s, t, \theta)$ admitting a $\theta$-extendable path from $s$ to $t$ of length $h$.
For any $0 \leq h \leq \ell$, $g(h) \leq 2h$ holds.
\end{lemma}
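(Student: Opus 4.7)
The plan is to prove the bound by induction on $h$, mirroring the recursive structure of $\FUNC$. For the base case $h = 0$, the only instance is $s = t$, in which case the algorithm returns the empty path immediately in zero rounds.

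For the inductive step, fix a tuple $(s, t, \theta)$ admitting a $\theta$-extendable path of length $h \geq 1$, and assume $g(h') \leq 2h'$ for every $h' < h$. In the case $\theta = \gamma(t)$, the algorithm spends one round to forward the triggering message from $t$ to $\parent(t)$, and then recurses as $\FUNC(s, \parent(t), \OL{\theta})$. By the argument used in Case 1 of the proof of Lemma~\ref{lma:augpathcorrectness}, the pair $(s, \parent(t))$ admits an $\OL{\theta}$-extendable path of length exactly $h - 1$, so the induction hypothesis yields a round count bounded by $1 + g(h - 1) \leq 1 + 2(h - 1) \leq 2h$.

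In the case $\theta = \OL{\gamma(t)}$, the output is $Y \circ \{y_t, z_t\} \circ \OL{Z}$ with $|Y| + |Z| = h - 1$, and by Lemma~\ref{lma:augpathcorrectness} both $Y$ and $Z$ are $\rho(t)$-extendable paths. The trigger from $t$ travels along $R_{t, z_t}$ to reach $z_t$ in $|R_{t, z_t}|$ rounds and then $y_t$ after one additional round; the two recursive invocations $\FUNC(t, z_t, \rho(t))$ and $\FUNC(s, y_t, \rho(t))$ run in parallel afterwards. Hence the total round count is at most
\[
    \max\{\,|R_{t, z_t}| + g(|Z|),\; |R_{t, z_t}| + 1 + g(|Y|)\,\}.
\]
Using Lemma~\ref{lma:token_time} to bound the first term by $|Y| + g(|Z|)$ and the second by $|Z| + 1 + g(|Y|)$, and then invoking the induction hypothesis, we obtain $|Y| + g(|Z|) \leq |Y| + 2|Z| \leq 2(h - 1) \leq 2h$ and $|Z| + 1 + g(|Y|) \leq |Z| + 1 + 2|Y| \leq 2(h - 1) + 1 \leq 2h$, as required.

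The main obstacle is precisely this second case: a bookkeeping that only exploits one of the two inequalities supplied by Lemma~\ref{lma:token_time} would leave the routing cost $|R_{t, z_t}|$ stacked on top of the longer recursive branch and fail to close the induction. The essential point is that Lemma~\ref{lma:token_time} bounds $|R_{t, z_t}|$ simultaneously by $|Y|$ and by $|Z|$, so the routing delay can be charged against whichever branch finishes earlier, keeping the overall cost linear in the output length $h$.
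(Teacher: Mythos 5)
Your inductive argument is correct and mirrors the paper's proof in structure: both proceed by induction on $h$, handle the two cases $\theta = \gamma(t)$ and $\theta = \OL{\gamma(t)}$ identically, and in the second case use Lemma~\ref{lma:token_time} to charge the routing cost $|R_{t,z_t}|$ against one of the two recursive branches. Your accounting is in fact slightly tighter than the paper's — you track the exact finishing times $|R_{t,z_t}| + g(|Z|)$ and $|R_{t,z_t}| + 1 + g(|Y|)$ and apply the two inequalities of Lemma~\ref{lma:token_time} to whichever term needs them, whereas the paper uses the looser bound $2\max\{|Y|,|Z|\} + \min\{|Y|,|Z|\} + 1$ — but both yield $2h$.

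The one genuine gap is that you never justify that the parallel recursive invocations can actually be run concurrently at these speeds in the CONGEST model. The running time $g(h)$ is a distributed round complexity, so if the two parallel calls $\FUNC(s, y_t, \rho(t))$ and $\FUNC(t, z_t, \rho(t))$ — or any deeper nested pair — ever needed to send distinct messages over the same edge in the same direction in the same round, the schedule your induction assumes would be invalid. The paper's proof begins by ruling this out: it observes that the vertices active in a call $\FUNC(s, \cdot, \cdot)$ all lie in $V(T_s)$, and that for two invocations with roots $s$ and $s'$, the subtrees $T_s$ and $T_{s'}$ intersect only when $s, s'$ are in ancestor–descendant relation, which forces one call to be nested inside the other (hence not simultaneous). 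Without this no-congestion argument your bound on $g(h)$ is incomplete, even though the arithmetic of the induction is right.
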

\let\thelemma\temp
\addtocounter{lemma}{-1}

\begin{proof}
    We first see that any message transmitted in the algorithm does not suffer edge congestion,
    i.e., any two recursive invocations simultaneously executed do not simultaneously 
    send two different messages through the same link with the same direction. A simple inductive 
    argument shows that only vertices in $V(T_s)$ sends a message in the run of 
    $\FUNC(s, t, \theta)$. Given two recursive invocations $\FUNC(s, t, \theta)$ and 
    $\FUNC(s', t, \theta)$, $V(T_s)$ and $V(T_{s'})$ intersect if and only if $s$ and $s'$ have
    the ancestor-descendant relationship, which can occur only in the case of $\FUNC(s', t, \theta)$
    is called at the inside of $\FUNC(s, t, \theta)$ or vice versa. That is, they never simultaneously run, and thus any two invocations simultaneously running do not send two messages from the same vertex. 
    
    We prove the time bound claimed in the lemma following the observation above. Specifically, we bound
    the running time of $\FUNC(s, t, \theta)$ by $2h$ for any $(s, t, \theta)$ admitting a $\theta$-extendable path from $s$ to $t$ of length $h$.
    The proof is based on the induction on $h$.

    \noindent
    \textbf{(Basis)} The case of $h = 0$, the lemma obviously holds.

    \noindent 
    \textbf{(Inductive Step)} Suppose as the induction hypothesis that $g(h')\leq 2h'$ holds for any 
    $0 \leq h' < h$. We consider the following two cases:
    
    \noindent
    \textbf{(Case 1)} For $(s, t, \theta)$ such that $\theta=\gamma(t)$ holds: 
    By the correctness of $\FUNC$, $(s, \parent(t))$ admits a $\OL{\theta}$-extendable path of length less than $h$. The induction hypothesis bounds
    the running time of the recursive call $\FUNC(s, \parent(t), \OL{\theta})$ by $2(h - 1)$ rounds.
    Adding the cost of transmitting the triggering message (one round), we upper bound the running time
    of $\FUNC(s, t, \theta)$ by $2h$.
    
    \noindent
    \textbf{(Case 2)} For $(s, t, \theta)$ such that $\theta = \OL{\gamma(v)}$ holds:
    By the correctness of $\FUNC$, $(s, y_t)$ and $(t, z_t)$ admit $\rho(t)$-extendable paths $Y$
    and $Z$ respectively, which satisfy $|Y| + |Z| + 1 = h$. Then, by the induction hypothesis,
    $\FUNC(s,y_t,\rho(t))$ and $\FUNC(t,z_t, \rho(t))$ respectively run within 
    $2|Y|$ and $2|Z|$ rounds. Since they run in parallel, the running time for the recursive calls 
    is bounded by $2 \max\{|Y|, |Z|\}$. By Lemma~\ref{lma:token_time}, the length of 
    $R_{t,z_t} \circ \{y_t, z_t\}$, which is the cost for transmitting the triggering message, is at most 
    $\min\{|Y|+1, |Z|+1\}$. The total running time is at most $2 \max\{|Y|, |Z|\} + 
    \min\{|Y|+1, |Z|+1\} \leq 2(|Y| + |Z| + 1) = 2h$. 
\end{proof}

\section{Concluding Remarks}

In this paper, we presented a nearly linear-time exact algorithm for the maximum matching problem in
the CONGEST model. This algorithm substantially improves the current best upper bound. To 
construct our algorithm, we proved a new structural lemma of alternating base trees (Lemma~\ref{lma:twopaths}), which yields a very simple recursive construction of shortest augmenting 
paths in a fully-decentralized manner. We believe that this is a promising tool toward further 
improved algorithms. 

Finally, we close this paper with a short remark on designing further improved 
(i.e., sublinear-time) algorithms. A major open problem following our result is 
to develop an algorithm of finding an augmenting path of length $\ell$ within 
$O(\ell^{1 - \delta})$ rounds for a small constant $\delta > 0$. Since the iterative 
improvement of matchings can occur $\Theta(n)$ times, a straightforward application 
of such an algorithm does not yield any sublinear-time algorithm. 
Fortunately, this issue is resolved by the $(1 - \epsilon)$-approximate matching algorithm 
by Fischer, Slobodan, and Uitto running in $\mathrm{poly}(1/\epsilon, \log n)$ rounds: We first
run it with $\epsilon = n^{-\alpha}$ for a sufficiently small $\alpha > 0$ such that 
its running time becomes sublinear. Starting from the output of the algorithm, we apply 
the sequential iterative improvement by augmenting paths. Due to the guarantee of 
$(1 - \epsilon)$-approximation, the number of iterations is bounded by $O(n^{1 - \alpha})$. 
If one can utilize an 
$O(\ell^{1 - \delta})$-round augmenting path algorithm, the running time of whole iterative 
improvement process is sublinearly bounded. In summary, any $O(\ell^{1 - \delta})$-round 
algorithm of finding an augmenting path of length $\ell$ derives a sublinear-time CONGEST 
algorithm for the exact maximum matching problem.

\section*{Acknowledgments}
The first author was supported by JSPS KAKENHI Grant Numbers 23H04385, 22H03569, and 21H05854.
The second author was supported by JSPS KAKENHI Grant Numbers 22K21277 and 23K16838.
The third author was supported by JSPS KAKENHI Grant Numbers 20K19743 and 20H00605.

\nocite{*}


\appendix

\section{Proof Outline of Lemmas~\ref{lma:HK-framework} and \ref{lma:AK-framework}}
\label{appendix:frameworks}

\subsection{Lemma~\ref{lma:HK-framework}}
\label{sec:highlevel}

Let $\SAPath(M, \ell)$ be a CONGEST algorithm with the following property:
\begin{quote}
For any graph $G = (V, E)$ and matching $M \subseteq E$,
$\SAPath(M, \ell)$ finds a nonempty set of vertex-disjoint augmenting paths with high probability within $O(\ell)$ rounds if $(G, M)$ has 
an augmenting path of length at most $\ell$. Each vertex $u \in V(G)$ outputs the predecessor and successor of the output augmenting path 
to which $u$ belongs (if it exists).
\end{quote}
The construction of the algorithm $\SAPath$ is discussed in Section~\ref{sec:augpathconstruction}. 
The pseudocode of whole algorithm is presented in Algorithm~\ref{alg:mm}. It basically follows the standard idea of centralized maximum 
matching algorithms, i.e., finding an augmenting path and improving the current matching iteratively. In the $i$-th iteration, the algorithm $\SAPath(M, \ell)$ runs with $\ell=\lfloor2\AMmax/(\AMmax-i)\rfloor$. This setting comes from Theorem~\ref{thm:hk}. 
The improvement of the current matching by a given augmenting path is simply a local operation and is realized by flipping the labels of matching edges and non-matching edges on the path. The correctness and running time of Algorithm~\ref{alg:mm} are stated as follows:

\begin{lemma}
\label{lma:HK}
Assume that there exists an algorithm $\SAPath(M, \ell)$ specified above. Then
Algorithm~\ref{alg:mm} constructs a maximum matching with high probability in $\tilde{O}(\Mmax)$ rounds.
\end{lemma}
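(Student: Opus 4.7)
The plan is to establish two things for Algorithm~\ref{alg:mm}: (i) with high probability it terminates holding a maximum matching, and (ii) its total round complexity is $\tilde{O}(\Mmax)$. Correctness ultimately rests on Berge's theorem, so it suffices to show that on termination no augmenting path remains, while the round bound comes from summing the per-iteration cost $O(\ell_i)$ given by the specification of $\SAPath$.

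For correctness, the key is to maintain across iterations the invariant that at the beginning of the $i$-th iteration, $\Mmax - |M| \le \AMmax - i$. The base case $i=0$ follows because $\Mmax \le \AMmax$ by assumption on the $2$-approximate upper bound. For the inductive step, if the current matching is already maximum we are done; otherwise $|M| < \Mmax$, and Theorem~\ref{thm:hk} furnishes an augmenting path of length at most $\lfloor 2\Mmax/(\Mmax-|M|) \rfloor - 1$. Using $\Mmax \le \AMmax$ together with the inductive assumption $\Mmax - |M| \le \AMmax - i$, this length is bounded by $\lfloor 2\AMmax/(\AMmax-i) \rfloor = \ell_i$, so that the call $\SAPath(M,\ell_i)$ succeeds (w.h.p.) in finding a nonempty set of vertex-disjoint augmenting paths. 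Augmenting along all of them increases $|M|$ by at least one, restoring the invariant at the next iteration. After at most $\AMmax$ iterations the deficit is forced to $0$, and a union bound over the $O(\AMmax)$ random invocations of $\SAPath$ yields the high-probability guarantee (by choosing the internal error exponent of $\SAPath$ appropriately).

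For the running-time bound, the $i$-th iteration consists of one call to $\SAPath(M,\ell_i)$, costing $O(\ell_i)$ rounds by its specification, plus $O(\ell_i)$ local rounds for the trivial augmentation (flipping labels along the returned vertex-disjoint paths). Summing,
\[
\sum_{i=0}^{\AMmax - 1} O(\ell_i) \;=\; O\!\left(\AMmax \cdot \sum_{i=0}^{\AMmax - 1} \frac{1}{\AMmax - i}\right) \;=\; O(\AMmax \cdot H_{\AMmax}) \;=\; O(\AMmax \log \AMmax),
\]
which is $\tilde{O}(\Mmax)$ because $\AMmax = \Theta(\Mmax)$. The $O(\Mmax)$-round preprocessing of \cite{KI22} that furnishes $n$ and $\AMmax$ is additively absorbed.

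The main obstacle is the verification of the invariant $\Mmax - |M| \le \AMmax - i$: one must be careful that the parameter $\ell_i$, which is indexed by the iteration counter rather than by the current matching size, is still large enough to dominate the Hopcroft--Karp bound $\lfloor 2\Mmax/(\Mmax-|M|) \rfloor$; this is exactly where the combination $\Mmax \le \AMmax$ and deficit $\le \AMmax - i$ is needed, since the function $k \mapsto 2x/(x-k)$ is increasing in $k$ but decreasing in $x$. The remainder of the argument, including the probability union bound and the harmonic sum, is routine.
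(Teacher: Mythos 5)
Your overall plan (the invariant $\Mmax - |M| \le \AMmax - i$, harmonic-sum time bound, union bound) is the right idea and is essentially equivalent to the paper's phrasing via $s(\AMmax - \Mmax + i) \ge i$. But the proof of the inductive step has a genuine gap: the inequality
\[
\left\lfloor \frac{2\Mmax}{\Mmax - |M|} \right\rfloor - 1 \;\le\; \left\lfloor \frac{2\AMmax}{\AMmax - i} \right\rfloor = \ell_i
\]
does \emph{not} follow from $\Mmax \le \AMmax$ and $\Mmax - |M| \le \AMmax - i$. For a counterexample take $\Mmax = 10$, $\AMmax = 20$, $|M| = 5$, $i = 5$: both hypotheses hold ($5 \le 15$ and $10 \le 20$), yet the left side is $\lfloor 4 \rfloor - 1 = 3$ while $\ell_5 = \lfloor 40/15 \rfloor = 2$. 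The monotonicity remark at the end does not rescue this: raising $x$ from $\Mmax$ to $\AMmax$ lowers $2x/(x-k)$, but raising $k$ from $|M|$ to $i$ raises it, and the invariant does not pin down which effect dominates (indeed it allows $|M| < i$). So it is simply not true that $\SAPath(M,\ell_i)$ is guaranteed to find an augmenting path in every iteration with $|M| < \Mmax$, and consequently ``augmenting along all of them increases $|M|$'' is not something you can always invoke.

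The fix is the case split that the paper's proof implicitly uses. If $\Mmax - |M| < \AMmax - i$, the invariant for $i+1$, namely $\Mmax - |M| \le \AMmax - (i+1)$, holds automatically whether or not $\SAPath$ returns anything. If instead $\Mmax - |M| = \AMmax - i$, then $\ell_i = \lfloor 2\AMmax/(\Mmax - |M|) \rfloor \ge \lfloor 2\Mmax/(\Mmax - |M|) \rfloor$, which by Theorem~\ref{thm:hk} strictly exceeds the length of a shortest augmenting path, so $\SAPath$ does succeed w.h.p., $|M|$ grows by at least one, and the invariant is restored. Only in this equality case do you actually need (and get) progress; in the strict case progress is not guaranteed and also not needed. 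With that correction the rest of your argument (termination after $\AMmax$ iterations, the harmonic sum, the union bound) goes through.
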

\begin{proof}
The running time of the algorithm is bounded as follows (recall $\AMmax = O(\Mmax)$):
\begin{align*}
& O\left(\sum_{i=0}^{\AMmax - 1}\left(\left\lfloor \frac{2\AMmax}{\AMmax-i} \right\rfloor\right)\right) = 
O\left(\AMmax \sum_{i=1}^{\AMmax}\frac{1}{i} \right) = O(\AMmax \log \AMmax) = \tilde{O}(\Mmax).
\end{align*}

We consider the correctness of the algorithm.
Let $s(i)$ be the matching size at the beginning of the $i$-th iteration.
To prove the correctness, it suffices to show that $s(\AMmax - \Mmax + i + 1) > s(\AMmax - \Mmax + i)$ for any $i$ satisfying $s(\AMmax - \Mmax + i) = i < \Mmax$.
As $\AMmax \ge \Mmax$, this inductively implies that $s(\AMmax - \Mmax + i) \ge i$ for any $0 \le i \le \Mmax$, and hence $s(\AMmax) = \Mmax$.

Suppose that $s(\AMmax - \Mmax + i) = i < \Mmax$.
Then, in the $(\AMmax - \Mmax + i)$-th iteration, there exists an augmenting path of length at most 
$\lfloor 2\Mmax/(\Mmax - i) \rfloor \leq \lfloor 2\AMmax/(\Mmax - i) \rfloor = \lfloor 2\AMmax/(\AMmax - (\AMmax - \Mmax + i)) \rfloor$ by Theorem \ref{thm:hk}.
Thus the algorithm necessarily finds at least one augmenting path in this iteration, which implies $s(\AMmax - \Mmax + i + 1) > s(\AMmax - \Mmax + i)$.
\end{proof}

\begin{algorithm*}[t]
\caption{Constructing a maximum matching in $\tilde{O}(\Mmax)$ rounds.}
\label{alg:mm}
\begin{algorithmic}[1]
\FOR{$i=0;i < \AMmax;i++$}
\STATE run the algorithm $\SAPath(M,\ell)$ with $\ell = \lfloor2\AMmax/(\AMmax-i)\rfloor$ for $O(\ell)$ rounds.
\IF{$\SAPath(M,\ell)$ finds a nonempty set of vertex-disjoint augmenting paths within $O(\ell)$ rounds,}
\STATE improve the current matching using the set of vertex-disjoint augmenting paths.
\ENDIF
\ENDFOR
\end{algorithmic}
\end{algorithm*}

\subsection{Lemma~\ref{lma:AK-framework}}
\label{sec:verification}
Although the original algorithm is designed for the verification of maximum matching, it provides 
each vertex with information on the length of alternating paths to the closest unmatched vertex.  
The following theorem holds.

\begin{theorem}[Ahmadi and Kuhn~\cite{AK20}]
\label{theo:verification}
Assume that a graph $G = (V, E)$ and a matching $M \subseteq E$ are given, and let 
$W$ be the set of all unmatched vertices. There exist two $O(\ell)$-round randomized CONGEST algorithms $\MV(M, \ell, f)$ and $\PART(M, \ell)$ that output the following information at every vertex $v \in V(G)$
with a probability of at least $1-1/n^c$ for an arbitrarily large constant $c > 1$.
\begin{enumerate}
\item Given $M$, a nonnegative integer $\ell$, and a vertex $f \in W$, 
$\MV(M, \ell, f)$ outputs the pair $(\theta, \dist^{\theta}_G(f, v))$ at each vertex $v \in V(G)$
if $\dist^{\theta}_G(f, v)\leq \ell$ holds (if the condition is satisfied for both $\theta = \Odd$ and $\theta = \Even$, $v$ outputs two pairs). The algorithm $\MV(M, \ell, f)$ is initiated only by the vertex $f$ (with the value $\ell$), and other vertices do not require information on the ID of $f$ and value $\ell$ at the beginning of the algorithm.
\item The algorithm $\PART(M, \ell)$ outputs a family of disjoint subsets $V^1, V^2, \dots, V^{N}$ of $V(G)$ (as the label $i$ for each vertex in $V^i$) such that (a) the subgraph $G^i$ induced by
$V^i$ contains exactly two unmatched vertices $f^i$ and $g^i$ and an augmenting path between $f^i$ and $g^i$ of length at most $\ell$, (b) for every vertex $v \in V^i$, the length of the shortest alternating path from $f^i$ to $v$ is at most $\ell$ (and hence the diameter of $G^i$ is $O(\ell)$), and (c) every vertex in $V^i$ knows the ID of $f^i$.
\end{enumerate}
\end{theorem}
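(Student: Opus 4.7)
}
The plan for $\MV(M,\ell,f)$ is to simulate a parity-aware BFS from $f$ in $O(\ell)$ rounds. For each round $k$, I will maintain the invariant that every vertex $v$ with $\dist_G^{\theta}(f,v)\le k$ has learned the pair $(\theta,\dist_G^{\theta}(f,v))$ for at least one parity $\theta$, and, once both parities are relevant for $v$, it learns both. The update rule is the natural one: when a neighbor $u$ of $v$ has $\dist_G^{\theta}(f,u)=k-1$ and the edge $\{u,v\}$ has parity $\theta$ (i.e.\ $\rho(\{u,v\})=\theta$), then $v$ can claim $\dist_G^{\overline{\theta}}(f,v)\le k$. Because each edge only needs to carry an $O(\log n)$-bit label per round (current distance and parity), the per-round communication fits the CONGEST budget.

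The main obstacle, and the reason this needs to be randomized, is the lack of BFS-honesty: a propagated label could correspond to an alternating \emph{walk} that revisits a vertex inside an odd cycle rather than to a genuine alternating \emph{path}. I will handle this exactly as in Ahmadi--Kuhn: whenever the parity BFS attempts to extend through a matched edge that is already ``used up'' (because the partner endpoint has itself just been labeled in the same phase), the conflict is a certificate that a blossom has been discovered, and its base is identified by following the two tree-like predecessor pointers back to their least common ancestor. Each vertex inside such a blossom then inherits the missing parity label from the base in $O(1)$ rounds of local message exchange. The number of newly uncovered blossoms per phase is bounded, but several may overlap in a single edge; the randomization comes in by having each phase-initiating vertex choose a random $\Theta(\log n)$-bit token so that the routing of blossom-detection messages on any given edge avoids congestion with high probability. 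Iterating for $O(\ell)$ phases, every $v$ with $\dist_G^{\theta}(f,v)\le \ell$ is correctly labeled w.h.p.

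For $\PART(M,\ell)$, the plan is to run $\MV(M,\ell,f)$ from every unmatched vertex $f\in W$ in parallel, with priorities broken by the ID of the source. Each vertex $v$ adopts, as its tentative source, the unmatched vertex with smallest ID among those whose $\MV$ instance reaches $v$ within distance $\ell$. Because at most two distinct $\MV$ instances can ``collide'' at $v$ through a single augmenting path of length $\le \ell$, a vertex-disjoint grouping can be extracted: the unique pair $(f^i,g^i)$ of unmatched vertices joined by a shortest augmenting path is identified by having each matched $v$ forward the two smallest source IDs it has heard toward both endpoints, after which each $f^i$ decides its partner $g^i$ and announces it back to the claimed region. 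The bandwidth for running $|W|$ instances in parallel is controlled by observing that every vertex only needs to retain the $O(1)$ best (lowest-ID) sources per parity, so only $O(\log n)$ bits per edge per round are ever transmitted.

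The hardest step will be verifying property (a) of $\PART$: that each $V^i$ really contains an augmenting path of length $\le \ell$ between $f^i$ and $g^i$ and that the $V^i$'s are disjoint. This relies on the observation that if two sources $f^i$ and $f^j$ both reach a matched vertex $v$ within $\ell$ steps via alternating paths of opposite parities, then the concatenation through $v$ forms an augmenting walk of length $\le 2\ell$, which, by a standard shortcutting argument on alternating walks, contains an augmenting path; choosing $(f^i,g^i)$ to minimize this concatenated length ensures that the selected regions are vertex-disjoint. The remaining routine work is to propagate the chosen label $i$ back to every vertex in $V^i$ in $O(\ell)$ rounds, which is immediate from the distance information already computed by the parallel $\MV$ invocations.
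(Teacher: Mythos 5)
This statement is imported rather than proved: the paper credits it entirely to Ahmadi and Kuhn~\cite{AK20} and, in Appendix~\ref{sec:verification}, merely observes that the original DISC~2020 paper presents a single algorithm producing both outputs ``in a slightly weaker form,'' with the split into $\MV$ and $\PART$ deferred to~\cite{KI22}. So you are not competing against a proof in this paper; you are competing against an external result. That said, your sketch as written does not establish the claimed $O(\ell)$ round bound, and the gap is exactly the one the paper's introduction flags as the central obstacle.

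The problem is your blossom handling in $\MV$. You propose that when a conflict is detected, the blossom base ``is identified by following the two tree-like predecessor pointers back to their least common ancestor,'' and then ``each vertex inside such a blossom then inherits the missing parity label from the base in $O(1)$ rounds.'' Neither step is $O(1)$. Chasing predecessor pointers to an LCA costs rounds proportional to the BFS depth, i.e.\ $\Theta(\ell)$, and the cycle of a blossom can itself have length $\Theta(\ell)$, so broadcasting the complementary parity label around it is another $\Theta(\ell)$ rounds. Since a new blossom can be discovered at essentially every depth, the total cost is $\Theta(\ell^2)$. This is precisely the ``natural distributed implementation of Edmonds' approach'' that the introduction says ``incurs $\Theta(\ell^2)$ rounds'' and is why Ahmadi--Kuhn is a nontrivial result: their contribution is an algorithm that computes the alternating distance labels \emph{without} the pointer-chasing/LCA machinery, so that the whole label computation runs layer-by-layer in $O(\ell)$ rounds. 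Your proposal reconstructs the naive algorithm and asserts the improved bound without supplying the idea that actually makes it work. Relatedly, attributing the randomness purely to ``avoiding congestion'' misses where it is really used in~\cite{AK20}; and for $\PART$, the disjointness of the sets $V^i$, the claim that only two $\MV$ instances collide at any vertex, and the shortcut from an alternating \emph{walk} to an alternating \emph{path} (which, unlike the unrestricted case, is not a simple subwalk-deletion argument but the essence of Edmonds' blossom analysis) are all asserted rather than argued. In short, your outline is a plausible narrative of what such an algorithm should accomplish, but the technical core that justifies the $O(\ell)$ bound is missing, and that core is the actual content of the cited theorem.
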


While the original paper~\cite{AK20} only presents a single algorithm returning the outputs of both $\MV$ and $\PART$, in a slightly weaker form, it is very straightforward to extend the result to the algorithms above (see \cite{KI22} for the details of this extension). 
It is easy to transform any algorithm $\SAPath'$ of Lemma~\ref{lma:AK-framework} to the algorithm $\SAPath$ satisfying the specification
of Lemma~\ref{lma:HK-framework}. The transformation just follows (1) running $\PART(M, \ell)$ and 
(2) executing the algorithms $\MV(M, \ell, f)$ and $\SAPath'$ at each subgraph $G^i$ independently. 

\section{Precomputing Phase of Distributed Implementation}
\label{appendix:aggregate}

We recall the precomputed information required by $\FUNC$:

\begin{itemize}
    \item The parent and children of $u$ in $T$, and the value of $\gamma(u)$.
    \item The IDs of two endpoints of $\out(u)$ (if $\out(u)$ is not virtual).  
    \item Let $R_{v, z_v}$ be the shortest path from $v$ to $z_v$ in $T$. If $u \in V(R_{v, z_v})$, $u$ has the information for routing messages from 
    $v$ to $z_v$. More precisely, $u$ knows the pair $(z_v, u')$, where $u'$ is the successor of $u$ in 
    $R_{v, z_v}$.
\end{itemize}

The alternating base tree $T$ is constructed within $O(1)$ rounds by exchanging the information of 
$\dist^{\Odd}(\cdot)$ and 
$\dist^{\Even}(\cdot)$ between every neighboring vertex pair, because each vertex can determine 
its parent in $T$ from the exchanged information. We refer to the height of $T$ as $\ell'$.
Obviously $\ell' = O(\ell)$ holds.

The information of $\out(u)$ is obtained by the pipelined 
broadcast and aggregation over $T$: First, each node $u$ constructs the list of all its ancestors. 
This task is implemented by the pipelined (downward) broadcast in $T$, which takes $O(\ell)$ rounds.
Then the constructed list is exchanged through every non-tree edge $e = \{x, y\}$. More precisely, $x$ and $y$ 
exchange the information of the ancestor list, $\dist^{\Odd}(\cdot)$, and $\dist^{\Even}(\cdot)$. 
Since the list size is bounded by the height of $T$, i.e., $O(\ell)$, this information exchange also finishes within 
$O(\ell)$ rounds. Then $x$ and $y$ can compute the level of $e$, and the ID of the least common ancestor of 
$x$ and $y$ (denoted by $\lca(e)$). We define $F_i$ as the set of non-tree edges $e$ such that the depth of 
$\lca(e)$ in $T$ is $i$. Note that each node $u$ can decide if $e \in F_i \cap E^{\ast}(T_u)$ holds or not only from the information of $\lca(e)$ and its depth, because $e \in F_i \cap E^{\ast}(T_u)$ holds 
if and only if the depth of $\lca(e)$ is $i$ and $\lca(e)$ is an ancestor of $u$. 
For each $0 \leq i \leq \ell' - 1$, the algorithm informs each vertex $u$ 
of the minimum-level edge in $F_i \cap E^{\ast}(T_u)$ by a single-shot tree aggregation over $T$.
The behavior of each vertex $u$ in the $i$-th aggregation 
process is explained as follows:
\begin{itemize}
\item The vertex $u$ is a leaf: $u$ chooses the minimum-level edge in  
$I(u) \cap (F_i \cap E^{\ast}(T_u)) = F_i \cap  I(u)$. If $F_i \cap I(u)$ is empty, 
the dummy edge of level 
$(\infty, \infty)$ is chosen. If two or more minimum-level edges exist, an arbitrary one of
them is chosen.
\item The vertex $u$ is internal: Let $X$ be the set of edges received from its children. The vertex
$u$ chooses the minimum-level edge in $(X \cup I(u)) \cap (F_i \cap E^{\ast}(T_u))$. If two or more 
minimum-level edges exist, the edge incident to $u$ has a higher priority. For two edges with the same priority
(i.e., two edges incident to $u$, or two edges not incident to $u$), we use an arbitrary tie-breaking rule.
Let $\{v, w\}$ be the chosen edge, where $w$ is the vertex in $T_u$. If $w \neq u$ holds, the information of $\{v, w\}$ 
comes from a child vertex $u'$ of $u$, and thus $u$ stores the routing information $(w, u')$.
\end{itemize}

By the standard pipeline technique, one can finish the tasks for all $i$ in $O(\ell)$ rounds. 
Finally, each vertex $u$ computes its minimum outgoing edge as follows:
\begin{align*}
\out(u) &= \mathrm{minedge} \left( \bigcup_{i} F_i \cap E^{\ast}(T_u)\right) \\
        &= \mathrm{minedge} \left( \bigcup_{i} \mathrm{minedge}\left(F_i \cap E^{\ast}(T_u)\right)\right),
\end{align*}
where $\mathrm{minedge}(F')$ is the minimum-level edge in $F'$, and the tie-breaking rule respects the rule of choosing canonical MOEs in Section~\ref{sec:abtreeandlevel}.

%

%
%

%

\end{document}